\begin{document}


\title{Thermodynamic Bounds on Coherent Transport in Periodically Driven Conductors}


\author{Elina Potanina\textsuperscript{1}}
\author{Christian Flindt\textsuperscript{1}}
\author{Michael Moskalets\textsuperscript{2}}
\author{Kay Brandner\textsuperscript{1,3,4,5}}
\affiliation{\textsuperscript{1}Department of Applied Physics,
Aalto University,
00076 Aalto,
Finland\\
\textsuperscript{2}Department of Metal and Semiconductor Physics,
NTU “Kharkiv Polytechnic Institute”,
61002 Kharkiv,
Ukraine\\
\textsuperscript{3}Department of Physics, 
Keio University, 
3-14-1 Hiyoshi, Kohoku-ku, 
Yokohama 223-8522,
Japan\\
\textsuperscript{4}School of Physics and Astronomy,
University of Nottingham,
Nottingham NG7 2RD,
United Kingdom\\
\textsuperscript{5}Centre for the Mathematics and Theoretical Physics of Quantum Non-equilibrium Systems,
University of Nottingham,
Nottingham NG7 2RD, 
United Kingdom
}


\date{\today}

\begin{abstract}
Periodically driven coherent conductors provide a universal platform 
for the development of quantum transport devices. 
Here, we lay down a comprehensive theory to describe the 
thermodynamics of these systems. 
We first focus on moderate thermo-electrical biases and 
low driving frequencies. 
For this linear response regime, we establish generalized
Onsager-Casimir relations and an extended fluctuation-dissipation
theorem. 
Furthermore, we derive a family of thermodynamic bounds proving 
that any local matter or heat current puts a non-trivial lower limit 
on the overall dissipation rate of a coherent transport process. 
These bounds do not depend on system-specific parameters, are robust
against dephasing and involve only experimentally accessible 
quantities. 
They thus provide powerful tools to optimize the performance of 
mesoscopic devices and for thermodynamic inference, as we demonstrate
by working out three specific applications. 
We then show that physically transparent extensions of our 
bounds hold also for strong biases and high frequencies. 
These generalized bounds imply a thermodynamic uncertainty 
relation that fully accounts for quantum effects and periodic driving.
Moreover, they lead to a universal and operationally accessible bound
on entropy production that can be readily used for thermodynamic 
inference and device engineering far from equilibrium.
Connecting a broad variety of topics that range from thermodynamic 
geometry over thermodynamic uncertainty relations to quantum 
engineering, our work provides a unifying thermodynamic theory of
coherent transport that can be tested and utilized with current
technologies. 
\end{abstract}


\maketitle

\vbadness=10000
\hbadness=10000

\newtheorem{lemma}{Lemma}

\newcommand{\df}{\mathbf{V}}
\renewcommand{\a}{\alpha}
\renewcommand{\b}{\beta}
\renewcommand{\d}{\delta}
\newcommand{\g}{\gamma}
\renewcommand{\r}{\rho}
\newcommand{\w}{\omega}
\renewcommand{\l}{V}
\newcommand{\ve}{\varepsilon}
\newcommand{\ce}{\varepsilon}
\newcommand{\F}{F}
\newcommand{\X}{X}
\newcommand{\Z}{Z}
\newcommand{\G}{G}
\newcommand{\A}{\mathcal{A}}
\newcommand{\AS}{\mathcal{S}}
\newcommand{\T}{\mathcal{T}}
\newcommand{\TB}{\mathsf{T}_{\mathbf{B}}}
\newcommand{\TV}{\mathsf{T}_{\df}}
\newcommand{\hw}{\hbar\omega}
\newcommand{\BC}{\boldsymbol{\mathcal{B}}}
\newcommand{\RS}{\mathcal{R}}
\newcommand{\J}{\mathbf{J}}
\newcommand{\LL}{\mathbb{L}}
\newcommand{\FF}{\boldsymbol{\mathcal{F}}}
\renewcommand{\L}{\mathbf{L}}
\newcommand{\tp}{\mathsf{T}}
\newcommand{\En}{E\ix{n}}
\newcommand{\eq}{|_{{{\rm eq}}}}

\newcommand{\Eint}{\int_0^\infty\!\!\! dE}
\renewcommand{\tint}{\int_0^\T\!\!\! dt \;}
\newcommand{\nsum}{\sum\nolimits_n}
\newcommand{\bsum}{\sum\nolimits_\b}
\newcommand{\asum}{\sum\nolimits_\a}
\newcommand{\gsum}{\sum\nolimits_\g}
\newcommand{\absum}{\sum\nolimits_{\a\b}}
\newcommand{\xsum}{\sum\nolimits_x}
\newcommand{\ysum}{\sum\nolimits_y}
\newcommand{\xysum}{\sum\nolimits_{xy}}
\newcommand{\Asum}{\sum\nolimits_A}
\newcommand{\Bsum}{\sum\nolimits_B}
\newcommand{\ABsum}{\sum\nolimits_{AB}}
\newcommand{\nusum}{\sum\nolimits_\nu}

\newcommand{\ix}[1]{\raisebox{-0.7pt}{{{\scriptsize $#1$}}}}
\newcommand{\abs}[1]{\bigl| #1 \bigr|}
\newcommand{\tb}[1]{\textcolor{blue}{#1}}
\newcommand{\negphantom}[1]{\settowidth{\dimen0}{#1}\hspace*{-\dimen0}}

\section{Introduction}
Transport is a thermodynamic process, where gradients in intensive 
parameters such as chemical potential and temperature drive currents
of extensive quantities like matter and energy. 
In macroscopic systems at high temperatures, this phenomenon can be 
understood as a result of frequent collisions between classical 
particles, which lead to random but biased changes of their direction 
of motion. 
This mechanism is know as diffusive transport \cite{Plawsky2001}. 
Reducing the temperature of the system increases the mean free path 
that particles can travel between consecutive collisions. 
When this length scale becomes comparable to the dimensions of the
conductor, as occurs in nano-scale structures at millikelvin 
temperatures, coherent transport sets in \cite{Imry2002}. 
This regime is governed by the laws of quantum mechanics and can no
longer be describe in terms of collisions between particles with 
well-defined positions and momenta. 
 
Instead, coherent transport can be seen as arising from the unitary 
propagation of beams of carriers that are emitted and absorbed by 
distant thermal reservoirs and undergo elastic scattering within the
conductor. 
This approach goes back to the pioneering work of Landauer
\cite{Landauer1957a} and has since evolved into a standard theoretical
tool of mesoscopic physics \cite{Imry2002,Lesovik2014}. 
In particular, it has been extended to systems that are subject
to oscillating electromagnetic fields, where the scattering of beams 
is still coherent but no longer elastic, since carriers can exchange 
discrete amounts of energy with the driving fields
\cite{Moskalets2002,Moskalets2004,Moskalets2012,Brandner2020}.

On the experimental side, technological progress has made it possible
to realize and control periodically driven coherent conductors with a
high degree of precision.
Today, these systems provide us with a versatile platform to test the
basic principles of thermodynamics at small-length and energy scales 
and to develop new mesoscopic devices such as parametric quantum 
pumps, which can be used to realize dynamical single-electron sources
\cite{Feve2007a,Bocquillon2012,Crowell2013,Dubois2013,Jullien2014,
Ubbelohde2015,Kataoka2016,Takada2018a,Roussely2018,Johnson2018,
Fletcher2019}
as well as for metrological applications 
\cite{Giblin2012,Pashkin2013,Jehl2013}, 
and adiabatic quantum motors, which may provide motive power to future
nano-machines  \cite{Bode2011,Bustos-Marun2013,Arrachea2016,
Bruch2018a}.

These endeavors will require a powerful theoretical framework to 
describe the thermodynamics of coherent transport in the presence of
both thermo-chemical biases and periodic driving. 
A suitable starting point for such a theory is provided by Onsager's 
irreversible thermodynamics \cite{Onsager1931a,Onsager1931,
Callen1985}. 
The key idea of this approach is to describe irreversible processes in
terms of two types of variables: 
thermodynamic forces, which drive the process, and currents, which 
correspond to the system's response. 
This concept is universal in that it can be applied to macroscopic 
\cite{Onsager1931a,Onsager1931,Callen1985} and mesoscopic 
\cite{Sivan1986,Butcher1990} systems alike.  
Moreover, it can be consistently expanded to include periodic driving.
Specifically, in the context of coherent transport, this 
generalization can be achieved by introducing an additional force, 
which is proportional to the frequency of the applied fields, and an 
additional current, which corresponds to the flux of photons that is
absorbed by the carriers inside the conductor \cite{Ludovico2015b}.  

Here, we show that this framework can further be underpinned by 
rigorous generalizations of two cornerstone results of classical
irreversible thermodynamics: the Onsager-Casimir relations
\cite{Onsager1931a,Onsager1931,Casimir1945,Callen1985}, which 
explain the interdependence between linear-response coefficients as
a consequence of microscopic time-reversal symmetry, and the 
fluctuation-dissipation theorem, which connects these coefficients to
equilibrium current fluctuations \cite{Kubo1966,Kubo1998}. 
Focusing on moderate electric and thermal biases and slowly varying
driving fields, we then set out to derive our first key result,
the relations
\begin{equation}\label{Int}
\sigma\geq\frac{N}{4 K^{xx} (N-1)} (J^x_\a)^2,
\end{equation}
which bound the overall dissipation $\sigma$ caused by a coherent 
transport process in an $N$-terminal conductor in terms of any 
period averaged matter current $J^\r_\a$ or heat current $J^q_\a$. 
The coefficients $K^{\r\r}$ and $K^{qq}$ thereby depend only on the
equilibrium temperature and chemical potential of the conductor. 
These bounds are stronger than the second law, which only requires
$\sigma\geq 0$, and universal in that they do not involve any 
system-specific parameters. 

In the second part of this article, we extend our theory to systems
that are driven far away from equilibrium. 
This endeavor leads to our second key result, the relation
\begin{equation}\label{IInt}
\sigma\geq \sqrt{(P^{\r\r}_{\a\a}+2T_\a/h)^2+2\psi^\ast (J^\r_\a)^2}
	-(P^{\r\r}_{\a\a}+2T_\a/h),
\end{equation}
which makes it possible to bound the total dissipation rate $\sigma$
by measuring the average $J^\r_\a$ and the zero-frequency noise 
$P^{\r\r}_{\a\a}$ of a single matter current along with the 
temperature $T_\a$ of the corresponding reservoir; 
$h$ denotes Planck's constant and $\psi^\ast\simeq 8/9$ is a 
numerical factor. 
Quite remarkably, the bound \eqref{IInt} holds for any coherent 
multi-terminal conductor, arbitrary strong thermo-chemical biases and 
arbitrary fast periodic driving fields.

Covering both thermal and mechanical driving, the relations 
\eqref{Int} and \eqref{IInt} lead to non-trivial bounds on the 
figures of merit of cyclic nano-machines based on coherent conductors.
In this respect, they advance an active line of research, which has
so far mainly focused on steady-state devices and is driven
by two major motivations
\cite{Avron2001,Benenti2011,Brandner2013,Whitney2013,Brandner2013a,
Mazza2014,Whitney2014,Sothmann2014a,Brandner2015,Whitney2014a,
Samuelsson2017,Brandner2017b,Macieszczak2018,Luo2018}.
First, universal bounds on figures such as efficiency or power 
consumption make it possible to quantitatively compare and optimize
different theoretical models of mesoscopic devices. 
Second, such relations can be used in experiments to estimate
quantities that cannot be measured directly, a strategy known as
thermodynamic inference \cite{Seifert2018}. 
As we show by working out three specific applications, our theory 
provides powerful tools for both of these purposes. 
In particular, we provide a detailed analysis of parametric 
quantum pumps, for which we uncover a close connection 
between our approach and the concepts of thermodynamic geometry, a
framework that recently proved very useful for optimizing slowly 
driven quantum thermal machines \cite{Brandner2019c,Abiuso2020,
Miller2019a,Miller2020,Miller2020a}. 

We proceed as follows.
In the next section, we review the essentials of the scattering
approach to coherent transport in periodically driven conductors and
show how it can be furnished with a thermodynamic structure. 
We then work out the linear-response theory for this framework and 
lay down the corresponding generalizations of the Onsager-Casimir
relations and the fluctuation dissipation theorem in 
Sec.~\ref{SecFAR}.
In Sec.~\ref{SecTC}, we derive the key relation \eqref{Int} and 
discuss its range of validity. 
The purpose of Sec.~\ref{SecAdChrPump} is to demonstrate the 
versatile applicability of our linear-response results. 
To this end, we consider three different mesoscopic devices. 
As an introductory example, we analyze a simple model of a quantum 
generator, whereby we prove that our new bounds are tight. 
We then move on to parametric quantum pumps, for which we derive an
explicit optimization principle by connecting our theory to the 
framework of thermodynamic geometry. 
Furthermore, we show how our bounds can be used for thermodynamic
inference. 
Finally, we further illustrate this technique by applying it to 
adiabatic quantum motors. 
In the second major part of this article, 
Sec.~\ref{Sec_FFE}, we show how the bounds \eqref{Int} can be 
extended beyond the limits of the linear-response regime and
derive our second key relation \eqref{IInt}.
We then put these results in context with recent developments on
thermodynamic uncertainty relations. 
To this end, we work out a case study, which proves that our 
bounds on entropy production go significantly beyond earlier results. 
Finally, we discuss the implications of our theory for autonomous 
coherent conductors. 
We summarize our work in Sec.~\ref{SecConclusion}.

\section{Scattering Approach}\label{Sec_SA}

\begin{figure}
\includegraphics[scale=1.05]{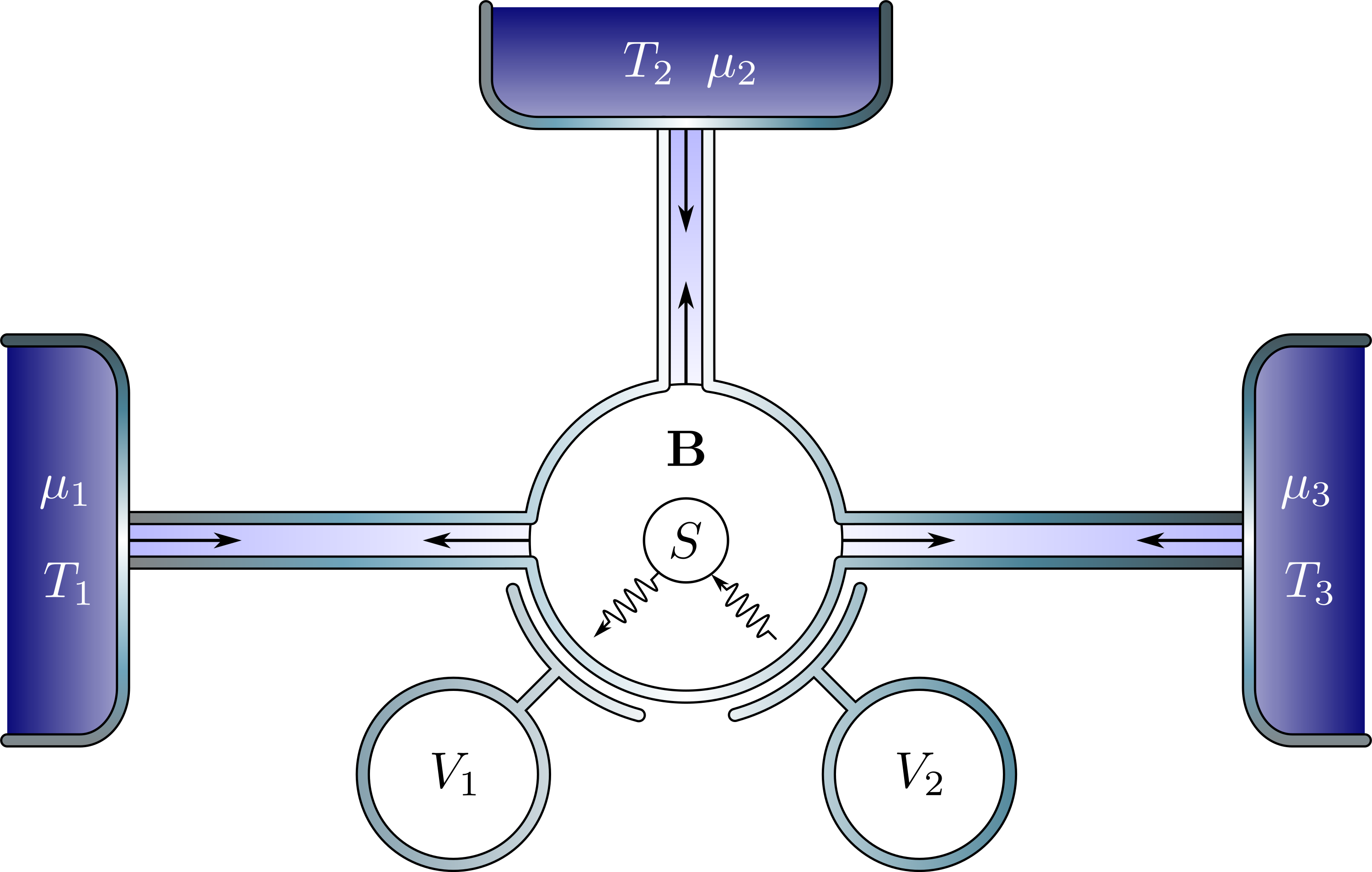}
\caption{Scattering approach to coherent transport. 
The figure shows a three-terminal conductor with two driving fields
as a generic example for a multi-terminal conductor. 
The system consists of a sample $S$, which is subject to a constant 
magnetic field $\mathbf{B}$, and three reservoirs with chemical
potentials $\mu_1,\mu_2,\mu_3$ and temperatures $T_1,T_2,T_3$.
Each reservoir injects a beam of carriers, which consists of a thermal
mixture of plane waves propagating towards the sample. 
At the sample, each incoming beam is scattered into three outgoing 
beams. 
The phase relation between incoming and outgoing carriers is 
preserved, while their energy can change due to the exchange of
photons with the periodic driving fields $V_1$ and $V_2$.
\label{Fig_Scatt}}
\end{figure}

Scattering theory provides an elegant tool to describe coherent 
transport in periodically driven mesoscopic systems. 
In this approach, the conductor is divided into a sample region, where
carriers may be subject to a periodically modulated potential and an
external magnetic field, and a set of ideal leads. 
Each lead is connected to a reservoir, which injects a continuous beam
of thermalized, non-interacting carriers. 
These beams propagates coherently through the system before being 
reabsorbed by the reservoirs see Fig.~\ref{Fig_Scatt}. 
The emerging matter and energy currents are thus determined by the 
inelastic scattering amplitudes of the driven sample and the chemical 
potentials and temperatures of the reservoirs. 
In the following, we provide a brief review of this framework and its
thermodynamic interpretation. 
Further details may be found in 
Refs.~\cite{Moskalets2002,Moskalets2004,Moskalets2012,Brandner2020}.

\subsection{Mean Currents and Fluctuations}
In the Heisenberg picture, the matter and energy currents that enter
the system through the terminal $\a$ correspond to time dependent 
operators $\hat{J}^{\rho}_{\a,t}$ and $\hat{J}^{\ce}_{\a,t}$. 
The mean values and fluctuations of these currents are given by the 
general expressions
\begin{subequations}
\begin{align}
\label{ST_CExpl}
J^u_\a & = \lim_{t\rightarrow\infty} \frac{1}{t}\int_0^t\!\!\! dt'
	\bigl\langle \hat{J}^u_{\a,t'}\bigr\rangle \quad\text{and}\\
\label{ST_NExpl}
P^{uv}_{\a\b} &= \lim_{t\rightarrow\infty}
	\frac{1}{t} \int_0^t\!\!\! dt' \! \int_0^t\!\!\! dt'' 
	\bigl\langle\bigl(\hat{J}^u_{\a,t'}-J^u_\a\bigr)
	\bigl(\hat{J}^v_{\b,t''}-J^v_\b\bigr)\bigr\rangle,
\end{align}
\end{subequations}
where $u,v=\r,\ce$ and angular brackets indicate the average over all
quantum states of the injected carriers. 
Since the carriers are non-interacting, this average can be evaluated 
by treating the incoming beams as ideal Fermionic quantum gases, which
leads to the generalized Landauer-B\"uttiker formula
\begin{align}
\label{TwMeanCurrents}
J^u_\alpha &= \frac{1}{h}\Eint\bsum\nsum
	\Bigl(\xi_E^u \d_{\a\b}\d_{n0}-\xi_{E\ix{n}}^u 
	\abs{S^{\alpha\beta}_{E\ix{n},E}}^2\Bigr)f^\b_E
\end{align}
for the mean currents with $\xi^\rho_E = 1$ and $\xi^\ce_E =E$. 
Here, we have introduced the shorthand notation $E_n \equiv E + n\hw$,
where $\hbar=h/2\pi$ denotes the reduced Planck constant and $\omega
\equiv 2\pi/\T$ the frequency of periodic driving fields acting on 
the sample. 
Thermodynamics enters Eq.~\eqref{TwMeanCurrents} via the Fermi 
functions
\begin{equation}\label{TwFermiF}
f^\alpha_E \equiv \frac{1}{1+\exp[(E-\mu_\alpha)/T_\alpha]},
\end{equation}
where $\mu_\a$ and $T_\a$ are the chemical potential and temperature 
of the reservoir $\a$ and Boltzmann's constant is set to $1$
throughout. 
The properties of the sample are encoded in the Floquet scattering 
amplitudes $S^{\alpha\beta}_{E\ix{n},E}$. 
These objects describe the transmission of an incoming carrier with
energy $E$ from the terminal $\beta$ to the terminal $\alpha$ under 
the absorption of $n$ photons with energy $\hbar\omega$.
Note that, for simplicity, we assume that each lead supports only one
transport channel. 
Furthermore, we use the convention that the photon-counting index runs
over all integers and that the Floquet scattering amplitudes are zero
if one of their energy arguments is negative. 

The current fluctuations \eqref{ST_NExpl} can be evaluated in the same
way as the mean currents. 
The resulting formula involves two contributions, $P^{uv}_{\a\b}= 
D^{uv}_{\a\b} + R^{uv}_{\a\b}$, which are given by
\begin{subequations}\label{TwMEN} 
\begin{align}
\label{TwMENTh}
D^{uv}_{\a\b} &= \frac{1}{h}\Eint \nsum\bigl(
	A^{uv,\a\b}_{n,E} + A^{vu,\b\a}_{n,E}+\delta_{\a\b}B^{uv,\a}_{n,E}
	\bigr),\\
\label{TwMENSh}
R^{uv}_{\a\b} &= \frac{1}{2h}\!\Eint \sum\nolimits_{\g\d}\nsum
	C^{u\a,\g\d}_{n,E}C^{v\b,\g\d\ast}_{n,E}
\end{align}
\end{subequations}
with
\begin{subequations}\label{TwMENDet}
\begin{align}
A^{uv,\a\b}_{n,E} &\equiv \xi^u_E\xi^v_E \d_{\a\b}\d_{n0}
	f'^\a_E-\xi^u_{\En}\xi^v_E \abs{S^{\a\b}_{\En,E}}^2f'^\b_E,\\
B^{uv,\a}_{n,E} &\equiv \gsum \xi^u_{\En}\xi^v_{\En} 
	\abs{S^{\a\g}_{\En,E}}^2(f'^\g_E -f'^\a_{\En}),\\
C^{u\a,\b\g}_{n,E} &\equiv\sum\nolimits_m \xi^u_{E\ix{m}}
		S^{\a\g}_{E\ix{m},\En}S^{\a\b\ast}_{E\ix{m},E}
		(f^\b_E-f^\g_{\En})
\end{align}
\end{subequations}
and $f'^\a_E \equiv f^\a_E(1-f^\a_E)$. 
The thermal, or Nyquist-Johnson, noise $D^{uv}_{\a\b}$ thereby arises 
from thermal fluctuations in the injected beams of carriers and 
vanishes in the zero-temperature limit.
By contrast, the shot noise $R^{uv}_{\a\b}$ stems from the 
probabilistic nature of carrier transmissions through the sample, and
therefore persists at zero temperature.

\subsection{Unitarity and Time-Reversal Symmetry}\label{SecSTUniTRS}
The Floquet scattering amplitudes generally depend on the structure of
the sample and the applied driving protocols $\df_t$, where $\df\equiv
\{V_j\}$ denotes the set of external control parameters.
Still, they obey two universal relations, which follow from 
fundamental principles. 
First, the unitarity conditions
\begin{subequations}\label{TwSumRulesSM} 
\begin{align}
\label{TwSumRulesSMl} 
& \asum\nsum S^{\a\b}_{E\ix{n},E}S^{\a\g\ast}_{\En,E\ix{m}}
	=\d_{\b\g}\d_{m0}\quad\text{and}\\[3pt]
\label{TwSumRulesSMr} 
& \asum\nsum S^{\b\a}_{E,\En}S^{\g\a\ast}_{E\ix{m},\En}
	=\d_{\b\g}\d_{m0}
\end{align}
\end{subequations}
ensure the conservation of probabilities in individual scattering 
events. 
Second, the invariance of Schr\"odinger's equation under time reversal
implies the symmetry
\begin{equation}\label{TwTimeReversal}
S^{\alpha\beta}_{E\ix{n},E} = \mathsf{T}_\mathbf{B}\mathsf{T}_{\df}
	S^{\beta\alpha}_{E,E\ix{n}},
\end{equation}
where the symbolic operators $\mathsf{T}_\mathbf{B}$ and 
$\mathsf{T}_{\df}$ indicate the reversal of external magnetic fields 
and driving protocols, respectively. 
Note that, while the unitarity conditions \eqref{TwSumRulesSM} apply 
to the scattering amplitudes of any given sample, the symmetry 
relation \eqref{TwTimeReversal} connects the scattering amplitudes of
two different systems that are related to each other by time reversal.

\subsection{Thermodynamics}
The thermodynamics of coherent transport can be developed from the
conservation laws
\begin{equation}\label{TwConsLaws}
\asum J^\rho_\a = 0 \quad\text{and}\quad
	\Pi_{{{\rm ac}}} + \asum J^\varepsilon_\a  = 0, 
\end{equation}
which can be easily verified using generalized Landauer-B\"uttiker
formula \eqref{TwMeanCurrents} and the unitarity conditions
\eqref{TwSumRulesSM}. 
They reflect the fact that neither matter nor energy are accumulated
in the sample over a full cycle. 
The quantity 
\begin{equation}\label{TwAcPower}
\Pi_{{{\rm ac}}}\equiv-\asum J^\varepsilon_\a = \frac{1}{\T}\Eint\absum\nsum
	n\abs{S^{\alpha\beta}_{E\ix{n},E}}^2 f^\beta_E
\end{equation}
thereby corresponds to the average mechanical power that is absorbed 
by the carriers from the driving fields. 

Combining the conservation laws \eqref{TwConsLaws}, leads to the 
first law of thermodynamics 
\begin{equation}\label{TwFirstLaw}
\Pi_{{{\rm ac}}} + \Pi_{{{\rm el}}} = -\asum J^q_\alpha. 
\end{equation} 
Here, the electrical power $\Pi_{{{\rm el}}}$, which is generated by
matter currents flowing in the direction of chemical potential 
gradients, and the heat currents entering the conductor from the 
reservoirs,  $J^q_\a$, are given by  
\begin{equation}\label{TwElPowerHeatCurr} 
\Pi_{{{\rm el}}}\equiv \asum \mu_\alpha J^\rho_\alpha
	\quad\text{and}\quad
	J^q_\a \equiv J^\ve_\a - \mu_\a J^\rho_\a. 
\end{equation} 
Since the transfer of carriers through the system is coherent, and 
thus reversible, dissipation occurs only in the reservoirs due to 
the in- and outflux of heat. 
Hence, the total rate of entropy production is given by  
\begin{align}\label{TwEntProd}
\sigma &\equiv -\asum J^q_\alpha/T_\alpha\\
	&= \frac{1}{h}\Eint \sum\nolimits_{\a\b}\nsum
	\left(\frac{\En-\mu_\a}{T_\a} - 
	\frac{E-\mu_\b}{T_\b}\right)
	\abs{S^{\a\b}_{\En,E}}^2 f^\b_E. 
	\nonumber
\end{align}
This expression is non-negative for any temperature and chemical 
potential profiles and any set of Floquet scattering amplitudes 
\cite{Brandner2020}. 
The scattering formalism is therefore consistent with the second
law of thermodynamics, which requires $\sigma\geq 0$.

\section{Linear Response}\label{SecFAR}
In irreversible thermodynamics, transport is described in terms of 
thermodynamic forces, or affinities, which correspond to gradients of
intensive variables, such as temperature, and currents of extensive 
quantities like energy \cite{Callen1985}. 
Every affinity forms a conjugate pair with a specific current such
that the products of these pairs add up to the total rate of entropy
production in the system.
Close to equilibrium, the currents become linear functions of the 
affinities with the corresponding response coefficients obeying two 
universal relations:
the Onsager-Casimir symmetry, which connects reciprocal coefficients
\cite{Onsager1931a,Onsager1931,Casimir1945}, and the 
fluctuation-dissipation theorem, which relates them to equilibrium 
current fluctuations \cite{Kubo1966,Kubo1998}.
These results are well-established for stationary coherent transport
in mesoscopic systems \cite{Sivan1986,Butcher1990,Buttiker1992}. 
In the following, we show how they can be extended to systems with 
periodic driving by further developing the approach that was
proposed in Ref.~\cite{Ludovico2015b}.

\subsection{Affinities}
The affinities for the matter and heat currents are given by the 
thermo-chemical gradients 
\begin{equation}\label{TwAffinities}
\F^\r_\a \equiv (\mu_\a-\mu)/T \quad\text{and}\quad
	\F^q_\a   \equiv 1/T - 1/T_\a,
\end{equation}
where $\mu$ and $T$ are the reference chemical potential and
temperature. 
Using these definitions and the conservation laws \eqref{TwConsLaws},
the total rate of entropy production \eqref{TwEntProd} can be 
written as 
\begin{equation}\label{TwEntProdAux}
\sigma = \Pi_{{{\rm ac}}}/T + \asum\xsum F^x_\a J^x_\a
\end{equation}
with $x=\rho,q$. 
Upon recalling the expression \eqref{TwAcPower} for the average 
mechanical power, this result suggests that we introduce the photon
current 
\begin{equation}\label{TwPhCurr}
J^\w \equiv \Pi_{{{\rm ac}}}/\hw = 
	\frac{1}{h}\Eint \sum\nolimits_{\a\b}\nsum n	
	\abs{S^{\a\b}_{\En,E}}^2 f^\b_E 
\end{equation}
and the corresponding affinity $F^\w\equiv \hw/T$ such that 
$\sigma$ assumes the canonical bilinear form 
\begin{equation}\label{TwSecondLawCan}
\sigma =  J^\w\F^\w
	 +\asum\xsum J^x_\a\F^x_\a = \Asum J_A F_A. 
\end{equation}
Here, we use capital roman letters to denote compound indices 
covering both thermo-chemical and mechanical quantities, i.e., 
$A=\{(x,\a)\},\w$.

Two remarks are in order. 
First, the interpretation of $J^\w$ as flux of photons derives from 
the fact that the carriers and the driving fields exchange only
discrete amounts of energy. 
This phenomenon, which, on the technical level, is a consequence of
the Floquet theorem \cite{Brandner2020}, is a manifestation of the
laws of quantum mechanics and has no counterpart in classical
mechanics \cite{Gritsev2017a}. 
Second, to achieve a thermodynamic unification of steady-state
and periodic driving, the driving frequency is treated as a
thermodynamic force in Eq.~\eqref{TwSecondLawCan}.
This approach, which was proposed in Ref.~\cite{Ludovico2015b},
is more suitable for coherent transport than using the amplitude of
the time-dependent fields as an effective affinity, a scheme that has
proved very useful for systems obeying stochastic dynamics
\cite{Brandner2015f,Proesmans2015a,Brandner2016,Proesmans2015,
Proesmans2016b}. 
In particular, as we show next, the frequency-based approach enables a
non-trivial linear-response theory, while a perturbation theory in the
driving strength only leads to a trivial decoupling of thermo-chemical 
currents and mechanical driving, see App.~\ref{Appx_KinCoeff_Wd} for 
details. 

\subsection{Kinetic Coefficients}\label{Sec_KinCoeff}
The kinetic coefficients that govern the relation between currents 
and affinities in the linear-response regime are defined as 
\begin{equation}\label{TwKinCoeffGen}
L_{AB} = \partial_{F_B}J_A\eq
\end{equation}
where the notation $\cdots\eq$ indicates the limit $F_A\rightarrow 0$.
To calculate the coefficients \eqref{TwKinCoeffGen}, we first observe
that the Fermi functions of the reservoirs are given by 
\begin{equation}\label{TwFermiFunctExp}
f^\a_E	\simeq f_E + \bigl(\F^\r_\a +(E-\mu)\F^q_\a\bigr)f'_E\\
\end{equation}
with $f_E\equiv f^\a_E\eq$ and $f'_E\equiv f'^\a_E\eq= f_E(1-f_E)$, up
to second-order corrections in the thermo-chemical affinities. 
Second, we recall that the Floquet scattering amplitudes admit the 
low-frequency expansion \cite{Moskalets2012,Thomas2012}
\begin{widetext}
\begin{equation}\label{TwAdApproxFSA}
S^{\a\b}_{E\ix{n},E\ix{m}}\simeq\frac{1}{\T} \tint \!
	\left(\AS^{\a\b}_{E,\df\ix{t}}	+\hw\frac{n+m}{2}
	\partial_E^{\phantom{\b}}\AS^{\a\b}_{E,\df\ix{t}}
	+\hw \A^{\a\b}_{E,t}\right)e^{i(n-m)\w t}. 
\end{equation}
\end{widetext}
Here, the frozen scattering amplitudes $\AS^{\a\b}_{E,\df}$ describe 
the transmission of carriers with energy $E$ at fixed parameters
$\df$. 
The correction term $\mathcal{A}^{\a\b}_{E,t}$ is required to ensure 
that the right-hand side of Eq.~\eqref{TwAdApproxFSA} obeys the 
unitarity condition \eqref{TwSumRulesSM}. 
In general, the approximation \eqref{TwAdApproxFSA} is 
applicable if the driving fields vary only slightly during the 
average dwell time $\tau_{{{\rm dw}}}$ of carriers inside the sample.
This time scale is connected to the typical energy range $\d_E$ over 
which the frozen scattering amplitudes change by the relation 
$\tau_{{{\rm dw}}}= \hbar/\d_E$, for details see
\cite{Texier2015,Moskalets2012,Gasparian1996}. 

Using Eqs.~\eqref{TwFermiFunctExp} and \eqref{TwAdApproxFSA}, the 
kinetic coefficients \eqref{TwKinCoeffGen} can be determined as 
\begin{subequations}\label{TwKinCoeffExp}
\begin{align}
\label{TwKinCoeffExpa}
L_{\a\b}^{xy} & = \frac{1}{h}\Eint \; \zeta^x_E\zeta^y_E
	\Bigl(\d_{\a\b}-
	\bigl\llangle\bigl|\AS^{\a\b}_{E,\df}\bigr|^2\bigr\rrangle\Bigr)
	f'_E ,\\
\label{TwKinCoeffExpb}
L_\a^{x\w} & = \frac{1}{h} \Eint\; \zeta^x_E\zeta^\w
	\bsum {{{\rm Im}}}\Bigl[\bigl\llangle
	\dot{\AS}_{E,\df}^{\a\b}\AS_{E,\df}^{\a\b\ast}\bigr\rrangle\Bigr]
	f'_E ,\\
\label{TwKinCoeffExpc}
L_\a^{\w x}  & = \frac{1}{h} \Eint\; \zeta^x_E\zeta^\w
	\bsum {{{\rm Im}}}\Bigl[\bigl\llangle
	\AS_{E,\df}^{\b\a}\dot{\AS}_{E,\df}^{\b\a\ast}\bigr\rrangle\Bigr]
	f'_E ,\\
\label{TwKinCoeffExpd}
L^{\w\w} & = \frac{1}{2h} \Eint\; (\zeta^\w)^2\absum \;
	\bigl\llangle\bigl|\dot{\AS}^{\a\b}_{E,\df}\bigr|^2\bigr\rrangle
	 f'_E
\end{align}
\end{subequations}
with $\zeta^\r_E\equiv 1,\;\zeta^q_E\equiv E-\mu,\;\zeta^\w\equiv
1/\w$, dots indicating time derivatives and double brackets denoting 
the time average over one period, $\llangle \cdots\rrangle \equiv 
\frac{1}{\T}\tint \cdots$, for details see 
App.~\ref{Appx_KinCoeff_Ad}.
Notably, the expressions \eqref{TwKinCoeffExp} do not depend on the 
corrections $\mathcal{A}^{\a\b}_{E,t}$ as a result of the unitarity 
condition \eqref{TwSumRulesSM}.
Instead, they involve only the frozen scattering amplitudes 
$\AS^{\a\b}_{E,\df}$, which are generally much easier to obtain than
the full Floquet scattering amplitudes. 
 
The linear response regime with respect to the affinities $F^\r_\a$, 
$F^q_\a$ and $F^\w$ is defined by the three conditions 
\begin{equation}
\F_\a^\r \ll \mu/T, \quad \F_\a^q \ll 1/T 
	\quad\text{and}\quad \F^\w\ll \d_E/T, 
\end{equation}
under which the currents obey the kinetic equations
\begin{equation}\label{TwARKinEq}
J_A = \Bsum L_{AB} F_B 
\end{equation}
This result extends the conventional framework of linear-irreversible
thermodynamics to periodically driven coherent conductors. 
We note that, at low temperatures, the function $f'_E$ in the
Eqs.~\eqref{TwKinCoeffExp} is sharply peaked around $\mu$. 
The transmission of carriers then occurs only at energies close to the
Fermi edge. 
It is therefore typically sufficient to require that the slow-driving 
condition $F^\w\ll\d_E/\hbar$ is obeyed at $E\simeq\mu$ for the kinetic
equations \eqref{TwARKinEq} to be valid. 

\subsection{Onsager-Casimir Relations}\label{Sec_OCRel}
The Onsager-Casimir, or reciprocal, relations between linear-response
coefficients follow from the symmetry of microscopic dynamics under 
time reversal. 
This principle enters stationary scattering theory through the 
property 
\begin{equation}\label{TwARTimeReversal}
\AS^{\a\b}_{E,\df} = \mathsf{T}_\mathbf{B}\AS^{\b\a}_{E,\df}
\end{equation}
of the frozen scattering amplitudes \cite{Brandner2020}, which, 
together with the formulas \eqref{TwKinCoeffExp}, implies the
relations
\begin{equation}\label{TwARReciprocity}
\TB L^{xy}_{\a\b} = L^{yx}_{\b\a}\quad\text{and}\quad
	\TB L^{\w x}_\a = - L^{x\w}_\a;
\end{equation}
recall Sec.~\eqref{SecSTUniTRS} for the definition of 
$\mathsf{T}_\mathbf{B}$ and $\mathsf{T}_{\df}$.
Hence, while the thermo-chemical coefficients obey the conventional
Onsager-Casimir symmetry, the cross-coefficients that couple either
to the mechanical affinity or the photon current are anti-symmetric.
The original symmetry can, however, be restored for all kinetic 
coefficients by reversing both magnetic fields and driving protocols. 
That is, we have the generalized Onsager-Casimir relations 
\footnote{
Note that
\begin{align*}
& \TV \bigl\llangle \abs{\AS^{\a\b}_{E,\df}}^2\bigr\rrangle
	=\bigl\llangle \abs{\AS^{\a\b}_{E,\df}}^2\bigr\rrangle 
	\quad\text{and}\\
& \TV \bigl\llangle \dot{\AS}^{\a\b}_{E,\df}
	\AS^{\a\b\ast}_{E,\df}\bigr\rrangle
	=-\bigl\llangle \dot{\AS}^{\a\b}_{E,\df}
	\AS^{\a\b\ast}_{E,\df}\bigr\rrangle,
\end{align*}
since $\TV \AS^{\a\b}_{E,\df\ix{t}} = \AS^{\a\b}_{E,\df_{-t}}$. 
}
\begin{equation}\label{TwARReciprocity2}
\TB\TV L_{AB} = L_{BA}
\end{equation}
Notably, this result implies that, if the driving protocols are 
symmetric, i.e., if $\df_t = \df_{-t}$, the mechanical and 
thermo-chemical currents and affinities decouple, since 
$L^{x\w}_\a= L^{\w x}_\a =0$. 

\subsection{Fluctuation-Dissipation Theorem}
The fluctuation-dissipation theorem provides a link between kinetic 
coefficients and equilibrium current fluctuations. 
For periodically driven coherent conductors, this connection can be 
established as follows.
First, we note that, with respect to the fluctuations of matter and 
energy currents, $P^{uv}_{\a\b}$, which are spelled out in 
Eqs.~\eqref{TwMEN} and \eqref{TwMENDet}, the joint fluctuations
of matter and heat currents are given by 
\begin{align}\label{LRFDT_CurrFluct} 
P^{xy}_{\a\b} = \sum\nolimits_{uv} c^{xu}_\a c^{yv}_\b 
	P^{uv}_{\a\b}
\end{align}
with $c^{\r u}_\a\equiv\d_{u\r}$ and $c^{q u}_\a\equiv\d_{u\ve}-\mu_\a
\d_{u\r}$ \cite{Brandner2020}. 
Next, we observe that the shot noise \eqref{TwMENSh} vanishes in 
equilibrium, $R^{uv}_{\a\b}|_{{{\rm eq}}}=0$. 
Therefore, we have 
\begin{equation}
P^{xy}_{\a\b}|_{{{\rm eq}}}= \sum\nolimits_{uv}(c^{xu}_\a c^{yv}_\b
	D^{uv}_{\a\b})|_{{{\rm eq}}} = L^{xy}_{\a\b} + L^{yx}_{\b\a}
\end{equation}
as can be easily verified by inspection. 
For systems without magnetic fields, we thus recover the standard 
result $P^{xy}_{\a\b}=2L^{xy}_{\a\b}$ by using the symmetry
\eqref{TwARReciprocity}. 

To derive a fluctuation-dissipation relation for the coefficients 
$L^{\w x}_\a$, $L^{x\w}_\a$ and $L^{\w\w}$, we have to consider the
fluctuations involving the photon current, which, due to energy
conservation, can be obtained from the sum rules
\begin{align}
P^{u\w}_\a &= -\bsum P^{u\ve}_{\a\b}/\hw
	\;\;\text{and}\;\;
P^{\w\w}   = \absum P^{\ve\ve}_{\a\b}/(\hw)^2.
\end{align}
Inserting the expressions \eqref{TwMEN} and \eqref{TwMENDet} for the 
thermal and the shot noise and using the unitarity conditions
\eqref{TwSumRulesSM} for the Floquet scattering
amplitudes yields the explicit results $P^{u\w}_\a = D^{u\w}_\a +
R^{u\w}_\a$ and $P^{\w\w}= D^{\w\w} + R^{\w\w}$ with 
\begin{subequations}
\begin{align}
D_\a^{u\w} & = \frac{1}{h}\Eint \bsum\nsum 
	\bigl( \xi^u_E A^{\w1,\b\a}_{n,E}
	-\xi^u_{\En} A^{\w1,\a\b}_{n,E}\bigr),\\
D^{\w\w} &=\frac{1}{h}\Eint \absum\nsum A^{\w2,\a\b}_{n,E},\\
R^{u\w}_\a &= \frac{1}{2h}\Eint \sum\nolimits_{\g\d}\nsum 
	C^{u\a,\g\d}_{n,E} C^{\w,\g\d\ast}_{n,E},\\
R^{\w\w} &= \frac{1}{2h}\Eint \sum\nolimits_{\g\d}\nsum
	C^{\w,\g\d}_{n,E}C^{\w,\g\d\ast}_{n,E}
\end{align}
\end{subequations}
and 
\begin{subequations}
\begin{align}
A^{\w k,\a\b}_{n,E} & \equiv n^k \abs{S^{\a\b}_{\En,E}}^2 f'^\b_E\\
\label{TwNMechAuxC}
C^{\w,\b\g}_{n,E} &\equiv\asum\sum\nolimits_m m S^{\a\g}_{E\ix{m},\En}
	S^{\a\b\ast}_{E\ix{m},E}(f^\b_E- f^\g_{\En}). 
\end{align}
\end{subequations}
The expression \eqref{TwNMechAuxC} shows that $R^{u\w}_\a\eq=0$. 
After switching from energy to heat currents, we are therefore left 
with 
\footnote{
Recall that 
\begin{align*}
&\nsum n \abs{S^{\a\b}_{\En,E}}^2 \rightarrow \zeta^\w {{{\rm Im}}}
	\Bigl[\bigl\llangle \AS^{\a\b}_{E,\df} \dot{\AS}^{\a\b\ast}_{E,\df}
	\bigr\rrangle\Bigr] \quad\text{and}\\
&\nsum n^2 \abs{S^{\a\b}_{\En,E}}^2 \rightarrow (\zeta^\w)^2 
	\bigl\llangle \dot{\AS}^{\a\b}_{E,\df} 
	\dot{\AS}^{\a\b\ast}_{E,\df}\bigr\rrangle
\end{align*}
in the limit $\w\rightarrow 0$.}
\begin{subequations}
\begin{align}
\label{TwFDTThChemMech}
P^{x\w}_\a|_{{{\rm eq}}} 
	& = \sum\nolimits_u (c^{xu}_\a D^{u\w}_\a)\eq= L^{x\w}_\a+L^{\w x}_\a
	\quad\text{and}\\
P^{\w\w}\eq & = D^{\w\w}\eq = 2 L^{\w\w}. 
\end{align}
\end{subequations}
Hence, quite remarkably, the mechanical kinetic coefficients and the 
equilibrium fluctuations of the photon current obey the same relations
as their thermo-chemical counterparts. 
This result is summarized by the extended fluctuation-dissipation 
theorem
\begin{equation}\label{TwFDT} 
P_{AB}\eq = L_{AB} + L_{BA}, 
\end{equation}
which completes our linear-response framework. 
Notably, it implies, together with the symmetries
\eqref{TwARReciprocity} and \eqref{TwARReciprocity2}, that 
$P^{x\w}_\a|_{{{\rm eq}}}=0$ for systems without a magnetic field or
with symmetric driving protocols.
That is, equilibrium correlations between the photon current and
the thermo-chemical currents are ultimately a result of broken 
time-reversal symmetry. 

\section{New Bounds}\label{SecTC}
According to the second law, the rate of entropy production $\sigma$,
which provides a measure for the thermodynamic cost of irreversible 
transport, cannot be negative. 
However, the laws of thermodynamics do not determine how much entropy
must be generated to sustain a given current as the following argument
shows. 
In linear response, the currents $J_A$ can be divided into an 
irreversible and a reversible contributions given by 
\cite{Brandner2013}
\begin{equation}
J_A^{{{\rm irr}}}\equiv \Bsum \frac{L_{AB}+L_{BA}}{2}F_B, 
	\quad
	J_A^{{{\rm rev}}}\equiv \Bsum \frac{L_{AB}-L_{BA}}{2}F_B.
\end{equation}
Using these variables, the rate of entropy production 
\eqref{TwSecondLawCan}, can be expressed as 
\begin{equation}\label{LRNB_EntProd}
\sigma = \Asum J_A^{{{\rm irr}}}F_A. 
\end{equation}
Hence, the reversible currents, which, due to the generalized 
Onsager-Casimir relation \eqref{TwARReciprocity2}, exist only in 
systems with broken time reversal symmetry, do not contribute to 
$\sigma$. 
As a result, transport without dissipation seems to be possible
in situations where $J_A^{{{\rm irr}}}=0$ and at least one reversible
current is finite \cite{Benenti2011,Dechant2018b}. 
This \emph{a priori} surprising observation prompts the question 
whether their might be stronger bounds on thermal currents than the
second law. 
In the following, we first derive such bounds for coherent transport
and then prove their robustness against dephasing.

\subsection{Coherent Transport}\label{SecThNB}

Our new bounds follow from the unitarity conditions for the frozen 
scattering amplitudes \cite{Moskalets2012}, 
\begin{equation}\label{TwARSumRulesSM}
\asum \AS^{\a\b}_{E,\df}\AS^{\a\g\ast}_{E,\df} 
	=\asum \AS^{\b\a}_{E,\df}\AS^{\g\a\ast}_{E,\df} = \d_{\b\g},
\end{equation}
which ensure probability conservation, and the sum rule
\begin{equation}\label{TwARSumRulesSM2}
\absum \bigl\llangle \dot{\AS}^{\a\b}_{E,\df}\AS^{\a\b\ast}_{E,\df}
	\bigr\rrangle =0,
\end{equation}
which plays the role of a gauge condition fixing the global phase of 
the frozen scattering amplitudes, see Lemma~\ref{Lem_3} of 
App.~\ref{App_Lemmas}. 
Together, they lead to the sum rules $\asum L^{xy}_{\a\b}=0$ and
$\asum L^{x\w}_\a=0$ for the kinetic coefficients 
\eqref{TwKinCoeffExp}, which imply the conservation laws 
\begin{equation}\label{LRNB_ConsLaws}
\asum J^\rho_\a = 0 \quad\text{and}\quad\asum J^q_\a=0. 
\end{equation}
Note that the conservation law for the heat currents is consistent 
with the first law \eqref{TwFirstLaw} in linear response, since the
electrical and the mechanical power, $\Pi_{{{\rm el}}}$ and 
$\Pi_{{{\rm ac}}}$, are of second order in the affinities. 

As a technical tool, we now define the quadratic form 
\begin{equation}\label{ThrNBQuadForm}
\Xi\equiv \sigma +\asum\xsum J^x_\a G^x_\a 
	+\asum\xysum K^{xy}G^x_\a G^y_\a,
\end{equation}
where the $G^x_\a$ are real but otherwise arbitrary variables and the
coefficients 
\begin{equation}\label{ThNBKCoeff}
K^{xy}\equiv \frac{1}{2h}\Eint\; \zeta^x_E\zeta^y_E f'_E
\end{equation}
have been chosen such that $\Xi$ is positive semi-definite. 
To verify this property, we expand the rate of entropy production 
$\sigma$ and the thermo-chemical currents $J^x_\a$ in the affinities
using Eq.~\eqref{TwSecondLawCan} and the kinetic equations 
\eqref{TwARKinEq}.
Upon inserting the expressions \eqref{TwKinCoeffExp} for the kinetic
coefficients, and applying the unitarity conditions 
\eqref{TwARSumRulesSM}, Eq.~\eqref{ThrNBQuadForm} can thus be
rewritten in the form
\begin{equation}
\Xi = \frac{1}{2h}\Eint\;\absum \bigl\llangle\abs{
	X^{\a\b}_E\AS^{\a\b}_{E,\df}- iX^\w\dot{\AS}^{\a\b}_{E,\df}}^2
	\bigr\rrangle f'_E 
\end{equation}
with $X^{\a\b}_E\equiv \xsum \zeta^x_E (F^x_\a - F^x_\b +G^x_\a)$ and 
$X^\w\equiv\zeta^w F^\w$, which proves that $\Xi$ cannot be negative, 
since $f'_E\geq 0$. 

This result leads to a whole family of bounds on $\sigma$, which can
be extracted as follows. 
We first set $G^q_\a=0$ and minimize the right-hand side of 
Eq.~\eqref{ThrNBQuadForm} with respect to the variables $G^\r_\a$. 
After repeating this step with the roles of $G^q_\a$ and $G^\r_\a$ 
interchanged, we arrive at the cumulative bound
\begin{equation}\label{ThNBCompoundBnd}
\sigma \geq \frac{1}{4 K^{xx}}\asum (J^x_\a)^2.
\end{equation} 
To obtain bounds that involve only a single current, we rewrite
Eq.~\eqref{ThNBCompoundBnd} as 
\begin{equation}
\sigma\geq \frac{(J^x_\a)^2}{4K^{xx}}
	+ M^x_\a \quad\text{with}\quad
	M^x_\a \equiv \frac{1}{4 K^{xx}}
	\sum\nolimits_{\b\neq\a} (J^x_\b)^2.
\end{equation}
Minimizing $M^x_\a$ with respect to the currents $J^x_\b$ while
taking into account the conservation laws \eqref{LRNB_ConsLaws} as a 
constraint yields $M^x_\a \geq (J^x_\a)^2/4K^{xx}(N-1)$ and thus 
\begin{equation}\label{ThNB}
\sigma \geq \frac{N}{4K^{xx}(N-1)} (J^x_\a)^2,
\end{equation}
where $N$ is the number of terminals of the conductor and the 
inequality holds for any pair of indices $x$ and $\a$. 

This relation constitutes a key result of this paper. 
Going beyond the second law, it shows that, regardless of the behavior
of the system under time reversal, any matter or heat current comes at
the price of a minimal rate of entropy production that is proportional
to the square of this current. 
The positive coefficients $K^{xx}$ thereby depend only on the equilibrium 
properties of the reservoirs. 
Specifically, they are given by 
\begin{subequations}\label{ThNBPropPara}
\begin{align}
\label{ThNBPropParaa}
K^{\r\r}&= \frac{T\varphi}{2h(1+\varphi)}\leq \frac{T}{2h}
	\quad\text{and}\\
K^{qq} &= 
	\frac{\pi^2 T^3}{6h}
	\begin{aligned}[t]
	&-\frac{T^3(\ln[\varphi])^2}{2h(1+\varphi)}
	-\frac{T^3\ln[\varphi]\ln[1+1/\varphi]}{h}\\
	&+\frac{T^3{{{\rm Li}}}_2[-1/\varphi]}{h} 
	\leq \frac{\pi^2 T^3}{6h},
	\end{aligned}
\end{align}
\end{subequations}
where ${{{\rm Li}}}_2$ denotes the dilogarithm, $\varphi\equiv
\exp[\mu/T]$ the equilibrium fugacity of the reservoirs and the 
inequalities are saturated in the limit $\varphi\rightarrow\infty$, 
which is practically realized in mesoscopic conductors. 

\subsection{Dephasing}
Coherent transport is characterized by a fixed phase relation between
incoming and outgoing carriers. 
Under realistic conditions, however, phase-breaking mechanisms such as
carrier-carrier or carrier-phonon interactions can hardly be
completely suppressed. 
Probe terminals provide an elegant way to account for such effects
\cite{Buttiker1988}. 
In this approach, virtual reservoirs are attached to the conductor, 
whose temperature and chemical potential are adjusted such that they
do not exchange matter or heat with the remaining system on average, 
but rather act as a source of dephasing. 

On the technical level, the virtual reservoirs differ from physical 
ones only in that their affinities are fixed by the conditions of zero
mean currents. 
The bounds \eqref{ThNBCompoundBnd}, however, were derived without any
assumptions on the affinities. 
They therefore apply also to systems with arbitrary many probe 
terminals, where only the currents between physical reservoirs 
contribute to the sum on the right; 
those into the virtual reservoirs are zero by construction. 
Since the conservation laws \eqref{LRNB_ConsLaws} are likewise not 
affected by the probe terminals, also the bounds \eqref{ThNB} remain 
valid with $N$ referring to the number of real terminals. 
Hence, our new bounds are robust against dephasing and hold even in
the limit of fully incoherent transmission \cite{Buttiker1988}.

\section{Applications}\label{SecAdChrPump}
In this section we discuss three different quantum devices to explore
the practical implications of our new bounds on coherent transport and
their potential as tools of thermodynamic inference. 
As a first example, we consider a basic model of a magnetic-flux 
driven quantum generator, which was proposed in Ref.~\cite{Avron2001}.
This case study serves as a simple illustration of our general theory
and shows that our bounds are tight. 
We then move on to parametric quantum pumps, which make it possible to
move a well-defined amount of carriers from one reservoir to another 
in a given cycle time.
Such devices can be realized for instance with tunable-barrier quantum
dots \cite{Rossi2014,Yamahata2016}, and, owing to their high accuracy,
are promising candidates for experimentally accessible quantum 
representations of the ampere \cite{Giblin2012,Pashkin2013,Jehl2013}.
Here, we show that, in the slow-driving regime, the energy that is
required to move a given amount of carriers is subject to a 
fundamental lower bound, which depends only on the cycle time. 
We further derive an explicit optimization principle for adiabatic
quantum pumps by connecting our theory with the geometric approach to 
parametric pumping \cite{Brouwer1998,Aleiner1998,Shutenko2000,
Avron2000,Levinson2001,Sinitsyn2007,Potanina2019} and the notion of
thermodynamic length
\cite{Weinhold1975,Salamon1983,Salamon1985,Andresen1988,Crooks2007}. 
As a third application of our theory, we derive a universal trade-off
relation between the efficiency and the power consumption of adiabatic
quantum motors, that is, devices that convert an electric current into
motive power of mesoscopic mechanical objects like nano-paddle wheels
or conveyor belts \cite{Bode2011,Bustos-Marun2013,Arrachea2016,
Bruch2018a}.

\subsection{Quantum Generator}\label{SecAdChrPumpTM}

\begin{figure}
\includegraphics[scale=1.05]{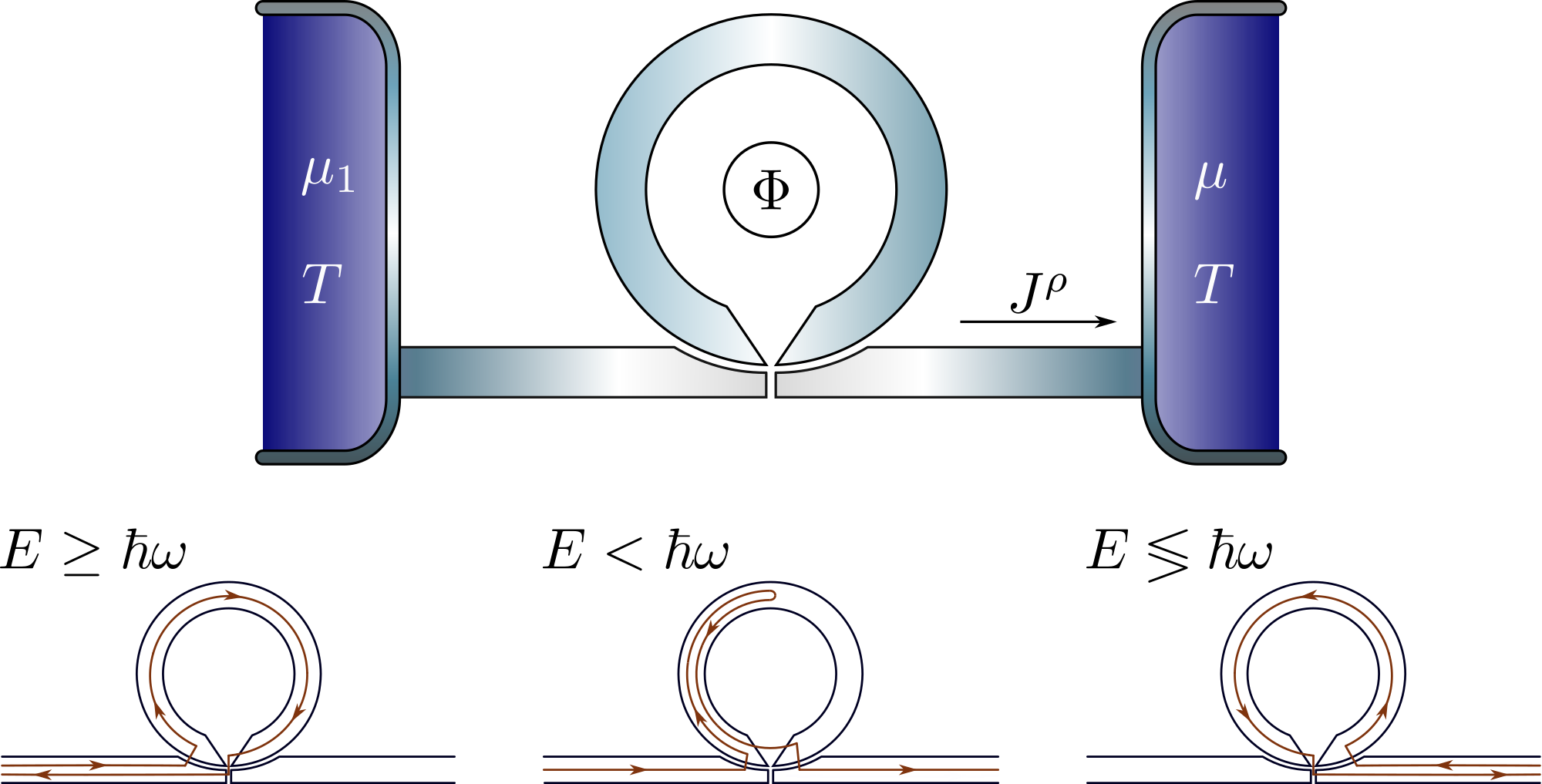}
\caption{Quantum generator. 
A mesoscopic ring is connected to two leads via a four-way beam 
splitter such that, without a magnetic flux, all incoming carriers are
reflected after passing through the ring.
The current $J^\r$ is generated by changing the magnetic flux $\Phi$
such that clockwise-moving carriers are decelerated by the emerging 
electromagnetic force around the ring. 
As a result, incoming carriers from the left are reflected only if 
their energy $E$ is above the threshold $\hw$, which is required to 
pass through the entire loop; 
carriers with $E<\hw$ are flipped around and transmitted to the right.
At the same time, carriers coming from the right are accelerated and
thus reflected regardless of their energy.
\label{Fig_QuantGen}}
\end{figure}

The setup of Fig.~\ref{Fig_QuantGen} provides a simple realization of
a quantum generator. 
The frozen scattering amplitudes for this system are 
\begin{align}\label{FrABScatAmpl}
\AS^{11}_{E,\phi} &= e^{i(\chi_E+\phi)}, \quad
\AS^{12}_{E,\phi} = \AS^{21}_{E,\phi} =0 \quad\text{and}\\
\AS^{22}_{E,\phi} &= e^{i(\chi_E-\phi)}, 
\nonumber
\end{align}
where the irrelevant dynamical phase $\chi_E$ is determined by the
circumference of the loop and the Aharonov-Bohm phase $\phi\equiv e
\Phi/\hbar c$ plays the role of an external control parameter;
here, $\Phi$ is the tunable magnetic flux through the ring, $e$ is the
carrier charge and $c$ the speed of light \cite{Nazarov2009b}. 
We use the right reservoir as a reference. 
Hence, the chemical affinity is $F^\r_1\equiv F^\r$ and the electric
current $J^\r_1\equiv J^\r$ flows from left to right. 
For a linearly increasing flux, i.e., for $\phi_t\equiv\w t$, the 
kinetic coefficients are
\begin{align}\label{AQG_KinCoeff}
L^{\r\r} & = 0, \quad
L^{\r\w} =- L^{\w\r} = \frac{T\varphi}{h(1+\varphi)}	\quad\text{and}\\
L^{\w\w} & = \frac{T\varphi}{h(1+\varphi)},
\nonumber
\end{align}
according to the formulas \eqref{TwKinCoeffExp}, where $\varphi=
\exp[\mu/T]$ is the fugacity of the right reservoir. 
The electric current and the rate of entropy production are thus given
by $J^\r= L^{\r\w}F^\w$ and $\sigma= L^{\w\w}(F^\w)^2$. 
Upon recalling the expressions \eqref{ThNBPropPara} for the 
coefficient $K^{\r\r}$, it is now straightforward to verify that the 
bound \eqref{ThNB} is saturated, that is, 
$\sigma = (J^\r)^2/2 K^{\r\r}$ for any $F^\r$ and $F^\w$. 
This result shows that our bounds are tight. 

On the microscopic level, the saturation of the bound \eqref{ThNB} is
a consequence of the working mechanism of the quantum generator, which
is described in Fig.~\ref{Fig_QuantGen}. 
Every transmitted carrier leads to the net dissipation of one quantum
of energy $\hw$. 
As a result, the irreversible part of the photon current, 
$J^\w_{{{\rm irr}}}= L^{\w\w} F^\w$, is equal to the electric current
$J^\r$ and proportional to the rate of entropy production $\sigma=F^\w
J^\w_{{{\rm irr}}}$. 

By contrast, the reversible photon current, $J^\w_{{{\rm rev}}}=
L^{\w\r}F^\r$, is decoupled from the dissipation rate $\sigma$, which
is independent of the chemical affinity $F^\r$.
In particular, for $F^\w=0$ and $F^\r\neq 0$, we have $\sigma=0$
and $J^\w=J^\w_{{{\rm rev}}}\neq 0$.  
This observation does not imply the occurrence of dissipationless 
transport, since the electric current vanishes for $F^\w=0$. 
It shows, however, that no general bound of the form \eqref{ThNB} 
exists for the photon current. 

\subsection{Parametric Quantum Pumps}\label{SecAdChrPumpTBP}

\subsubsection{Performance Bound}
A parametric quantum pump can be described as a two-terminal conductor,
whose potential landscape is changed periodically to generate a flow
of carriers between two reservoirs with the same chemical potential 
and temperature. 
One pumping cycle requires the energy input
$U\equiv\T\Pi_{{{\rm ac}}}$ and moves the amount of carriers 
$\mathcal{Q}\equiv\T J^\r$ from the first reservoir to the second, see
Fig.~\ref{Fig_QPump}. 
Our bound \eqref{ThNB} implies that, in the slow-driving regime, these
two figures are connected by the trade-off relation 
\begin{equation}\label{ApplAQP_NB}
U\geq \frac{T \mathcal{Q}^2}{2\T K^{\r\r}} \geq 
	\hbar\w \mathcal{Q}^2,
\end{equation}
where we have used that $U=\T T\sigma$ and the second inequality 
follows from Eq.~\eqref{ThNBPropParaa}.
This result is quite remarkable as it puts a universal lower bound on 
the energy that must be provided to generate a given pump flux 
$\mathcal{Q}$ in a given cycle time $\T$. 
Hence, Eq.~\eqref{ApplAQP_NB} makes it possible to estimate the
energy consumption of a quantum pump, even in situations, where only
the flux $\mathcal{Q}$ can be measured and the scattering amplitudes
of the sample are unknown. 

\subsubsection{Geometry and Optimal Driving Speed}
Finding optimal driving protocols for a quantum pump is a difficult 
task, which typically requires the solution of involved variational 
problems. 
In the slow-driving regime, it is, however, possible to derive a 
universal optimization principle for the parameterization $\g$ of the 
closed path $\Gamma$ that is mapped out by the vector of control
parameters $\df_t$ during the cycle. 
To this end, we recall that, by using Eq.~\eqref{TwKinCoeffExpc},
the pump flux can be written as a line integral in the space of
control parameters \cite{Brouwer1998},
\begin{equation}\label{ApplAQP_GeoCharge}
\mathcal{Q} = \T L^{\r w}F^\w =\oint_{\Gamma}
	\sum\nolimits_{j} \mathcal{A}^j_\df dV_j,
\end{equation}
which proves that $\mathcal{Q}$ is independent of the parametrization 
of $\Gamma$. 
The objects 
\begin{equation}\label{ApplAQQ_VectPot}
\mathcal{A}^j_\df\equiv \frac{1}{2\pi T} \Eint \bsum {{{\rm Im}}}
	\Bigl[S^{1\b\ast}_{E,\df} \partial_{V_j} S^{1\b}_{E,\df}\Bigr] f'_E
\end{equation}
are thereby considered as the components of a vector field
$\boldsymbol{\mathcal{A}}_{\df}$, which, in analogy to the theory of 
geometric phases, is called the Berry potential \cite{Berry1984}. 
Second, the input $U$ can be expressed in the form
\begin{equation}\label{ApplAQP_KinEnt}
U  = T \T L^{\w\w}(F^\w)^2 
	= h \tint \!\sum\nolimits_{ij} g^{ij}_{\df_t} 
	\dot{V}_{i,t} \dot{V}_{j,t},
\end{equation}
where we have used Eq.~\eqref{TwKinCoeffExpd} and introduced the 
thermodynamic metric \footnote{Note that the coefficients 
$g^{ij}_{\df}$ form a symmetric, positive semi-definite matrix
and can therefore be consistently identified with a, possibly 
degenerate, metric in the space of control parameters.}
\begin{equation}\label{ApplAQP_Metric} 
g^{ij}_\df\equiv \frac{1}{8\pi^2 T}\Eint \absum {{{\rm Re}}}
	\Bigl[\bigl(\partial_{V_i} S^{\a\b}_{E,\df}\bigr) 
	\bigl(\partial_{V_j} S^{\a\b\ast}_{E,\df}\bigr)
	\Bigr]f'_E. 
\end{equation}

The optimal parameterization $\g^\ast_t$ of the path $\Gamma$, which 
minimizes the input $U$, can be determined from the objective 
functional
\begin{equation}\label{ApplAQP_ObjF}
O[\dot{\hat{\g}}_t]\equiv \tint \! \left(\sum\nolimits_{ij} 
	g^{ij}_{\df_t}\dot{V}_{i,t} \dot{V}_{j,t}\bigl/\dot{\hat{\g}}_t
	-\Lambda\dot{\hat{\g}}_t\right),
\end{equation}
where $\Lambda$ is a Lagrange multiplier accounting for the boundary 
condition $\g_{\T} - \g_0 = \T$ and $\dot{\hat{\g}}_t$ is the 
derivative of the inverse function $\hat{\g}_t$ of $\g_t$ 
\footnote{Note that $\g_t$ is a monotonically increasing and therefore
invertible function.}.
Solving the Euler-Lagrange equation for this functional yields the 
condition
\begin{equation}\label{ApplAQP_OptP} 
t = \frac{\T}{\mathcal{L}} \int_0^{\g^\ast_t} \!\!\! ds \; 
	\sqrt{\sum\nolimits_{ij} g^{ij}_{\df_s} 
	\dot{V}_{i,s} \dot{V}_{j,s}},
\end{equation}
where
\begin{equation}
\mathcal{L}\equiv \oint_\Gamma \sqrt{\sum\nolimits_{ij}
	g^{ij}_{\df} dV_i dV_j}
\end{equation}
denotes the thermodynamic length of $\Gamma$. 
Replacing the protocols ${\df}_t$ with ${\df}_{\g^\ast_t}$ minimizes
the energy consumption of the pump without changing its flux 
$\mathcal{Q}$, which depends only on the path $\Gamma$. 
Inserting the minimal energy uptake $U^\ast=\hbar\w\mathcal{L}^2$ into
Eq.~\eqref{ApplAQP_NB} yields the remarkably simple relation 
\begin{equation}
\mathcal{L}\geq |\mathcal{Q}|
\end{equation}
between thermodynamic length and pump flux, which connects our 
thermodynamic bounds with the geometric theory of slowly driven 
quantum pumps. 

\subsubsection{Tunable-Barrier Pump}\label{Sec_TunBarPump}
\begin{figure}
\includegraphics[scale=1.05]{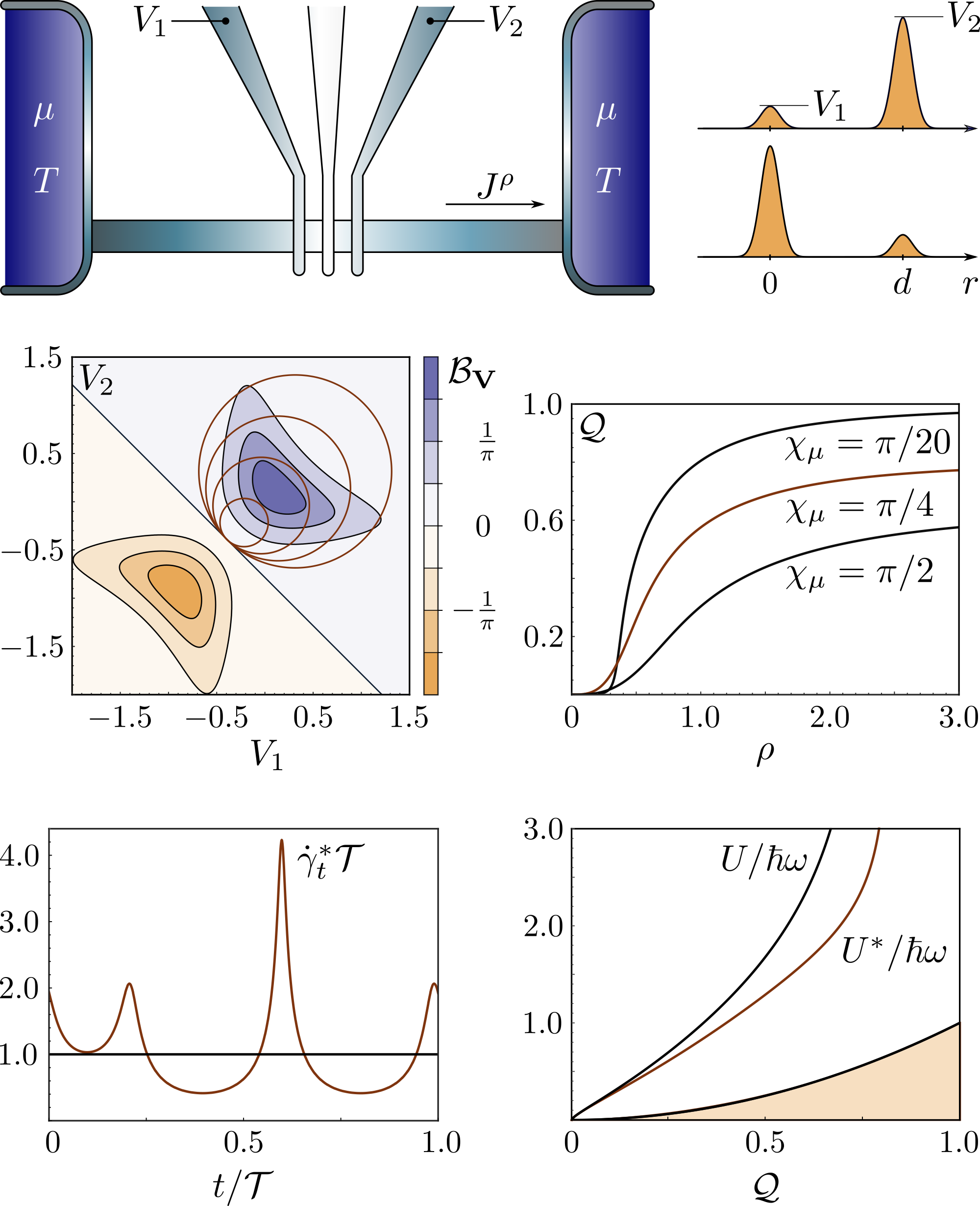}
\caption{Parametric quantum pump. 
\textbf{Top:}
Sketch of a generic setup. 
A narrow conductor connects two reservoirs with the same chemical 
potential and temperature. 
Periodically changing the gate voltages $V_1$ and $V_2$ creates two
oscillating potential barriers driving the pump current $J^\r$. 
\textbf{Middle:}
The left plot shows the Berry curvature $\mathcal{B}_\df$ for
$\chi_\mu =\pi/4$.
Circles indicate the control path that is determined by the 
protocols \eqref{FrTBDrPt} for $\r=1/4, \; 1/2, \; 3/4, \;1$. 
As the path expands into the positive peak of $\mathcal{B}_{\df}$,
the pump flux $\mathcal{Q}$ first increases sharply and then
approaches a limit value, which depend on the parameter $\chi_\mu$.
\textbf{Bottom:}
The left panel shows the optimal driving speed $\dot{\g}^\ast_t$ for
$\chi_\mu =\pi/4$ and $\r=1$. 
The horizontal line corresponds to constant speed. 
On the right, the energy uptake is plotted against the pump flux for
constant and optimal driving speed, $U$ and $U^\ast$ respectively,
where $\chi_\mu=\pi/4$ and $\r$ varies between $0$ and $5$.
The shaded area indicates the bound \eqref{ApplAQP_NB}.
\label{Fig_QPump}}
\end{figure}

We now consider a simple model of a quantum pump, where the potential
inside the conductor consists of two delta-barriers with
dimensionless strengths $V_1$ and $V_2$.
The single-particle Hamiltonian of this system reads
\begin{equation}
H_{\df}= \frac{p^2}{2M} + \frac{\hbar^2V_1}{Md}\d_{r} 
	+ \frac{\hbar^2 V_2}{Md}\d_{r-d},
\end{equation}
where $\df\equiv(V_1,V_2)$,  $p$ and $r$ are the momentum and position
of the carrier, $d$ denotes the distance between the two barriers and 
$M$ the carrier mass, see Fig.~\ref{Fig_QPump}.
The corresponding frozen scattering amplitudes are 
\cite{Moskalets2004}
\begin{subequations}\label{FrTBScatAmpl}
\begin{align}
&\AS^{12}_{E,\df} = \AS^{21}_{E,\df} 
	=\mathcal{Z}_{E,\df} \chi^2_E e^{i\chi_E},\\
&\AS^{11}_{E,\df} =\mathcal{Z}_{E,\df}\bigl(
	\l_1(\l_2-i\chi_E)-\l_2(\l_1+i\chi_E)e^{2i\chi_E}\bigr),\\
&\AS^{22}_{E,\df} =\mathcal{Z}_{E,\df}\bigl(
	\l_2(\l_1-i\chi_E)-\l_1(\l_2+i\chi_E)e^{2i\chi_E}\bigr)
\end{align}
\end{subequations}
with $\mathcal{Z}_{E,\df}\equiv 1\bigl/\bigl(\l_1\l_2e^{2i\chi_E}
-(\l_1-i\chi_E)(\l_2-i\chi_E)\bigr)$
and $\chi_E\equiv d\sqrt{2ME}/\hbar$.  
In the following, we focus on the low-temperature limit, where the 
function $f'_E$ is sharply peaked around $E=\mu$ and can therefore be 
replaced with $T\delta_{E-\mu}$ in the Eqs.~\eqref{ApplAQP_GeoCharge}
and \eqref{ApplAQP_Metric}. 

To find a suitable control path, we recall that the expression
\eqref{ApplAQP_GeoCharge} for the pump flux can be rewritten as an 
area integral with the help of Stokes' theorem \cite{Brouwer1998},
\begin{equation}\label{ApplAQP_BCurCharge} 
\mathcal{Q} = \int_{S_\Gamma} \!\!\! dS \; \mathcal{B}_{\df}, 
\end{equation}
where $\mathcal{B}_\df\equiv \partial_{V_1}\mathcal{A}^2_\df
-\partial_{V_2}\mathcal{A}^1_{\df}$ is the Berry curvature 
corresponding to the potential $\boldsymbol{\mathcal{A}}_\df$ and
$S_\Gamma$ denotes the area encircled by $\Gamma$. 
As shown in Fig.~\ref{Fig_QPump}, the function $\mathcal{B}_{\df}$
features two antisymmetric peaks. 
Hence, to generate a large flux $\mathcal{Q}$, the path $\Gamma$ has
to cover the positive peak while avoiding the negative one. 
This condition is met by the circles with the parameterization 
\begin{equation}\label{FrTBDrPt}
\l_{1,t} = V_0 - \r\cos[\w t], \;\;\;\;
\l_{2,t} = V_0 - \r\sin[\w t],
\end{equation}
where $V_0\equiv \r/\sqrt{2}-\chi_\mu\cot[\chi_\mu]/2$ and the radius
$\rho$ determines the amplitude of the driving.

Using the protocols \eqref{FrTBDrPt}, we numerically calculate the
flux $\mathcal{Q}$, the input $U$, the optimal driving speed 
$\dot{\g}^\ast$ and the minimized input $U^\ast$.
The results of these calculations are plotted in Fig.~\ref{Fig_QPump}. 
Two observations stand out. 
First, the energy consumption of the pump can indeed be significantly
reduced by optimizing the driving speed.  
Second, our bound \eqref{ApplAQP_NB} underestimates the energy 
uptake by at least a factor of 6 for constant, and at least a factor 
of 4 for optimal driving speed. 
Microscopically, the deviations arise from idle scattering events, 
where carriers pick up energy from the external driving without
contributing to the pump flux. 
This effect becomes more and more dominant as the amplitude $\rho$ of 
the potential modulations increases. 
Still, at least for moderate amplitudes, our bound \eqref{ApplAQP_NB}
predicts the correct order of magnitude for the energy uptake of the 
device. 

\subsection{Adiabatic Quantum Motors}\label{SecAdQuantMot}

\subsubsection{Bound on Efficiency}
A quantum motor can be described in terms of two components: 
a mesoscopic conductor hosting an electric current $J^\r$ between two
reservoirs with the same temperature but different chemical potentials
and a mechanical rotor that couples to the dynamics of the carriers
\cite{Bustos-Marun2013}, see Fig.~\ref{Fig_QMot}. 
Provided that the rotor is much heavier than the carriers, it can be 
treated as a classical degree of freedom, which creates a slowly and
periodically changing potential inside the conductor
\cite{Bode2012a}.
In this Born-Oppenheimer picture, the motive power of the rotor, 
that is, the output of the motor, is given by $\Pi_{{{\rm m}}}\equiv
-\Pi_{{{\rm ac}}}= - T F^\w J^\w$. 
The electric power $\Pi_{{{\rm el}}}=TF^\r J^\r$ is the input of the
motor and its efficiency is defined as 
\begin{equation}\label{ApplAQM_Eff}
\eta_{{{\rm m}}} \equiv \Pi_{{{\rm m}}}/\Pi_{{{\rm el}}}\leq 1
\end{equation}
for $\Pi_{{{\rm m}}}>0$. 
The upper bound $1$ thereby follows from the second law, which 
requires $\sigma = \Pi_{{{\rm el}}}/T-\Pi_{{{\rm m}}}/T \geq 0$.

Going beyond this trivial result, our bound \eqref{ThNB} implies that
the efficiency and the input of the device are connected by the 
universal relation
\begin{equation}\label{ApplAQM_NB} 
\eta_{{{\rm m}}}\leq 1-\frac{\Pi_{{{\rm el}}}}{2T K^{\r\r}(F^\r)^2}
	\leq 1- \frac{h\Pi_{{{\rm el}}}}{\Delta\mu^2}
	= 1- \frac{hJ^\r}{\Delta\mu},
\end{equation}
where the second inequality follows from  Eq.~\eqref{ThNBPropParaa}. 
Depending only on the electric current $J^\r$ and the chemical 
potential bias $\Delta\mu=\mu_1-\mu$, both of which are generally easy
to access in experiments, the bound \eqref{ApplAQM_NB} has a key 
practical implication: 
it makes it possible to put a non-trivial upper limit on the 
efficiency of an adiabatic quantum motor without measuring the motive
power of the rotor or invoking a specific model. 

\subsubsection{Paddle Wheel Motor}\label{Sec_PadWheMot}
\begin{figure}
\includegraphics[scale=1.05]{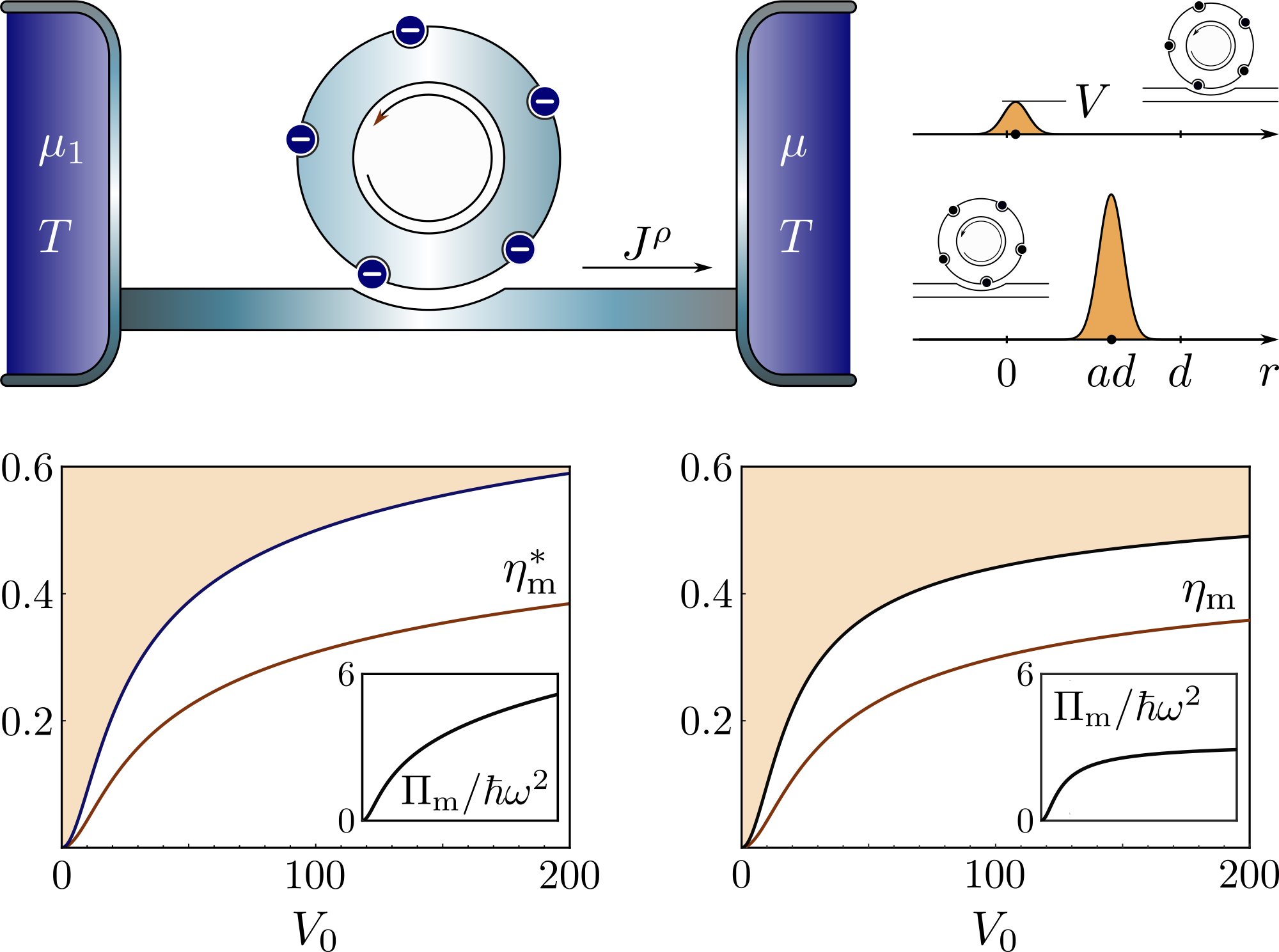}
\caption{Quantum paddle wheel motor. \textbf{Top:} Setup. 
A narrow coherent conductor, which is placed between two isothermal 
reservoirs with chemical potentials $\mu_1$ and $\mu<\mu_1$, is in 
contact with a mesoscopic wheel, the rotor, which carries negative 
charges on its surface acting as electrostatic paddles. 
The electric current $J^\r$ pushes the potential barrier that is 
created by a passing charge to the right thus driving the rotation of
the wheel. 
\textbf{Bottom:}
The left plot shows the maximum efficiency \eqref{ApplAQM_MaxEff} of 
the motor as a function of the  strength of the barrier $V_0$ with the
shaded area corresponding to the bound \eqref{ApplAQM_NB}. 
The inset shows the generated motive power over the same range of 
$V_0$ as in the main panel.
For comparison, the efficiency and the motive power are plotted on the
right for fixed bias $v=10$. 
For all plots, we have set $\chi_\mu=4\pi$. 
\label{Fig_QMot}}
\end{figure}

To test the accuracy of the bound \eqref{ApplAQM_NB}, we consider a 
simple quantum motor, which consists of a rotating nano-paddle wheel 
creating a sliding delta-barrier inside a narrow conductor, see 
Fig.~\ref{Fig_QMot}. 
The single-particle Hamiltonian of this model is
\begin{equation}
H_{\df} = \frac{p^2}{2M} + \frac{\hbar^2 V}{Md} \d_{r-da}
\end{equation}
and the frozen scattering amplitudes are given by 
\begin{subequations}
\begin{align}
\AS^{12}_{E,\df}&=\AS^{21}_{E,\df}=i\chi_E e^{i\chi_E}/(i\chi_E -V),\\
\AS^{11}_{E,\df}&= Ve^{2i a\chi_E}/(i\chi_E -V), \\
\AS^{22}_{E,\df}&= Ve^{2i(1-a)\chi_E}/(i\chi_E -V)
\end{align}
\end{subequations}
with $\df\equiv (V,a)$ and $\chi_E\equiv d\sqrt{2ME}/\hbar$. 
Here, $d$ denotes the length of the contact region between the
conductor and the paddle wheel and the dimensionless control
parameters $V$ and $a\in [0,1]$ correspond to the strength and the 
position of the barrier. 
In the low-temperature limit, the electric and the motive power are 
given by 
\begin{subequations}
\begin{align}
\Pi_{{{\rm el}}}/\hbar\w^2 &= \bar{L}^{\r\r}v^2
	+\bar{L}^{\r\w}v \quad\text{and}\\
\Pi_{{{\rm m}}}/\hbar\w^2  &= -\bar{L}^{w\r}v-\bar{L}^{\w\w},
\end{align}
\end{subequations}
where $v\equiv \Delta\mu/\hw$ and we have introduced the rescaled 
kinetic coefficients 
\begin{subequations}
\begin{align}
\bar{L}^{\r\r} &=
	\frac{1}{2\pi}\bigl\llangle \abs{\AS^{12}_{\mu,\df}}^2\bigr\rrangle\\
\bar{L}^{\r\w} &=-\bar{L}^{\w\r}
	=\frac{1}{2\pi \w}\bsum {{{\rm Im}}}\bigr[\bigr\llangle 
	\AS^{\b 1}_{\mu,\df}\dot{\AS}_{\mu,\df}^{\b 1\ast}
	\bigr\rrangle\bigr],\\
\bar{L}^{\w\w} &= \frac{1}{4\pi\w^2}\absum\bigl\llangle 
	\abs{\dot{\AS}^{\a\b}_{\mu,\df}}^2\bigr\rrangle.
\end{align}
\end{subequations}

We now calculate $\Pi_{{{\rm el}}}$ and $\Pi_{{{\rm m}}}$ for the
protocols 
\begin{equation}
V_t= V_0\sin^2[\w t/2] \quad\text{and}\quad
	a_t = t/\T \;{{{\rm mod}}}\; 1,
\end{equation}
which mimic the motion of the paddle wheel, assuming that its radius
is much larger than $d$. 
The bias $v$ is fixed by maximizing the efficiency of the motor
\eqref{ApplAQM_Eff}. 
That is, we set $v=v^\ast\equiv\bar{L}^{\w\w}\bigl(1+
\sqrt{Z_{{{\rm m}}}+1}\bigr)\bigl/\bar{L}^{\r\w}$, whereby the 
efficiency becomes 
\begin{equation}\label{ApplAQM_MaxEff}
\eta^\ast_{{{\rm m}}}=
	\frac{\sqrt{Z_{{{\rm m}}}+1}-1}{\sqrt{Z_{{{\rm m}}}+1}+1}
\end{equation}
with $Z_{{{\rm m}}}\equiv (\bar{L}^{\r\w})^2/\bar{L}^{\r\r}
\bar{L}^{\w\w}$ playing the role of a figure of merit. 
Our results are plotted in Fig.~\ref{Fig_QMot}. 
For $V_0\lesssim 10$, the bound \eqref{ApplAQM_NB} overestimates the 
efficiency of the paddle-wheel motor by about a factor of $2$. 
As $V_0$ increases, the deviation decays to approximately a factor of
$3/2$, since fewer carriers tunnel through the moving barrier without
transmitting energy to the paddle-wheel.
The same qualitative behavior is observed if the bias $v$ is fixed 
independently of the kinetic coefficients.

\section{Far From Equilibrium}\label{Sec_FFE}
Going beyond the first and the second law, our bounds
\eqref{ThNBCompoundBnd} and \eqref{ThNB} provide strong universal 
constraints on currents in periodically driven coherent conductors. 
They were derived within the framework of linear response theory, 
which describes the arguably most relevant regime of operation
of mesoscopic devices. 
Still, the question remains whether similar bounds apply also far
from equilibrium. 
In this section, we show how such a generalization of our theory 
can be achieved. 
We first derive a family of new bounds on currents in coherent 
conductors that hold for arbitrary driving frequencies and thermal
gradients. 
We then discuss the interpretation of these bounds in the context
of thermodynamic uncertainty relations \cite{Seifert2017a,
Horowitz2020} and show how they can be used for thermodynamic 
inference.
To illustrate the general picture, we revisit the quantum 
generator of Sec.~\ref{SecAdChrPumpTM}, whose Floquet
scattering amplitudes can be calculated exactly for arbitrary driving
frequencies. 
We then put our results in the context of recent developments and 
conclude this section by summarizing the main implications of our 
theory for systems without time dependent driving.

\subsection{Derivation}

To extend the approach of Sec.~\ref{SecThNB} into the non-linear 
regime, we consider the quadratic form 
\begin{equation}\label{FFE_QuadF}
\Xi_\psi \equiv \sigma +\psi\asum\xsum J^x_\a G^x_\a +
	\psi\asum\xysum	\hat{K}^{xy}_\a G^x_\a G^y_\a. 
\end{equation}
Here, $\psi$ is a yet undetermined positive number and the 
coefficients $\hat{K}^{xy}_\a$ are defined as 
\begin{equation}\label{FFE_GenK}
\hat{K}^{xy}_\a \equiv \frac{1}{4h}\Eint 
	\begin{aligned}[t]
		&\bsum\nsum \zeta^{x,\a}_{\En}\zeta^{y,\a}_{\En}
		\abs{\hat{S}^{\a\b}_{\En,E}}^2\\
		&\times \bigl(f^\a_{\En}(1-f^\b_E)+f^\b_E(1-f^\a_{\En})\bigr)
	\end{aligned}
\end{equation}
with $\zeta^{\r,\a}_E\equiv 1$, $\zeta^{q,\a}_E\equiv E-\mu_\a$ and
the modified Floquet scattering amplitudes 
\begin{equation}
\hat{S}^{\a\b}_{\En,E}\equiv (1-\d_{n0}\d_{\a\b})S^{\a\b}_{\En,E}.
\end{equation}
This ansatz is essentially found by inspection but can be 
motivated \emph{a posteriori} as we shall see in 
Sec.~\ref{SecFFE_Int}.

We now recall the formulas \eqref{TwMeanCurrents},
\eqref{TwElPowerHeatCurr} and \eqref{TwEntProd} and use the unitarity
conditions \eqref{TwSumRulesSM} to express the thermo-chemical 
currents and the rate of entropy production as
\begin{subequations}
\begin{align}
J^x_\a & = \frac{1}{h}\Eint \bsum\nsum\zeta^{x,\a}_{\En}
	 \abs{\hat{S}^{\a\b}_{\En,E}}^2 (f^\a_{\En}-f^\b_E),\\
\sigma & = 
	\begin{aligned}[t]
		\frac{1}{h}\Eint &\absum\nsum \abs{\hat{S}^{\a\b}_{\En,E}}^2\\
		&\times \bigl((\nu^\b_E-\nu^\a_{\En})f^\b_E -g^\b_E +g^\a_{\En}\bigr)
	\end{aligned}
\end{align}
\end{subequations}
with $\nu^\a_E\equiv (\mu_\a-E)/T_\a$ and 
$g^\a_E\equiv\nu^\a_E-\ln[f^\a_E]$. 
Inserting these expressions and the definition
\eqref{FFE_GenK} into  Eq.~\eqref{FFE_QuadF} gives 
\begin{equation}\label{FFE_QuadFAux}
\Xi_\psi = \frac{1}{h}\Eint
	\absum\nsum \abs{\hat{S}^{\a\b}_{\En,E}}^2 \Xi^{\a\b}_{\psi,n,E},
\end{equation}
where
\begin{equation}
\Xi^{\a\b}_{\psi,n,E}\equiv
	\begin{aligned}[t]
		&(\nu^\b_E-\nu^\a_{\En})f^\b_E - g^\b_E +  g^\a_{\En}\\
	    &+2\psi X^{\a}_{\En} (f^\a_{\En}-f^\b_E)\\
        &+\psi (X^{\a}_{\En})^2
			\bigl(f^\a_{\En}(1-f^\b_E)+f^\b_E(1-f^\a_{\En})\bigr)
	\end{aligned}
\end{equation}
is a quadratic form in the variables $X^\a_E\equiv \sum_x
\zeta^{x,\a}_E G^x_\a/2$. 
As has been shown in Ref.~\cite{Brandner2017b}, this quadratic form is
positive semidefinite if
\begin{equation}
\psi\leq \psi^\ast\equiv \min_{z\in \mathbb{R}}
	\frac{(1-e^z+ze^z)(e^z+1)}{(e^z-1)^2}\simeq 0.89612. 
\end{equation}
Consequently, we have $\Xi_{\psi^\ast}\geq 0$ for any choice of the 
auxiliary variables $G^x_\a$. 

From here, we can proceed as in Sec.~\ref{SecThNB}. 
Minimizing $\Xi_{\psi^\ast}$, first with respect to the $\{G^\r_\a\}$
and then with respect to the $\{G^q_\a\}$ while setting the
respectively remaining auxiliary variables to zero, yields
the cumulative bound
\begin{equation}\label{FFE_CompBnd}
\sigma \geq \psi^\ast \asum \frac{(J^x_\a)^2}{4\hat{K}^{xx}_\a}, 
\end{equation}
which can be regarded as the counterpart of
Eq.~\eqref{ThNBCompoundBnd}. 
For the individual matter and heat currents, we obtain the bounds
\begin{subequations}\label{FFE_NBDet}
\begin{align}
\label{FFE_NBDetMatr}
\sigma&\geq 
	\frac{\psi^\ast(1+\theta^\r_\a)}{4\hat{K}^{\r\r}_\a}(J^\r_\a)^2
	\quad\text{and}\\
\label{FFE_NBDetHeat}
\sigma&\geq	
	\frac{\psi^\ast}{4\hat{K}^{qq}_\a}\bigl((J^q_\a)^2
	+ (J^q_\a+\Pi_{{{\rm ac}}}+\Pi_{{{\rm el}}})^2\theta^q_\a\bigr)
\end{align}
\end{subequations}
with $\theta^x_\a \equiv\hat{K}^{xx}_\a\bigl/\sum_{\b\neq\a}
\hat{K}^{xx}_\b\geq0$, which can be derived from 
Eq.~\eqref{FFE_CompBnd} by repeating the steps that led from 
Eq.~\eqref{ThNBCompoundBnd} to Eq.~\eqref{ThNB};
for the heat currents, the conservation law $\asum J^q_\a=0$, which
holds only in linear response, must thereby be replaced with the first 
law \eqref{TwFirstLaw}. 

We note that, in linear response, the bounds \eqref{FFE_NBDet} are 
weaker than our previous bound \eqref{ThNB} by a factor $\psi^\ast$.
Specifically, upon neglecting third-order corrections in the 
affinities, Eqs.~\eqref{FFE_NBDet} imply
\begin{equation}
\sigma\geq \frac{\psi^\ast(1+\theta^x_\a|_{{{\rm eq}}})}{
	4\hat{K}^{xx}_\a|_{{{\rm eq}}}}	(J^x_\a)^2
	\geq \frac{\psi^\ast N}{4 K^{xx} (N-1)}(J^x_\a)^2, 
\end{equation}
where the second inequality follows by noting that
\begin{align}
\hat{K}^{xx}_\a|_{{{\rm eq}}}&= K^{xx}- \frac{1}{2h}\Eint\; 
	(\zeta^x_E)^2\abs{\bigl\llangle \AS^{\a\a}_{E,\df}
	\bigr\rrangle}^2 f'_E\\
	&\leq K^{xx}\nonumber
\end{align}
and therefore
\begin{equation}
\frac{1+\theta^x_\a|_{{{\rm eq}}}}{\hat{K}^{xx}_\a|_{{{\rm eq}}}} 
= \frac{1}{\hat{K}^{xx}_\a|_{{{\rm eq}}}} 
	+ \frac{1}{\sum_{\b\neq\a}\hat{K}^{xx}_\b|_{{{\rm eq}}}}
	\geq\frac{N}{K^{xx}(N-1)}. 
\end{equation}
This observation shows that the bounds \eqref{FFE_NBDet} cannot
be saturated close to equilibrium.

\subsection{Thermodynamic Uncertainty Relations}\label{SecFFE_Int}

A thermodynamic uncertainty relation describes the trade-off between
precision and entropy production in a given class of thermodynamic 
processes \cite{Seifert2017a,Horowitz2020}. 
For classical matter transport in multi-terminal systems
without time-dependent driving, for example, the relation
\begin{equation}\label{FFED_TURCl}
\sigma (\epsilon^\rho_\a)^2 \geq \psi^\ast
\end{equation}
has been derived in Ref.~\cite{Brandner2017b}. 
The reduced zero-frequency noise, or squared relative uncertainty,
$(\epsilon^\rho_\a)^2 \equiv P^{\r\r}_{\a\a}/(J^\r_\a)^2$,
thereby provides a measure for the accuracy at which a given amount of
particles is extracted from the reservoir $\a$ in a given time. 
As we show in the following, our bounds \eqref{FFE_NBDet} make it 
possible to extend this relation into the quantum regime and
to include heat currents as well as periodic driving. 

We first recall that, according to the formulas \eqref{TwMEN} and
\eqref{TwMENDet}, the diagonal thermal and the shot noise of the 
thermo-chemical currents are given by 
\begin{subequations}
\begin{align}
\label{FFED_ThNExpl}
D^{xx}_{\a\a}&= \sum\nolimits_{uv} c^{xu}_\a c^{xv}_\a D^{uv}_{\a\a}\\
	&\begin{aligned}[t]
	=\frac{1}{h}\Eint \bsum\nsum\big(
	(&\zeta^{x,\a}_{\En})^2 \abs{S^{\a\b}_{\En,E}}^2(f'^\a_{\En} + f'^\b_E)\\
	&-2\d_{\a\b}\zeta^{x,\a}_{\En}\zeta^{x,\a}_E\abs{S^{\a\a}_{\En,E}}^2f'^\a_E
	\big),
	\end{aligned}
	\nonumber\\[3pt]
\label{FFED_ShNExpl}
R^{xx}_{\a\a} &=\sum\nolimits_{uv} c^{xu}_\a c^{xv}_\a R^{uv}_{\a\a}\\
	&= \frac{1}{2h}\Eint \sum\nolimits_{\g\d}\nsum 
	\Bigl|\sum\nolimits_u c^{xu}_\a C^{u\a,\g\d}_{n,E}\Bigr|^2
	\nonumber
\end{align}
\end{subequations}
with $c^{\r u}_\a\equiv\d_{u\r}$, $c^{q u}_\a\equiv\d_{u\ve}
-\mu_\a\d_{u\r}$ and $f'^\a_E\equiv f_E^\a(1-f^\a_E)$.
Next, comparing Eq.~\eqref{FFED_ThNExpl} with Eq.~\eqref{FFE_GenK} 
shows that the coefficients $\hat{K}^{xx}_\a$ can be decomposed as
\begin{equation}\label{FFED_KDecomp}
4\hat{K}^{xx}_\a = D^{xx}_{\a\a} + \Psi^{xx}_\a + \Omega^{xx}_\a. 
\end{equation}
The Fermi correction 
\begin{equation}\label{FFE_PauliCorr}
\Psi^{xx}_\a\equiv \frac{1}{h}\Eint \bsum\nsum (\zeta^{x,\a}_{\En})^2
	\abs{\hat{S}^{\a\b}_{\En,E}}^2(f^\a_{\En}-f^\b_E)^2
\end{equation}
thereby accounts for Pauli blocking between incoming and outgoing 
carriers; 
it vanishes in equilibrium and in the quasi-classical limit, where 
second-order terms in the fugacities $\varphi_\a\equiv 
\exp[\mu_\a/T_\a]$ can be neglected and the exclusion principle 
becomes irrelevant \cite{Callen1985}. 
The second correction in Eq.~\eqref{FFED_KDecomp}, 
\begin{equation}\label{FFE_DrCorr}
\Omega^{xx}_\a\equiv \frac{2}{h}\Eint
	\nsum \zeta^{x,\a}_{\En}\zeta^{x,\a}_E
	\abs{\hat{S}^{\a\a}_{\En,E}}^2f'^\a_E, 
\end{equation}
arises from inelastic reflections of carriers at the sample and 
vanishes if the external driving is turned off, i.e., if 
$S^{\a\a}_{\En,E} = \d_{n0} S^{\a\a}_E$ and thus $\hat{S}^{\a\a}_{\En, 
E}=0$. 
Finally, since the shot noise \eqref{FFED_ShNExpl} is non-negative, 
Eq.~\eqref{FFED_KDecomp} implies that $\hat{K}^{xx}_\a\leq P^{xx}_{
\a\a}+\Psi^{xx}_\a +\Omega^{xx}_\a$, where $P^{xx}_{\a\a}=D^{xx}_{
\a\a}+R^{xx}_{\a\a}$ is the zero-frequency noise of the 
thermo-chemical currents. 
Inserting this bound into the Eqs.~\eqref{FFE_NBDet} and setting 
$\theta^x_\a=0$ yields the thermodynamic uncertainty relation
\begin{equation}\label{FFED_TUR}
\sigma (\epsilon^x_\a)^2\geq \frac{\psi^\ast}{
1+\Psi^{xx}_\a/P^{xx}_{\a\a}+\Omega^{xx}_\a/P^{xx}_{\a\a}},
\end{equation}
where $(\epsilon^x_\a)^2\equiv P^{xx}_{\a\a}/(J^x_\a)^2$. 

This result can be interpreted as follows.
The corrections $\Psi^{\r\r}_\a$ and $\Omega^{\r\r}_\a$
are non-negative. 
Therefore, the relation \eqref{FFED_TUR} shows that Pauli 
blocking and periodic driving are both sources of precision, which 
reduce the minimal amount of entropy that must be produced to 
generate an arbitrary matter current $J^\r_\a$ with given 
uncertainty $\epsilon^\r_\a$. 
These two mechanisms of reducing the thermodynamic cost of 
precision have been described separately before for
different classes of systems, see for instance 
Refs.~\cite{Brandner2017b,Agarwalla2018,Liu2019a} and
\cite{Barato2016,Barato2018b,Holubec2018,Macieszczak2018,Dechant2019a,
Koyuk2019,Koyuk2019b,VanVu2019,Dechant2020}, respectively. 
Our uncertainty relation \eqref{FFED_TUR} now 
accounts for both of them in a unified manner through additive 
corrections. 

For heat currents, the situation is slightly different.
The corresponding Pauli corrections $\Psi^{qq}_\a$ are non-negative 
and therefore universally suppress the lower bound on the product
$\sigma(\epsilon^q_\a)^2$ compared to the quasi-classical 
limit, where $\Psi^{qq}_\a=0$;
this observation is in line with earlier results, see
Ref.~\cite{Saryal2019}.
The sign of the driving corrections $\Omega^{qq}_\a$, however, depends
on the structure of the Floquet scattering amplitudes of the sample. 
Periodic driving can thus either reduce or increase the minimal 
thermodynamic cost of heat transport at given precision. 

We note that the bound \eqref{FFED_TURCl}, to which the 
relation \eqref{FFED_TUR} reduces in the quasi-classical limit and 
without periodic driving, can be asymptotically saturated in systems 
with infinitely many terminals and strong biases \cite{Brandner2017b}.
Hence, for matter currents, also the uncertainty relation 
\eqref{FFED_TUR} and the stronger bound \eqref{FFE_NBDetMatr}
can be regarded as tight.
Whether these bounds are also tight for heat currents is an open
question.

\subsection{Thermodynamic Inference}
Although the corrections $\Psi^{xx}_\a$ and $\Omega^{xx}_\a$ admit a 
transparent physical interpretation, it is not clear how they can be
measured. 
Before the generalized uncertainty relation \eqref{FFED_TUR} can be 
used for thermodynamic inference, it is therefore necessary to link
these quantities to experimentally accessible observables. 
For matter currents, such a connection is provided by the bounds
\begin{equation}\label{FFE_TIAux}
\Psi^{\r\r}_\a \leq \sigma/2 
	\quad\text{and}\quad
\Omega^{\r\r}_\a \leq 2T_\a/h
\end{equation}
which ultimately follow from the unitarity conditions 
\eqref{TwSumRulesSM}, see Lemmas~\ref{Lem4} and \ref{Lem5} of
App.~\ref{App_Lemmas} for details. 
Inserting these bounds into Eq.~\eqref{FFED_TUR} yields 
\begin{equation}\label{FFE_TI}
\frac{\sigma(P^{\r\r}_{\a\a}+\sigma/2+2T_\a/h)}{(J^\r_\a)^2}\geq
	\psi^\ast. 
\end{equation}

This relation provides a powerful tool for thermodynamic inference.
Specifically, upon measuring an arbitrary matter current $J^\r_\a$,
its zero-frequency noise $P^{\r\r}_{\a\a}$ and the temperature 
$T_\a$ of the corresponding reservoir, one obtains the lower bound 
\begin{equation}\label{FFE_TIsigma}
\sigma\geq \sigma^\ast_\a 
\equiv \abs{J^\r_\a}\left(\sqrt{\phi_\a^2 +2\psi^\ast}-|\phi_\a|
\right)
\end{equation}
on the dissipation rate $\sigma$, where 
$\phi_\a\equiv (hP^{\r\r}_{\a\a}+2T_\a)/h J^\r_\a$ is a 
modified Fano factor. 
This bound holds, within the limits of the scattering approach to 
quantum transport, for any sample and driving protocols, arbitrary 
driving frequencies, arbitrary voltage and temperature biases and in
presence of magnetic fields. 
It therefore makes it possible to derive universal constraints on 
otherwise unaccessible figures of merit of coherent transport devices. 
For instance, upon recalling the setups of Secs.~\ref{SecAdQuantMot}
and \ref{SecAdChrPumpTBP}, the energy uptake $U$ of a parametric 
quantum pump and the efficiency of a quantum motor $\eta_{{{\rm m}}}$
can be bounded as 
\begin{subequations}
\begin{align}
U & \geq T\T\abs{J^\r}\left(\sqrt{\phi^2+2\psi^\ast}-|\phi|\right)
\quad\text{and}\\
\eta_{{{\rm m}}} &\leq 1- \frac{T}{\Delta\mu} 
	\Bigl(\sqrt{\phi^2 +2\psi^\ast}-|\phi|\Bigr)
\end{align}
\end{subequations}
where the parameter $\phi\equiv (hP^{\r\r}_{11}+2T)/J^\r$ is
experimentally accessible. 
These relations can be regarded as far-from-equilibrium 
counterparts of our linear-response bounds \eqref{ApplAQP_NB} and
\eqref{ApplAQM_NB}.

\subsection{Quantum Generator Revisited}\label{Sec_FFEQGR}
\begin{figure}
\includegraphics[scale=1.05]{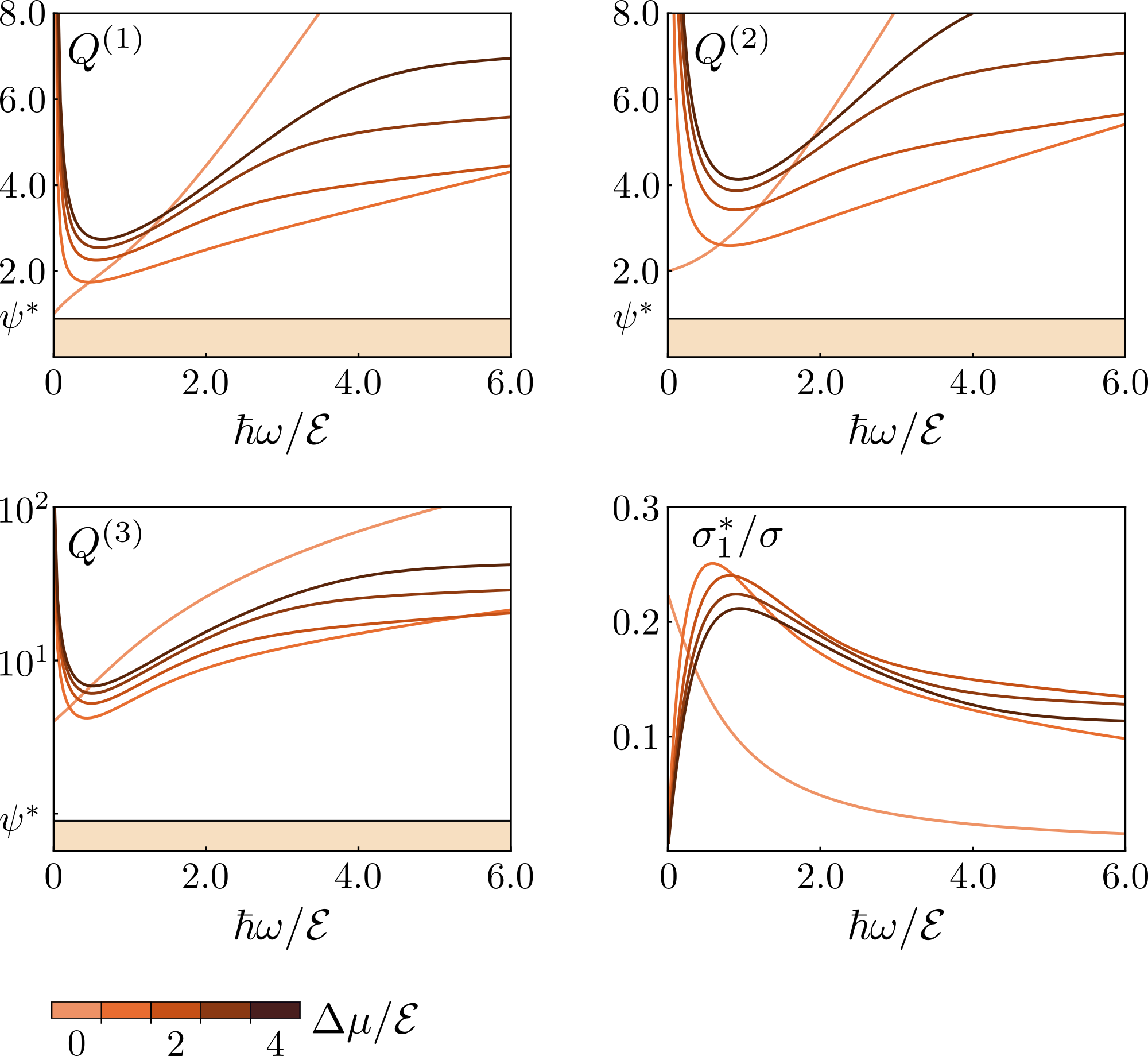}
\caption{
Quantum Generator far from Equilibrium.
The first three plots from top left to bottom left show the quantities
\eqref{FFEQJ_BFig1}-\eqref{FFEQJ_BFig3} as functions of the driving 
frequency for different biases. 
Here, $\mathcal{E}\equiv \hbar^2/2Ml^2$ denotes the natural energy
scale of the device with $M$ being the carrier mass and $l$ the 
circumference of the loop, cf. Fig.~\ref{Fig_QuantGen}. 
The shaded areas indicate the bounds 
\eqref{FFE_NBDetMatr}, \eqref{FFED_TUR} and \eqref{FFE_TI}, 
respectively for $Q^{(1)}$, $Q^{(2)}$, $Q^{(3)}$. 
The last plot shows the lower bound on entropy production 
$\eqref{FFEQJ_BFig4}$ in units of the actual dissipation rate
$\sigma$.
For all plots, we have set $\mu=0$, $\mu_1/\mathcal{E}=\Delta\mu/
\mathcal{E} =0,1,\dots,4$ and $T/\mathcal{E}=1/2$. 
\label{Fig_QGRev}}
\end{figure}

To illustrate our far-from-equilibrium theory, we now return to
the quantum generator of Sec.~\ref{SecAdChrPumpTM}. 
The Floquet scattering amplitudes for this system can be found 
exactly by solving the corresponding Schr\"odinger equation, 
see App.~\ref{Apx_QGen}. 
It is thus straightforward to numerically calculate the quantities
\begin{subequations}\label{FFEQJ_BFig}
\begin{align}
\label{FFEQJ_BFig1}
& Q^{(1)} \equiv
	\frac{4\sigma\hat{K}^{\r\r}_1}{(1+\theta^\r_1)(J^\r_1)^2},\\[3pt]
\label{FFEQJ_BFig2}
& Q^{(2)} \equiv 
	\sigma(\epsilon^\r_1)^2(1+\Psi^{\r\r}_1/P^{\r\r}_{11}
	+\Omega^{\r\r}_1/P^{\r\r}_{11}),\\[3pt]
\label{FFEQJ_BFig3}
& Q^{(3)} \equiv 
	\frac{\sigma (P^{\r\r}_{11} + \sigma/2 +2T/h)}{(J^\r_1)^2}
	\quad\text{and}\\[3pt]
\label{FFEQJ_BFig4}
& \sigma^\ast_1 = \abs{J^\r_1}\left(\sqrt{\phi_1^2+2\psi^\ast}-\phi_1\right)
\end{align}
\end{subequations}
which are plotted Fig.~\ref{Fig_QGRev}.

According to the relations \eqref{FFE_NBDet}, \eqref{FFED_TUR}
and \eqref{FFE_TI}, the dimensionless coefficients
$Q^{(1)}$, $Q^{(2)}$ and $Q^{(3)}$ are bounded from below by 
$\psi^\ast$. 
The coefficient $Q^{(1)}$ indeed comes close to this bound for 
small biases and driving frequencies. 
In fact, we have $Q^{(1)}\rightarrow 1$ for $\Delta\mu,\w
\rightarrow 0$, which confirms our previous observation that the 
linear-response counterpart \eqref{ThNB} of the bound 
\eqref{FFE_NBDet} is saturated for the quantum generator.
After passing through its minimum, $Q^{(1)}$ grows monotonically
with the driving frequency. 
Hence, the bound \eqref{FFE_NBDet} becomes less and less tight when 
the system is driven faster. 
The coefficients $Q^{(2)}$ and $Q^{(3)}$ show a qualitatively 
similar dependence on the driving frequency. 
However, their minimums do not become smaller than $2$ and $4$, 
respectively.
Finally, we find that the figure $\sigma^\ast_1$, at best,
underestimates the actual dissipation rate by about a factor $4$. 
This result proves that the bound \eqref{FFE_TI} is generally
strong enough to predict the correct order of magnitude for $\sigma$.

\subsection{Comparison with Earlier Results}
\begin{figure}
\includegraphics[scale=1.05]{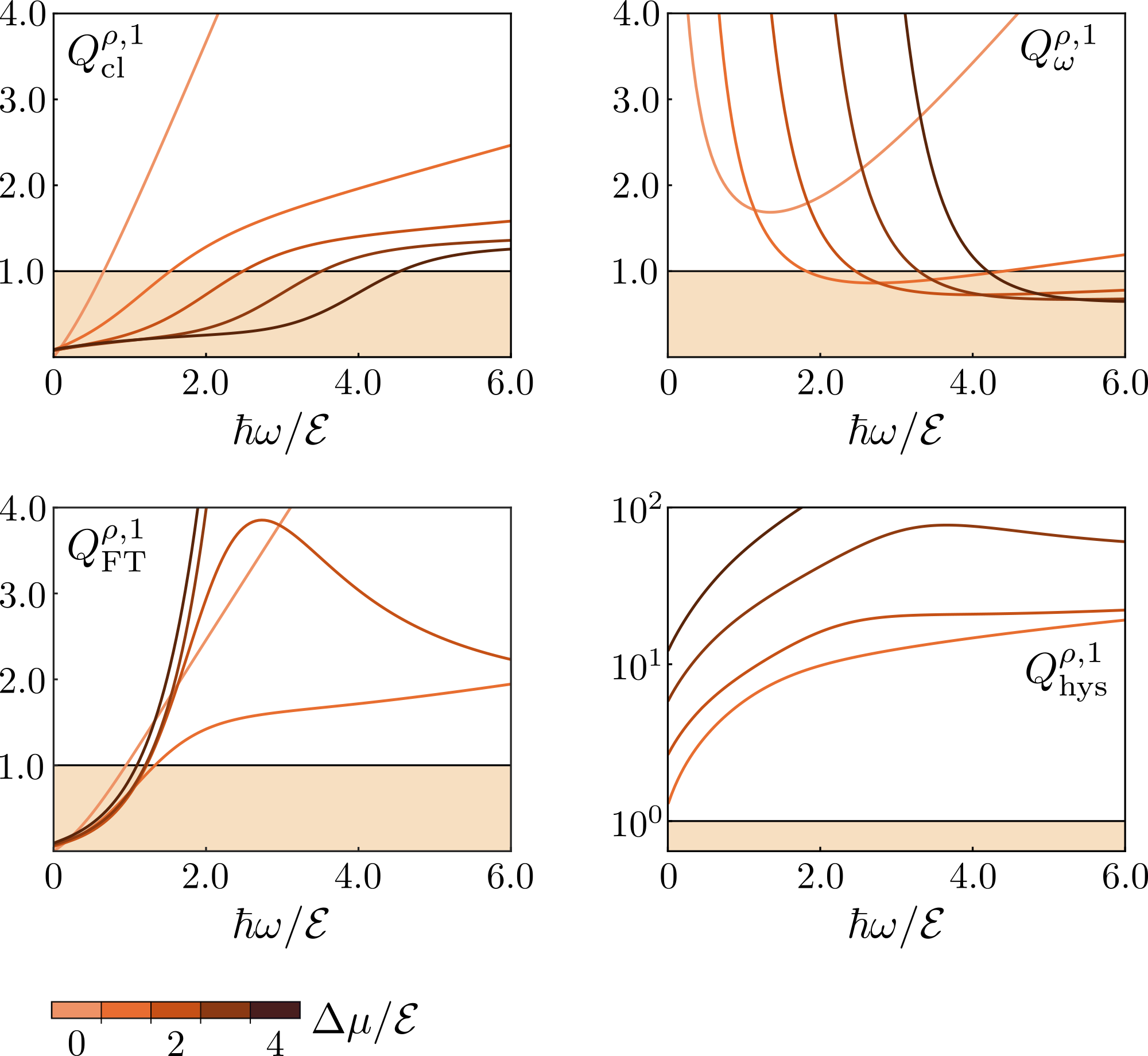}
\caption{
Quantum Generator in Perspective. 
The four plots from top left to bottom right show the coefficients
\eqref{FFE_ER1}-\eqref{FFE_ER4} for $x=\r$ and $\a=1$ as functions of
the driving frequency for $T/\mathcal{E}=1/2$, $\mu=0$ and different 
biases $\Delta\mu/\mathcal{E}=0,1,\dots,4$, where $\mathcal{E}$ was 
defined in Fig.~\ref{Fig_QGRev}. 
In each plot, the shaded area indicates the putative lower bound $1$ 
on the corresponding coefficient. 
\label{Fig_EarlRes}}
\end{figure}

Thermodynamic uncertainty relations are currently a subject of active
research in stochastic thermodynamics, see 
Refs.~\cite{Seifert2017a,Horowitz2020} for recent reviews. 
To put our bounds in context with these developments, we compare 
them with the four relations 
\begin{subequations}\label{FFE_ER}
\begin{align}
\label{FFE_ER1}
Q_{{{\rm cl}}}^{\r,\a} &\equiv \frac{\sigma P^{\r\r}_{\a\a}}{\psi^\ast (J^\r_\a)^2}\geq 1,\\[3pt]
\label{FFE_ER2}
Q_\w^{x,\a} &\equiv \frac{\sigma P^{xx}_{\a\a}}{2(J^x_\a -\w\partial_\w J^x_\a)^2}\geq 1,\\[3pt]
\label{FFE_ER3}
Q_{{{\rm FT}}}^{x,\a} &\equiv \frac{(\exp[\T\sigma]-1)P^{xx}_{\a\a}}{2\T (J^x_\a)^2}\geq 1
	\quad\text{and}\\[3pt]
\label{FFE_ER4}
Q_{{{\rm hys}}}^{x,\a} &\equiv \frac{(\exp[\T(\sigma+\tilde{\sigma})/2]-1)
	(P^{xx}_{\a\a}+\tilde{P}^{xx}_{\a\a})}{(J^x_\a +\tilde{J}^x_\a)^2}\geq 1,
\end{align}
\end{subequations}
which were derived earlier for different classes of systems 
\cite{Brandner2017b,Koyuk2019b,Hasegawa2019a,Proesmans2019a,
Potts2019,Falasco2020}.
For concreteness, we focus again on the quantum generator of
Sec.~\ref{SecAdChrPumpTM}, for which the coefficients 
$Q_{{{\rm cl}}}^{\r,1}$, $Q_\w^{\r,1}$, $Q_{{{\rm FT}}}^{\r,1}$ and
$Q_{{{\rm hys}}}^{\r,1}$ can be easily calculated.
The results of this analysis, which are summarized in 
Fig.~\ref{Fig_EarlRes}, lead to the following conclusions.

First, the bound \eqref{FFE_ER1}, which was derived for classical
ballistic transport without periodic driving \cite{Brandner2017b}, is
violated at low frequencies. 
In fact, the coefficient $Q_{{{\rm cl}}}^{\r 1}$ goes to zero for  
$\Delta\mu,\w\rightarrow 0$. 
This behavior can be understood by observing that the driving 
corrections \eqref{FFE_DrCorr} do, in contrast to the Pauli 
corrections \eqref{FFE_PauliCorr}, not vanish in adiabatic 
equilibrium.  
Instead, we have 
\begin{equation}
\Omega^{xx}_\a|_{{{\rm eq}}} = \frac{2}{h}\Eint \;
	(\zeta^x_E)^2\Bigl(\bigl\llangle \abs{\AS^{\a\a}_{E,\df}}^2\bigr\rrangle 
	-\abs{\bigl\llangle \AS^{\a\a}_{E,\df}\bigr\rrangle}^2\Bigr)f'_E
\end{equation}
and thus $\Omega^{\r\r}_1|_{{{\rm eq}}}=2T\varphi/h(1+\varphi)>0$ with
$\varphi=\exp[\mu/T]$ the quantum generator. 
We recall that the notation $\cdots|_{{{\rm eq}}}$ indicates the limit
$F^x_\a,\w\rightarrow 0$. 
In general, the corrections $\Omega^{xx}_\a$ vanish only if the 
driving fields are switched off, that is, if their amplitudes rather 
than their frequency go to zero. 
In this limit, the quantum generator does not produce any current and 
the bound \eqref{FFE_ER1} becomes trivial. 

The second relation \eqref{FFE_ER2} was originally derived for 
periodically modulated Markov jump processes \cite{Koyuk2019b}. 
In linear response, it holds also for coherent transport, provided
that the frozen scattering amplitudes obey the symmetry
$\AS^{\a\b}_{E,\df}=\AS^{\b\a}_{E,\df}$, as is the
case for the quantum generator, cf. Eq.~\eqref{FrABScatAmpl}; 
for details, see App.~\ref{Apx_TURLR}. 
Far from equilibrium, however, the bound \eqref{FFE_ER2} can be 
violated as the second plot in Fig.~\ref{Fig_EarlRes} proves. 

The \emph{fluctuation theorem uncertainty relation} \eqref{FFE_ER3}
follows from a general symmetry argument. 
Specifically, it holds if the joint probability distribution of the
entropy production per cycle $\T\sigma$ and the accumulated current
$\T J^x_\a$ obeys a strong detailed 
fluctuation theorem \cite{Hasegawa2019a}. 
In general, such a relation holds only if no magnetic fields are 
applied to the system and the driving protocols are symmetric under 
time reversal. 
These conditions do not apply to the quantum generator. 
As a result, the third plot in Fig.~\ref{Fig_EarlRes} shows clear 
violations of the bound \eqref{FFE_ER3} at low frequencies. 

The \emph{hysteretic uncertainty relation} \eqref{FFE_ER4} provides a
generalization of Eq.~\eqref{FFE_ER3}. 
The restriction to systems without magnetic fields and symmetric 
driving protocols is thereby removed by considering the actual
thermodynamic process together with its time-reversed counterpart, 
which is denoted by a tilde \cite{Proesmans2019a,Potts2019};
in coherent transport, quantities with a tilde are obtained from 
original ones by replacing $S^{\a\b}_{\En,E}$ with 
$\mathsf{T}_\mathbf{B}\mathsf{T}_{\df}S^{\a\b}_{\En,E}
=S^{\b\a}_{E,\En}$. 
The last plot in Fig.~\ref{Fig_EarlRes} shows that the hysteretic 
uncertainty relation \eqref{FFE_ER4} is not violated for the quantum
generator. 
Whether or not it holds for coherent transport in general remains an
open question at this point. 
In any case, the relation \eqref{FFE_ER4} has only limited predictive
power in practice. 
First, it is inapplicable if the time-reversed process cannot be 
realized. 
Second, since it involves only the symmeterized variable $\sigma+
\tilde{\sigma}$, it cannot be used to bound the entropy production
$\sigma$ of the actual transport process.  
Third, if $\tilde{J}^x_\a = -J^x_\a$, as is the case for $J^\r$ in 
the setup of the quantum generator, Eq.~\eqref{FFE_ER4} reduces to the 
trivial bound $\sigma+\tilde{\sigma}\geq 0$. 
By contrast, our relation \eqref{FFE_TIsigma} still yields a 
non-trivial lower bound on $\sigma$ in this situation. 

In summary, our case study shows that the new bounds
\eqref{FFE_NBDet}, \eqref{FFED_TUR} and \eqref{FFE_TI}, which fully 
incorporate quantum effects and do not rely on special symmetries, go
beyond the earlier results \eqref{FFE_ER}. 
In particular, Eq.~\eqref{FFE_TI} provides a universal bound on 
entropy production that depends only on parameters that can be 
measured in a given setup without having to realize the 
time-reversed transport process. 

\subsection{Autonomous Systems}

In the previous section, we have considered our 
far-from-equilibrium bounds on coherent transport in the context of
earlier results for periodically driven thermodynamic processes. 
However, our theory also has profound implications for 
autonomous systems, which can be established as follows. 
If no time dependent fields are applied to the sample, the driving
correction \eqref{FFE_DrCorr} is zero. 
The bounds \eqref{FFED_TUR} and \eqref{FFE_TIAux} then imply 
the new relation
\begin{equation}\label{FFE_ATI}
\frac{\sigma(P^{\r\r}_{\a\a}+\sigma/2)}{(J^\r_\a)^2} 
	= \sigma (\epsilon^\r_\a)^2 + \frac{\sigma^2}{2(J^\r_\a)^2} 
	\geq \psi^\ast. 
\end{equation}
This result shows that Pauli-blocking as a source of precision can be 
incorporated into the classical uncertainty relation 
\eqref{FFED_TURCl}, which otherwise can be violated in the quantum 
regime \cite{Brandner2017b,Agarwalla2018,Liu2019a}, through the 
universal correction $\sigma^2/2(J^\r_\a)^2$. 
Quite remarkably, this \emph{quantum shift} does not depend on any 
additional parameters. 
As a result, the relation \eqref{FFE_ATI} leads to a bound on entropy 
production, 
\begin{equation}\label{FFE_ATIsigma}
\sigma \geq \abs{J^\r_\a}
	\left(\sqrt{\mathrm{F}_\a^2 +2\psi^\ast}-|\mathrm{F}_\a|\right),
\end{equation}
that involves only the current $J^\r_\a$ and the standard
zero-frequency Fano factor $\mathrm{F}_\a\equiv
P^{\r\r}_{\a\a}/J^\r_\a$. 
Like the relation \eqref{FFE_TIsigma}, this bound holds for any 
coherent multi-terminal conductor, arbitrary electric and thermal 
biases and in presence of external magnetic fields. 

Finally, we note that, because the bound \eqref{FFE_TIAux} on the 
driving corrections $\Omega^{\r\r}_\a$ is independent of the driving
strength, the relations \eqref{FFE_ATI} and \eqref{FFE_ATIsigma}
cannot be recovered from their more general counterparts 
\eqref{FFE_TI} and \eqref{FFE_TIsigma}. 
This observation may indicate that the bounds \eqref{FFE_TI} and
\eqref{FFE_TIsigma} can be further optimized. 
We leave it to future research to probe whether such refined 
bounds exist. 
Furthermore, it remains yet an open problem whether operationally 
accessible bounds on entropy production, similar to the ones given 
in Eqs.~\eqref{FFE_TIsigma} and \eqref{FFE_ATIsigma}, can be 
formulated in terms of individual heat currents. 

\section{Summary}\label{SecConclusion}
Starting from the scattering approach to quantum transport, we have 
developed a universal thermodynamic framework for coherent 
conductors that are driven by thermo-chemical gradients and 
periodically changing electromagnetic fields, whose frequency plays
the role of an additional thermodynamic force. 
Focusing on the linear-response regime, we have shown that this 
framework can be equipped with consistent generalizations of the
Onsager-Casimir relations and the fluctuation-dissipation theorem. 
As our first key result, we have derived a family of 
thermodynamic bounds on matter and heat currents, which go beyond the
second law, hold for arbitrary samples and driving protocols and 
involve only experimentally accessible quantities. 

From a conceptual point of view, these bounds prove that transport 
without dissipation is impossible in conventional coherent conductors,
even in the presence of reversible currents, which generically occur 
in systems with broken time reversal symmetry and do not contribute to
the average entropy production. 
From a practical perspective, they provide powerful tools of 
thermodynamic inference. 
In particular, they make it possible to determine model-independent
lower bounds on the total dissipation rate, which is generally 
difficult to access experimentally, by measuring the electric currents
in the individual terminals of the conductor. 

When applied to mesoscopic devices, our bounds lead to non-trivial
relations between key figures of merit, which provide both universal
benchmarks for theoretical models and a new avenue to estimate the
performance of experimental realizations of nano-machines, whose 
output or input cannot be measured directly. 
We have illustrated this method for parametric quantum pumps and 
adiabatic quantum motors, where in both cases a mechanical quantity 
was bounded in terms of an easy-to-measure electric current.
Beyond these examples, our results are applicable to any system that
can be described as a coherent multi-terminal conductor, including
thermoelectric heat engines and refrigerators, which have gained much 
attention in recent years
\cite{Avron2001,Benenti2011,Brandner2013,Whitney2013,Brandner2013a,
Mazza2014,Whitney2014,Sothmann2014a,Brandner2015,Whitney2014a,
Samuelsson2017,Brandner2017b,Macieszczak2018,Luo2018}. 
Our work thus provides a versatile toolbox for both theoretical and 
experimental studies seeking to develop powerful and efficient quantum
transport devices.  

Going beyond the linear-response regime, in the second part 
of this article, we have derived a family of thermodynamic bounds on
matter and heat currents that hold for arbitrary thermo-chemical 
gradients and driving frequencies. 
These relations imply a thermodynamic uncertainty relation for 
coherent transport, which accounts for quantum effects and periodic
driving in a unified manner, and a fully universal bound on entropy 
production that depends only on experimentally accessible parameters.
This bound, which is our second key result, can be directly used for
thermodynamic inference far from equilibrium. 
In fact, it can be determined by measuring a single electric current,
its zero-frequency noise and, for systems with periodic driving, the
temperature of the corresponding reservoir. 
Thus, we are now able to pass the baton to the experimentalists to 
test our theoretical results and to utilize their wide-ranging 
applicability to further explore the quantum thermodynamics of 
coherent transport devices. 

\begin{acknowledgments}
E.P. acknowledges helpful comments from V. Kashcheyevs. 
K.B. thanks K. Saito for insightful discussions. 
E.P. acknowledges support from the Vilho, Yrj\"o and Kalle Foundation
of the Finnish Academy of Science and Letters through the grant for 
doctoral studies.
C.F. acknowledges support from the Academy of Finland 
(Projects No. 308515 and No. 312299).
M.M. acknowledges the warm hospitality of Aalto University, support 
from the Aalto Science Institute through its Visiting Fellow 
Programme, and support from the Ministry of Education and Science of 
Ukraine (project No. 0119U002565). 
K.B. acknowledges support from Academy of Finland
(Contract No. 296073),  the Japan Society for the Promotion of 
Science through a Postdoctoral Fellowship for Research in Japan
(Fellowship  ID:  P19026), the University of Nottingham through a 
Nottingham Research Fellowship and from  UK  Research  and  Innovation
through  a  Future Leaders Fellowship (Grant Reference: MR/S034714/1).
Authors from Aalto University are associated with the local Centre
for Quantum Engineering.
\end{acknowledgments}
\vspace*{1cm}

\appendix
\newcommand{\de}{\partial_E^{\phantom{\b}}}
\newcommand{\dft}{\df\ix{t}}
\newcommand{\dAS}{\dot{\AS}}
\newcommand{\av}[1]{\bigl\llangle #1\bigr\rrangle}
\newcommand{\tr}[1]{{{{\rm tr}}}\bigl\{#1\bigr\}}
\newcommand{\Sb}{\mathbb{S}}
\newcommand{\Ab}{\mathbb{A}}
\newcommand{\pd}{\phantom{\dagger}}

\section{Kinetic Coefficients}\label{Appx_KinCoeff}

In this appendix, we derive the formulas \eqref{TwKinCoeffExp} for 
the linear-response coefficients in the frequency picture, that is,
with the driving frequency playing the role of an additional affinity.
We then compare this approach with the amplitude picture, where
the mechanical affinity corresponds to the strength of the applied
periodic fields.  

\subsection{Frequency Picture}\label{Appx_KinCoeff_Ad}
We first recall that the thermo-chemical currents and the photon
current are given by 
\begin{subequations}
\begin{align}
&J^x_\a = \frac{1}{h}\Eint \bsum\nsum\Bigl(
	\zeta^{x,\a}_E\d_{\a\b}\d_{n0}
	-\zeta^{x,\a}_{\En}\abs{S^{\a\b}_{\En,E}}^2\Bigr)f^\b_E
	\nonumber\\[3pt]
\label{AppAKinCurr}
&\text{with}\quad\begin{aligned}[t]
		\zeta^{x,\a}_E &=\zeta^x_E-\d_{xq}TF^\r_\a,\\[3pt]
		\zeta^{x,\a}_E &=\zeta^x_E-\d_{xq}T(F^\r_\a-nF^\w)
		\quad\text{and}
	\end{aligned}\\[3pt]
& J^\w =\frac{1}{h}\Eint\absum\nsum n \abs{S^{\a\b}_{\En,E}}^2 f^\b_E,
\end{align}
\end{subequations}
where $\zeta^\r_E\equiv 1$, $\zeta^q_E\equiv E-\mu$ and 
$\zeta^\w=1/\w$.
Inserting the expansion of the Fermi function \eqref{TwFermiFunctExp}
and the expansions 
\begin{subequations}
\begin{align}
&\nsum \abs{S^{\a\b}_{\En,E}}^2 = \mathcal{X}_0 
	+ \frac{TF^\w}{2}\mathcal{X}_1,\\[3pt]
\label{AppAExpTCoefb}
&\nsum n\abs{S^{\a\b}_{\En,E}}^2 = \mathcal{Y}_0 
	+ \frac{TF^\w}{2}\mathcal{Y}_1
\end{align}
\end{subequations}
and collecting first-order contributions in the affinities then yields
$F^x_\a$ and $F^\w$ yields 
\begin{subequations}\label{AppAKinCoef1Aux}
\begin{align}
\label{AppAKinCoef1Aux1}
L^{xy}_{\a\b} &= \frac{1}{h}\Eint\;\zeta^x_E\zeta^y_E\big(
	\d_{\a\b}-\mathcal{X}_0\big)f'_E,\\[3pt]
\label{AppAKinCoef2Aux1}
L^{x\w}_\a &= -\frac{T}{2h}\Eint \bsum\bigl(2\mathcal{Y}_0\d_{xq}
	+\zeta^x_E\mathcal{X}_1\big)f_E,\\[3pt]
\label{AppAKinCoef3Aux1}
L^{\w x}_\b &=\frac{1}{h}\Eint \; \zeta^x_E \asum 
	\mathcal{Y}_0 f'_E,\\[3pt]
\label{AppAKinCoef4Aux1}
L^{\w\w} &= \frac{T}{2h}\Eint \; \absum \mathcal{Y}_1 f_E.
\end{align}
\end{subequations}
Note that the zeroth-order terms vanish as can be shown with the help
of the unitarity conditions \eqref{TwARSumRulesSM} for the frozen
scattering amplitudes. 

Upon inserting the expressions \eqref{Lem_1a}-\eqref{Lem_1d} for 
$\mathcal{X}_0,\mathcal{X}_1,\mathcal{Y}_0$ and $\mathcal{Y}_1$ in
the Eqs.~\eqref{AppAKinCoef1Aux}, we arrive at
\begin{widetext}
\begin{subequations}\label{AppAKinCoefAux2}
\begin{align}
\label{AppAKinCoef1Aux2}
L^{xy}_{\a\b} &=\frac{1}{h}\Eint \; \zeta^x_E\zeta^y_E \Big(\d_{\a\b} 
	-\av{\abs{\AS^{\a\b}_{E,\df}}^2}\Big)f'_E,\\[3pt]
\label{AppAKinCoef2Aux2}
L^{\w x}_\a &= \frac{T}{2h}\Eint\bsum\Big(	
	2\zeta^\w {{{\rm Im}}}\Bigl[
	\av{\dAS^{\a\b}_{E,\df}\AS^{\a\b\ast}_{E,\df}}\Bigr]\d_{xq}
	-4\zeta^x_E{{{\rm Re}}}\Bigl[\av{\AS^{\a\b\ast}_{E,\df}\A^{\a\b}_E}
	\Bigr]
	+\zeta^x_E \zeta^\w\de	{{{\rm Im}}}\Bigl[
	\av{\dAS^{\a\b}_{E,\df}\AS^{\a\b\ast}_{E,\df}}\Bigr]\Bigr)f_E\\
&= \frac{T}{h}\Eint\bsum \Bigl(
	\zeta^\w {{{\rm Im}}}\Bigl[\av{\dAS^{\a\b}_{E,\df}
	\AS^{\a\b\ast}_{E,\df}}\Bigr]\d_{xq}
	+\zeta^x_E\zeta^\w\de {{{\rm Im}}}\Bigl[
	\av{\dAS^{\a\b}_{E,\df}\AS^{\a\b\ast}_{E,\df}}\Bigr]\Bigr)f_E
	\nonumber\\
&= \frac{1}{h}\Eint\;\zeta^x_E\zeta^\w\bsum {{{\rm Im}}}\Bigr[
	\av{\dAS^{\a\b}_{E,\df}\AS^{\a\b\ast}_{E,\df}}\Bigr]f'_E,
	\nonumber\\[3pt]
\label{AppAKinCoef3Aux2}
L^{x\w}_\a &= \frac{1}{h}\Eint \; \zeta^x_E \zeta^\w \bsum
	{{{\rm Im}}}\Bigl[\av{\dAS^{\b\a\ast}_{E,\df}
	\AS^{\b\a}_{E,\df}}\Bigr],\\[3pt]
\label{AppAKinCoef4Aux2}
L^{\w\w}&=  \frac{T}{2h}\Eint \;\zeta^\w \absum \Bigl(
	4{{{\rm Im}}}\Bigl[\av{\dAS^{\a\b\ast}_{E,\df}\A^{\a\b}_E}\Bigr]
	+\zeta^\w\de\av{\abs{\dAS^{\a\b}_{E,\df}}^2}\Bigr)f_E\\
&= \frac{1}{2h}\Eint \; (\zeta^\w)^2\absum
	\av{\abs{\dAS^{\a\b}_{E,\df}}^2}f'_E. \nonumber 
\end{align}
\end{subequations}
\end{widetext}
Here, we have used the sum rule 
\begin{align}
4\bsum{{{\rm Re}}}\Bigl[\av{\AS^{\a\b\ast}_{E,\df}\A^{\a\b}_E}\Bigr]
&= -i\zeta^\w\bsum \de\av{\dAS^{\a\b\ast}_{E,\df}\AS^{\a\b}_{E,\df}}
	\\
&= -\zeta^\w\bsum\de {{{\rm Im}}}\Bigl[
	\av{\dAS^{\a\b}_{E,\df}\AS^{\a\b\ast}_{E,\df}}\Bigr], \nonumber
\end{align}
which follows from Eq.~\eqref{Lem_2b}, in the first step of 
Eq.~\eqref{AppAKinCoef2Aux2};
in the second step, we have applied an integration by parts with
respect to $E$ noting that
$\partial_E \zeta^x_E = \d_{xq}$ and $\partial_E f_E = - f'_E/T$. 
In Eq.~\eqref{AppAKinCoef4Aux2}, we have used that 
\begin{equation}
\absum {{{\rm Im}}}\Big[\av{\dAS^{\a\b\ast}_{E,\df}\A^{\a\b}_E}\Big]
	=0
\end{equation}
as a consequence of the sum rule \eqref{Lem_2c} and the fact that the
period average is a linear operation; 
finally we have integrated the remaining term by parts. 

\subsection{Amplitude Picture}\label{Appx_KinCoeff_Wd}
The amplitude picture provides an alternative way of extending the 
framework of irreversible thermodynamics to periodically driven 
coherent conductors. 
This theory, which is complementary to the one discussed in 
Sec.~\ref{SecFAR}, can be developed along the same lines as for
stochastic systems, see Refs.~\cite{Brandner2015f,Proesmans2015a,
Brandner2016,Proesmans2015}. 
We first divide the single-carrier Hamiltonian that describes the
dynamics inside the conductor into two contributions, 
\begin{equation}\label{App2Hamiltonian}
H_t \equiv H_0 + \Delta U_t,
\end{equation}
where free the Hamiltonian $H_0$ is time-independent, the dynamical
scattering potential $U_t=U_{t+\T}$ accounts for the periodic driving
and the parameter $\Delta$ denotes to the amplitude of this 
perturbation.
The Floquet scattering amplitudes can thus be decomposed as 
\begin{equation}\label{App2ScattExp}
S^{\a\b}_{E\ix{n},E} =\d_{n0} S^{\a\b}_E 
	+ \Delta Z^{\a\b}_{E\ix{n},E} 
\end{equation}
with the first contribution describing elastic scattering events and
the second one accounting for inelastic events, which are induced by
the time-dependent driving.
We now introduce the affinity $F^w\equiv \Delta/T$ and the work flux
\begin{equation}\label{AppAWorkFl}
J^w\equiv \Pi_{{{\rm ac}}}/\Delta
	= \frac{\Delta}{\T}\Eint \absum\nsum n\abs{Z^{\a\b}_{\En,E}}^2f^\b_E,
\end{equation}
where the second expression is obtained by inserting 
Eq.~\eqref{App2ScattExp} into Eq.~\eqref{TwAcPower}. 
With these definitions, the total rate of entropy production 
\eqref{TwEntProdAux} assumes the canonical bilinear form 
\begin{equation}
\sigma=J^wF^w +\asum\xsum J^x_\a F^x_\a. 
\end{equation}

The frequency and the amplitude picture are equivalent on the general 
level. 
Their corresponding linear-response theories, however, describe two
physically different regimes, where either the speed or the strength
of the periodic driving is small. 
This difference becomes particularly clear from the kinetic 
coefficients in the amplitude picture, which are given by 
\footnote{To derive these expressions, insert the first-order 
expansions \eqref{TwFermiFunctExp} and 
$S^{\a\b}_{\En,E}=\d_{n0}S^{\a\b}_E +\Delta\mathcal{Z}^{\a\b}_{En,E}$
into the Eqs. \eqref{AppAKinCurr} and \eqref{AppAWorkFl} and use the
sum rule 
\begin{equation*}
\bsum {{{\rm Re}}}\Bigl[ S^{\a\b}_E \mathcal{Z}^{\a\b\ast}_{E,E}\Bigr]
	=0,
\end{equation*}
which follows from the unitarity condition \eqref{TwSumRulesSMr}.
}
\begin{align}
G^{xy}_{\a\b}&\equiv \partial_{F^y_\b}J^x_\a|_{{{\rm eq}}}
	= \frac{1}{h}\Eint \; \zeta^x_E \zeta^y_E \Bigl(
	\d_{\a\b} - \abs{S^{\a\b}_E}^2\Bigr) f'_E,\\[3pt]
G^{wx}_\a &\equiv \partial_{F^w}J^x_\a |_{{{\rm eq}}}
	= 0,\\[3pt]
G^{xw}_\a &\equiv \partial_{F^x_\a} J^w|_{{{\rm eq}}}
	=0, \\[3pt]
G^{ww} &\equiv \partial_{F^w}J^w|_{{{\rm eq}}}
	= \frac{T}{\T}\Eint \absum n \abs{\mathcal{Z}^{\a\b}_{\En,E}}^2
	f_E
\end{align}
with $\mathcal{Z}^{\a\b}_{\En,E}\equiv Z^{\a\b}_{\En,E}|_{\Delta=0}$.
Hence, the thermo-chemical variables, $F^x_\a$ and $J^x_\a$ decouple
from the mechanical ones, $F^w$ and $J^w$, in the weak-driving regime.
By contrast, this coupling persists in slow-driving regime, provided 
that the driving protocols are not symmetric under time reversal, cf.
Sec.~\ref{Sec_OCRel}. 

\section{Some useful Lemmas}\label{App_Lemmas}
In this appendix, we collect a series of sum rules for the Floquet 
scattering amplitudes in the slow-driving regime along with 
sketches of their proofs.  
Further details on the derivations may be found in 
Refs.~\cite{Moskalets2012,Ludovico2015b}. 

\begin{lemma}\label{Lem_1}
Let $S^{\a\b}_{\En,E}$ be the Floquet scattering amplitudes of a 
multi-terminal conductor that is subject to the periodic driving
fields $\df$ with frequency $\w\equiv 2\pi/\T$, then
\begin{equation}
\nsum \abs{S^{\a\b}_{\En,E}}^2 = \mathcal{X}_0 
	+ \frac{\hbar\w}{2}\mathcal{X}_1, \quad
\nsum n\abs{S^{\a\b}_{\En,E}}^2 = \mathcal{Y}_0 
	+ \frac{\hbar\w}{2}\mathcal{Y}_1
\end{equation}
up to second-order corrections in $\hbar\w$ with
\begin{subequations}\label{Lem_1all}
\begin{align}
\label{Lem_1a}
&\mathcal{X}_0= 
	\av{\abs{\AS^{\a\b}_{E,\df}}^2},\\[3pt]
\label{Lem_1b}
&\mathcal{X}_1=
	4{{{\rm Re}}}\Bigr[\bigl\llangle\AS^{\a\b\ast}_{E,\df}\A^{\a\b}_E
		\bigr\rrangle\Bigr]
	-\frac{1}{\w}\de{{{\rm Im}}}
 		\Bigl[\bigl\llangle	\dot{\AS}^{\a\b}_{E,\df}\AS^{\a\b\ast}_{E,\df}
 		\bigr\rrangle\Bigr],\\[3pt]
\label{Lem_1c}
&\mathcal{Y}_0=
	- \frac{1}{\w}{{{\rm Im}}}\Bigr[\bigl\llangle\dot{\AS}^{\a\b}_{E,\df}
	\AS^{\a\b\ast}_{E,\df}\bigr\rrangle\Bigr],\\[3pt]
\label{Lem_1d}
&\mathcal{Y}_1=
	\frac{4}{\w}{{{\rm Im}}}\Bigl[\bigl\llangle
			\dot{\AS}^{\a\b\ast}_{E,\df}\A^{\a\b}_E\bigr\rrangle\Bigr]
	+\frac{1}{\w^2}\partial_E^{\phantom{\b}}\bigl\llangle\abs{
			\dot{\AS}^{\a\b}_{E,\df}}^2\bigr\rrangle.
\end{align}
Here, $\AS^{\a\b}_{E,\df}$ denotes the frozen scattering amplitudes, 
$\A^{\a\b}_E$ the non-adiabatic corrections defined in 
Eq.~\eqref{TwAdApproxFSA} and $\llangle \cdots\rrangle$ indicates the 
time average over one period $\T$. 
\end{subequations}
\end{lemma}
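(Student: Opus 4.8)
The plan is to insert the low-frequency expansion \eqref{TwAdApproxFSA} of the Floquet scattering amplitudes into the two sums and to evaluate the resulting series with Parseval's theorem. Specializing \eqref{TwAdApproxFSA} to a fixed incoming energy (i.e., $m=0$), one has $S^{\a\b}_{\En,E}\simeq a_n+\tfrac{\hw}{2}n\,b_n+\hw\,c_n$ up to order $(\hw)^2$, where $a_n$, $b_n=\de a_n$ and $c_n$ denote the $n$-th Fourier coefficients in $t$ of the $\T$-periodic functions $\AS^{\a\b}_{E,\df_t}$, $\de\AS^{\a\b}_{E,\df_t}$ and $\A^{\a\b}_{E,t}$, respectively. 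Squaring and retaining only terms at most linear in $\hw$ gives $\abs{S^{\a\b}_{\En,E}}^2\simeq\abs{a_n}^2+\hw\,n\,{\rm Re}[a_n^\ast b_n]+2\hw\,{\rm Re}[a_n^\ast c_n]$. Summing this over $n$, and separately summing $n$ times this over $n$, then reduces the lemma to evaluating the series $\nsum n^k\abs{a_n}^2$ for $k=0,1,2$ and $\nsum n^k\,{\rm Re}[a_n^\ast c_n]$ for $k=0,1$.

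The second step is to compute these series. Since time differentiation multiplies the $n$-th Fourier coefficient by $-in\w$, Parseval's theorem gives at once $\nsum\abs{a_n}^2=\av{\abs{\AS^{\a\b}_{E,\df}}^2}$, $\nsum n\abs{a_n}^2=-\tfrac1\w\,{\rm Im}\,\av{\dAS^{\a\b}_{E,\df}\AS^{\a\b\ast}_{E,\df}}$ and $\nsum n^2\abs{a_n}^2=\tfrac1{\w^2}\av{\abs{\dAS^{\a\b}_{E,\df}}^2}$, which already produce $\mathcal{X}_0$, the leading term $\mathcal{Y}_0$, and part of $\mathcal{Y}_1$. The mixed series follow by noting that $2\,{\rm Re}[a_n^\ast b_n]=\de\abs{a_n}^2$, so that $\nsum n^k\,{\rm Re}[a_n^\ast b_n]=\tfrac12\,\de\nsum n^k\abs{a_n}^2$ supplies the $\de$-terms in $\mathcal{X}_1$ and $\mathcal{Y}_1$, together with the further Parseval identities $\nsum{\rm Re}[a_n^\ast c_n]={\rm Re}\,\av{\AS^{\a\b\ast}_{E,\df}\A^{\a\b}_E}$ and $\nsum n\,{\rm Re}[a_n^\ast c_n]=\tfrac1\w\,{\rm Im}\,\av{\dAS^{\a\b\ast}_{E,\df}\A^{\a\b}_E}$, the latter because the conjugate of $\dAS^{\a\b}_{E,\df_t}$ has Fourier coefficients $\overline{-in\w a_n}=in\w a_n^\ast$. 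Substituting these into the two sums and comparing with the target expressions reproduces \eqref{Lem_1a}--\eqref{Lem_1d}.

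I expect the principal difficulty to be careful bookkeeping rather than anything conceptual. One has to (i) track precisely which contributions are genuinely first order in $\hw$: the cross term between $a_n$ and the $\tfrac{\hw}{2}n\,b_n$ piece of \eqref{TwAdApproxFSA} is linear in $\hw$ even though it carries an extra factor $n$, and upon weighting by $n$ it generates the $\nsum n^2\,{\rm Re}[a_n^\ast b_n]$ series, whereas the squares of the $O(\hw)$ corrections are genuinely $O((\hw)^2)$ and are discarded; (ii) handle the complex conjugates correctly when applying Parseval to the terms containing $\dAS^{\a\b\ast}_{E,\df}$; and (iii) justify interchanging $\de$ with the infinite sums, which is legitimate since for smooth periodic driving the coefficients $a_n$ decay faster than any power of $n$, so the relevant series converge absolutely and uniformly in $E$ — in accordance with the fact that \eqref{TwAdApproxFSA} itself is only meaningful for the $n$ with $n\hw$ inside the relevant energy window. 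A detailed execution of these steps, along the lines of Refs.~\cite{Moskalets2012,Ludovico2015b}, is then routine.
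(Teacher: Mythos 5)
Your proof is correct and takes essentially the same route as the paper's: insert the low-frequency expansion \eqref{TwAdApproxFSA}, perform the sums over $n$ by Fourier analysis, and collect terms to first order in $\hbar\omega$ — your Parseval identities are just the paper's symbolic identity \eqref{Lem_1Aux} combined with integration by parts in $t$, recast at the level of Fourier coefficients. The intermediate sums you list (including $2\,\mathrm{Re}[a_n^{\ast}b_n]=\partial_E|a_n|^2$ and the mixed terms containing the correction $\mathcal{A}^{\alpha\beta}_{E}$) reproduce \eqref{Lem_1a}--\eqref{Lem_1d} with the correct signs and prefactors.
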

\begin{proof}
Insert the low-frequency expansion \eqref{TwAdApproxFSA} for 
$S^{\a\b}_{\En,E}$ and collect all zeroth and first order terms in
$\hbar\w$. 
Perform the sum over all integers $n$ by using the symbolic identity 
\begin{equation}\label{Lem_1Aux}
\nsum n^k e^{i n\w(t-t')}=
	\frac{\T}{(i\w)^k}\partial^k_t\d_{t-t'},
\end{equation}
where $k=0,1,\dots$ and $t,t'\in [0,\T)$. 
Integrate by parts with respect to $t$ as needed. 
\end{proof}

\begin{lemma}\label{Lem_2}
The frozen scattering amplitudes and the non-adiabatic corrections 
obey the joint sum rules 
\begin{subequations}
\begin{align}
\label{Lem_2a}
&\begin{aligned}[t]
	& 2i\w\asum\bigl(\AS^{\a\b}_{E,\dft}\A^{\a\g\ast}_{E,t}
		+\AS^{\a\g\ast}_{E,t}\A^{\a\b}_{E,\dft}\bigr)\\
	&\hspace*{1cm}=\asum\bigl(
		\dAS^{\a\b}_{E,\dft}\de\AS^{\a\g\ast}_{E,\dft}
		-\dAS^{\a\g\ast}_{E,\dft}\de\AS^{\a\b}_{E,\dft}\bigr),
\end{aligned}\\[3pt]
\label{Lem_2b}
&\begin{aligned}[t]
	& 2i\w\asum\bigl(\AS^{\b\a}_{E,\dft}\A^{\g\a\ast}_{E,t}
		+\AS^{\g\a\ast}_{E,\dft}\A^{\b\a}_{E,t}\bigr)\\
	&\hspace*{1cm} =\asum\bigl(
		\dAS^{\g\a\ast}_{E,\df\ix{t}}\de\AS^{\b\a}_{E,\dft}
		-\dAS^{\b\a}_{E,\dft}\de\AS^{\g\a\ast}_{E,\dft}\bigr),
\end{aligned}\\[3pt]
\label{Lem_2c}
&\absum{{{\rm Im}}}\Bigl[\dAS^{\a\b\ast}_{E,\dft}\A^{\a\b}_{E,t}\Bigr]
	=0
\end{align}
\end{subequations}
for any $t\in [0,\T)$. 
\end{lemma}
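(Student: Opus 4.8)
The plan is to read all three sum rules off the exact Floquet unitarity conditions \eqref{TwSumRulesSM} by inserting the low-frequency expansion \eqref{TwAdApproxFSA} of the scattering amplitudes and isolating the terms of a fixed order in $\hw$. Substituting \eqref{TwAdApproxFSA} into the ``incoming'' unitarity \eqref{TwSumRulesSMl}, the terms of zeroth order in $\hw$ account for the entire right-hand side $\d_{\b\g}\d_{m0}$ — they are just the frozen-amplitude unitarity \eqref{TwARSumRulesSM} rewritten as a statement for all Fourier harmonics — so that the terms of first order in $\hw$ must vanish on their own. Forming the product $S^{\a\b}_{\En,E}\,S^{\a\g\ast}_{\En,E\ix{m}}$ to that order one picks up, besides the genuinely non-adiabatic pieces $\A^{\a\b}_{E,t}\AS^{\a\g\ast}_{E,\dft}$ and $\AS^{\a\b}_{E,\dft}\A^{\a\g\ast}_{E,t}$, also the cross terms in which the photon-index weight $\tfrac{n+m}{2}$ belonging to one amplitude multiplies the leading term of the other, i.e.\ expressions of the type $\tfrac{n+m}{2}\bigl(\de\AS^{\a\b}_{E,\dft}\bigr)\AS^{\a\g\ast}_{E,\dft}$ and $\tfrac{n+m}{2}\AS^{\a\b}_{E,\dft}\bigl(\de\AS^{\a\g\ast}_{E,\dft}\bigr)$.

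The central step is to carry out the remaining sum over the internal photon index $n$ using the symbolic identity \eqref{Lem_1Aux}: a power $n^{k}$ standing next to a phase $e^{\pm in\w(t-t')}$ becomes a $k$-fold $t$-derivative of $\d_{t-t'}$, whereupon one of the two period integrals collapses. The external index $m$ of \eqref{TwSumRulesSMl} then survives as a Fourier label, and the requirement that the outcome reproduce $\d_{\b\g}\d_{m0}$ for \emph{every} $m$ promotes the first-order relation to one that holds pointwise in $t$. Concretely, the weights linear in the photon index produce, after the $n$-sum, the time derivatives $\dAS$ seen on the right of \eqref{Lem_2a}, while the energy derivatives $\de\AS$ that accompany them survive because \eqref{TwAdApproxFSA} couples $\tfrac{n+m}{2}$ precisely to $\de\AS$; a handful of integrations by parts in $t$ — using for the $m$-dependent part of the weight that $m\,e^{im\w t}\propto\partial_t e^{im\w t}$ — then recast the derivatives into the displayed shape. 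This gives \eqref{Lem_2a}, and repeating the argument with the ``outgoing'' unitarity \eqref{TwSumRulesSMr} in place of \eqref{TwSumRulesSMl} yields \eqref{Lem_2b}.

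For \eqref{Lem_2c} I would proceed in the same spirit, but specialise the first-order identities to coincident terminal labels and sum over all terminal pairs $\a,\b$ at the outset; the contributions built solely from $\AS$ and $\de\AS$ then cancel thanks to consequences of the frozen unitarity \eqref{TwARSumRulesSM} — in particular $\absum\bigl|\AS^{\a\b}_{E,\dft}\bigr|^{2}=N$ is constant, so its $t$- and $E$-derivatives vanish — together with the explicit form of the non-adiabatic correction $\A^{\a\b}_{E,t}$ that enforces unitarity of \eqref{TwAdApproxFSA} (cf.~Refs.~\cite{Moskalets2012,Ludovico2015b}), leaving exactly $\absum\mathrm{Im}\bigl[\dAS^{\a\b\ast}_{E,\dft}\A^{\a\b}_{E,t}\bigr]=0$. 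The step I expect to be the main obstacle is precisely this bookkeeping inside the first-order expansion: since both amplitudes in each product of \eqref{TwSumRulesSM} carry photon indices, one must keep careful track of which copy of $\tfrac{n+m}{2}$ sits on which amplitude, and the ensuing integrations by parts in $t$ and $E$ — invoking that $\partial_t$ and $\partial_E$ commute with the period average $\av{\cdots}$ — have to be arranged so that the energy derivatives land on exactly the factors displayed in \eqref{Lem_2a}--\eqref{Lem_2c}; by contrast, once \eqref{Lem_1Aux} is in hand the photon sums themselves and the relabelling of terminal indices are routine.
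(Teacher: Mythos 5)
Your plan for \eqref{Lem_2a} and \eqref{Lem_2b} is essentially the paper's own route: insert the expansion \eqref{TwAdApproxFSA} into the unitarity conditions \eqref{TwSumRulesSM}, do the photon sums with \eqref{Lem_1Aux}, and use that the first-order identity must hold for every Fourier label $m$, hence pointwise in $t$. One caveat: the raw first-order relation contains the \emph{symmetric} mixed-derivative combination $\asum\bigl(\AS^{\a\g\ast}_{E,\df}\,\partial_E\dot{\AS}^{\a\b}_{E,\df}+\AS^{\a\b}_{E,\df}\,\partial_E\dot{\AS}^{\a\g\ast}_{E,\df}\bigr)$ together with $m$-proportional leftovers from your integrations by parts; to reach the antisymmetric right-hand side of \eqref{Lem_2a} you must additionally differentiate the frozen unitarity \eqref{TwARSumRulesSM} with respect to both $E$ and $t$ (equivalently, use $\partial_t\partial_E(\mathbb{S}^\dagger\mathbb{S})=0$ with $(\mathbb{S})_{\a\b}=\AS^{\a\b}_{E,\df}$). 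Integration by parts in $t$ alone does not produce that rearrangement, though once this ingredient is added your outline for (a) and (b) goes through.

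The genuine gap is \eqref{Lem_2c}. Specializing \eqref{Lem_2a} to coincident labels $\b=\g$ and summing gives a statement about $\absum{\rm Re}\bigl[\AS^{\a\b\ast}_{E,\df}\A^{\a\b}_{E,t}\bigr]$, i.e.\ about ${\rm tr}\{\mathbb{S}^\dagger\mathbb{A}+\mathbb{A}^\dagger\mathbb{S}\}$ with $(\mathbb{A})_{\a\b}=\A^{\a\b}_{E,t}$, whereas \eqref{Lem_2c} pairs $\A$ with $\dot{\AS}$, i.e.\ concerns ${\rm Im}\,{\rm tr}\{\dot{\mathbb{S}}^\dagger\mathbb{A}\}$; this combination never arises at first order in $\hw$ in the unitarity expansion (it would be of second order), so no specialization or summation of the first-order identities yields it directly. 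Your two fallback arguments do not close this hole: constancy of $\absum\abs{\AS^{\a\b}_{E,\df}}^2$ only gives $\absum{\rm Re}\bigl[\dot{\AS}^{\a\b}_{E,\df}\AS^{\a\b\ast}_{E,\df}\bigr]=0$ and its $E$-analogue, which say nothing about the imaginary part paired with $\A$; and there is no ``explicit form'' of the non-adiabatic correction to insert, because unitarity constrains but does not determine $\A^{\a\b}_{E,t}$ — Lemma~\ref{Lem_2} \emph{is} the statement of those constraints. What is actually needed (and what the paper does) is a separate matrix manipulation: write $2i\absum{\rm Im}\bigl[\dot{\AS}^{\a\b\ast}_{E,\df}\A^{\a\b}_{E,t}\bigr]={\rm tr}\{\dot{\mathbb{S}}^\dagger\mathbb{A}-\mathbb{A}^\dagger\dot{\mathbb{S}}\}$, insert $\mathbb{S}\mathbb{S}^\dagger=\mathbbm{1}$ and use $\dot{\mathbb{S}}\mathbb{S}^\dagger=-\mathbb{S}\dot{\mathbb{S}}^\dagger$ to recast it as ${\rm tr}\{(\mathbb{S}^\dagger\mathbb{A}+\mathbb{A}^\dagger\mathbb{S})\dot{\mathbb{S}}^\dagger\mathbb{S}\}$, eliminate the bracket by the matrix form of \eqref{Lem_2a}, and then verify ${\rm tr}\{(\mathbb{S}'^\dagger\dot{\mathbb{S}}-\dot{\mathbb{S}}^\dagger\mathbb{S}')\dot{\mathbb{S}}^\dagger\mathbb{S}\}=0$ using $\mathbb{S}\mathbb{S}'^\dagger=-\mathbb{S}'\mathbb{S}^\dagger$ (primes denoting $\partial_E$). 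Without this chain, \eqref{Lem_2c} does not follow from your proposal.
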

\begin{proof}
We begin with the sum rule \eqref{Lem_2a}. 
Inserting the expansion \eqref{TwAdApproxFSA} into the unitarity 
condition \eqref{TwSumRulesSMl}, collecting first-order terms in
$\hbar\omega$ and carrying out the sum over $n$ using 
Eq.~\eqref{Lem_1Aux} shows that 
\begin{align}
\label{Lem_2Aux1}
\tint\Bigl\{\asum\Bigl(
	& 2\dAS^{\a\b}_{E,\dft}\de\AS^{\a\g\ast}_{E,\dft}\\
	& + \AS^{\a\b}_{E,\dft}\de\dAS^{\a\g\ast}_{E,\dft}
		+ \AS^{\a\g\ast}_{E,\dft}\de\dAS^{\a\b}_{E,\dft}\nonumber\\
	& -2i\w\bigl(\AS^{\a\b}_{E,\dft}\A^{\a\g\ast}_{E,t}
		+\AS^{\a\g\ast}_{E,\dft}\A^{\a\b}_{E,t}\bigr)\Bigr)\Bigr\}
		e^{im\w t}=0,\nonumber
\end{align}
for every integer $m$ and any indices $\b$ and $\g$. 
This condition can only be met if the expression inside the curly 
brackets vanishes for every $t\in [0,\T)$, that is, if 
\begin{equation}\label{Lem_2Aux2}
2i\w(\Ab^\dagger\Sb + \Ab\Sb^\dagger)
	=2\Sb'^\dagger\dot{\Sb}	+\dot{\Sb}'^\dagger\Sb
	+\Sb^\dagger \dot{\Sb}',
\end{equation}
where we introduced the matrices $\Sb$ and $\Ab$ with elements 
$(\Sb)_{\a\b}\equiv \AS^{\a\b}_{E,\dft}$ and $(\Ab)_{\a\b}\equiv
\A^{\a\b}_{E,t}$ to simplify the notation and primes indicate 
derivatives with respect to $E$. 
Next, we note that the matrix $\Sb$ is unitary, i.e.,
$\Sb^\dagger\Sb=\Sb\Sb^\dagger=\mathbbm{1}$, owing to the 
conditions \eqref{TwARSumRulesSM}. 
Therefore, we have $\partial_t\partial_E\Sb^\dagger\Sb=0$ and writing 
out the derivatives gives 
\begin{equation}
\dot{\Sb}'^\dagger\Sb+\Sb^\dagger\dot{\Sb}'\\
	= -\Sb'^\dagger\dot{\Sb}-\dot{\Sb}^\dagger\Sb'. 
\end{equation}
Inserting this relation into Eq.~\eqref{Lem_2Aux2} yields the result
\begin{equation}\label{Lem_2Aux3}
2i\w(\Ab^\dagger \Sb+\Ab\Sb^\dagger)
	=\Sb'^\dagger\dot{\Sb}-\dot{\Sb}^\dagger\Sb',
\end{equation}
which is equivalent to the first sum rule \eqref{Lem_2a}.
The second sum rule \eqref{Lem_2b} can be derived along the same lines
starting with the unitarity condition \eqref{TwSumRulesSMr}. 

To derive the third sum rule \eqref{Lem_2c}, we first note that 
\begin{align}
2i\absum {{{\rm Im}}}\Bigl[\dAS^{\a\b\ast}_{E,\dft}\A^{\a\b}_{E,t}
	\Bigr]
	& = \tr{\dot{\Sb}^\dagger\Ab-\Ab^\dagger\dot{\Sb}}\\
	& = \tr{\Ab\dot{\Sb}^\dagger\Sb\Sb^\dagger
		-\Ab^\dagger\dot{\Sb}\Sb^\dagger\Sb}
	\nonumber\\
	& = \tr{\bigl(\Sb^\dagger\Ab+\Ab^\dagger\Sb\bigr)
		\dot{\Sb}^\dagger\Sb},
	\nonumber
\end{align}
where we have used that $\dot{\Sb}\Sb^\dagger=-\Sb\dot{\Sb}^\dagger$,
since $\Sb$ is unitary. 
Second, eliminating the matrix $\Ab$ with the help of the relation 
\eqref{Lem_2Aux3} gives 
\begin{align}
-4\w\absum {{{\rm Im}}}\Bigl[\dAS^{\a\b\ast}_{E,\dft}\A^{\a\b}_{E,t}
	\Bigr]
	&= \tr{\bigl(\Sb'^\dagger\dot{\Sb}-\dot{\Sb}^\dagger\Sb'\bigr)
		\dot{\Sb}^\dagger\Sb}\\
	&= \tr{\bigl(\Sb^\dagger\Sb'\dot{\Sb}^\dagger\Sb
		-\dot{\Sb}^\dagger\Sb'\bigr)\dot{\Sb}^\dagger\Sb}
		\nonumber\\
	&= 0,\nonumber
\end{align}
where the second line follows by noting that $\Sb\Sb'^\dagger=
-\Sb'\Sb^\dagger$ and $\dot{\Sb}\Sb^\dagger=-\Sb\dot{\Sb}^\dagger$.
\end{proof}

\begin{lemma}\label{Lem_3}
The frozen scattering amplitudes can be chosen such that they obey the
sum rule 
\begin{equation}\label{Lem_3a}
\absum \bigl\llangle\dAS^{\a\b}_{E,\df}\AS^{\a\b\ast}_{E,\df}
	\bigr\rrangle=0.
\end{equation}
\end{lemma}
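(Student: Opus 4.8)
The plan is to read \eqref{Lem_3a} not as an identity but as a condition that \emph{fixes} the overall phase of the frozen scattering matrix, and to show this phase can always be adjusted so that it holds. A scattering amplitude is determined only once a phase reference has been chosen, so the frozen amplitudes admit the gauge freedom $\AS^{\a\b}_{E,\df}\mapsto e^{i\Lambda_{E,\df}}\AS^{\a\b}_{E,\df}$ with an arbitrary real phase $\Lambda_{E,\df}$; taking $\Lambda$ invariant under $\mathbf{B}\to-\mathbf{B}$ preserves \eqref{TwARTimeReversal}, the unitarity conditions \eqref{TwARSumRulesSM} are manifestly untouched, and — after lifting the same redefinition to the full Floquet amplitudes and absorbing the induced change into the correction terms $\A^{\a\b}_{E,t}$ — so is the low-frequency expansion \eqref{TwAdApproxFSA}. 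I would first compute the response of the left-hand side of \eqref{Lem_3a}. Differentiating and using $\absum\abs{\AS^{\a\b}_{E,\df}}^2=N$, which follows from \eqref{TwARSumRulesSM}, gives
\[
\absum\av{\dAS^{\a\b}_{E,\df}\AS^{\a\b\ast}_{E,\df}}\ \longmapsto\ \absum\av{\dAS^{\a\b}_{E,\df}\AS^{\a\b\ast}_{E,\df}}+iN\av{\partial_t\Lambda_{E,\df_t}},
\]
so it suffices to choose $\Lambda$ so that the added term cancels the original sum.

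Next I would identify that original sum. Collecting the frozen amplitudes into the unitary matrix $\Sb_{E,\df}$ with $(\Sb)_{\a\b}=\AS^{\a\b}_{E,\df}$, one has $\absum\dAS^{\a\b}_{E,\df_t}\AS^{\a\b\ast}_{E,\df_t}=\tr{\dot{\Sb}\Sb^\dagger}=\partial_t\ln\det\Sb_{E,\df_t}=i\,\partial_t\arg\det\Sb_{E,\df_t}$, which is purely imaginary; its period average equals $i\T^{-1}\bigl[\arg\det\Sb_{E,\df_t}\bigr]_0^\T=iw\w$, where $w$ is the winding number of $\det\Sb_{E,\df}$ along the closed control path $\Gamma$, an integer independent of $E$ by continuity. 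Hence \eqref{Lem_3a} holds for the redefined amplitudes precisely when $\av{\partial_t\Lambda}=-w\w/N$; since $e^{i\Lambda}$ must be single-valued along $\Gamma$ one has $\av{\partial_t\Lambda}\in\w\mathbb{Z}$, so a genuinely global phase does the job whenever $N$ divides $w$, in particular whenever $w=0$ — which is the case for all the models treated in the paper (for the quantum generator of Sec.~\ref{SecAdChrPumpTM}, e.g., $\det\Sb_{E,\phi}=e^{2i\chi_E}$ is independent of the control parameter, so $w=0$ and \eqref{Lem_3a} already holds without any redefinition, a convenient consistency check).

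The one-line computation above is routine; the real content is twofold. First, one must justify the gauge covariance of the adiabatic expansion, i.e.\ that the phase redefinition lifts consistently to the Floquet scattering matrix and can be absorbed into $\A^{\a\b}_{E,t}$ without spoiling \eqref{TwSumRulesSM}, \eqref{TwTimeReversal} or \eqref{TwAdApproxFSA}; I expect this bookkeeping to be the bulk of the work. Second, one must deal with the winding obstruction when $N\nmid w$: here a single $\mathbf{B}$-even phase is insufficient, and one has to fall back on lead-resolved phases $\AS^{\a\b}_{E,\df}\mapsto e^{i\phi^\a_{E,\df}}\AS^{\a\b}_{E,\df}$, which produce $\,i\asum\av{\partial_t\phi^\a}$ in place of $iN\av{\partial_t\Lambda}$ (using the row sum rule $\sum_\b\abs{\AS^{\a\b}_{E,\df}}^2=1$) and thus restore \eqref{Lem_3a} unconditionally once $\sum_\a$(winding of $\phi^\a$)$=-w$, at the price of a lead-dependent modification of \eqref{TwARTimeReversal}; isolating the cleanest formulation of this fallback is the one genuinely delicate point.
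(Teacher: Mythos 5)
You take essentially the paper's route: the left-hand side of \eqref{Lem_3a} is the period average of $\partial_t\ln\det$ of the frozen scattering matrix, hence $i\w$ times an integer winding number (your $w$; the paper's $M=\sum_k M_k$, obtained from the spectral decomposition of the matrix rather than from Jacobi's formula), which is then removed by a phase redefinition of the amplitudes. Your bookkeeping is correct and in one respect sharper than the paper's: a $\T$-periodic \emph{global} phase shifts the sum by $-iN\w\times(\text{integer})$, so it can cancel $i\w w$ only if $N$ divides $w$. The paper's proof simply applies $\AS^{\a\b}_{E,\df}\rightarrow\AS^{\a\b}_{E,\df}\,e^{-iM\w t}$, which by your (correct) computation shifts the sum by $-iNM\w$ rather than $-iM\w$; so the divisibility obstruction you identify is real for a purely global phase and is glossed over there, and the honest completion is exactly a lead-resolved (diagonal) phase of the kind you sketch.

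The genuine gap in your proposal is that this general case $N\nmid w$ is left as a sketch and flagged as delicate, when it in fact closes with a two-line computation already implicit in your fallback: multiply a single row, say $\a=1$, by the $\T$-periodic phase $e^{-iw\w t}$ (single-valued since $w$ is an integer). Both unitarity conditions \eqref{TwARSumRulesSM} are manifestly preserved, and the sum in \eqref{Lem_3a} shifts by $-iw\w\sum_{\b}\bigl\llangle\abs{\AS^{1\b}_{E,\df}}^2\bigr\rrangle=-iw\w$ by the row sum rule, cancelling the winding for any $N$ and any $w$. The remaining concerns you raise---preserving \eqref{TwARTimeReversal} and lifting the redefinition to the Floquet amplitudes consistently with \eqref{TwSumRulesSM} and \eqref{TwAdApproxFSA}---go beyond what the lemma asserts (only the existence of a choice of frozen amplitudes obeying \eqref{Lem_3a}) and beyond what the paper's own proof verifies; in particular, \eqref{TwARTimeReversal} relates two distinct physical systems whose amplitudes can be gauged in a matched, transposed way if one wishes to retain it. So: same approach as the paper, correct as far as it goes, and even a useful sharpening of the paper's one-line gauge fix, but incomplete as submitted, with the missing step being small and essentially contained in your own fallback.
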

\begin{proof}
The frozen scattering matrix admits the spectral decomposition 
$\Sb=\sum_k |\phi_k\rangle\langle\phi_k| e^{i\phi_k}$, where the 
$\phi_k$ are real and the $|\phi_k\rangle$ are normalized orthogonal
eigenvectors. 
Since $\Sb$ is a $\T$-periodic function of time, the phases 
$\phi_k$ further have to obey the condition
$\llangle\dot{\phi}_k\rrangle=\w M_k$ with $M_k$ being an integer.   
It follows that
\begin{align}
\absum\bigl\llangle\dot{\AS}^{\a\b}_{E,\df}\AS^{\a\b\ast}_{E,\df}
	\bigr\rrangle 
	& = \av{\tr{\dot{\Sb}\Sb^\dagger}}\\
	& = \sum\nolimits_k \av{i\dot{\phi}_k 
		+ \partial_t \langle\phi_k|\phi_k\rangle}\nonumber\\
	& = i\w \sum\nolimits_k M_k\equiv i \w M,\nonumber
\end{align}
Thus, the sum rules \eqref{Lem_3a} can be enforced by applying the 
gauge transformation $\AS^{\a\b}_{E,\df}\rightarrow 
\AS^{\a\b}_{E,\df}e^{-iM\w t}$ to the frozen scattering amplitudes.
\end{proof}

\begin{lemma}\label{Lem4}
Let $S^{\a\b}_{\En,E}$ be a set of Floquet scattering amplitudes that 
obey the unitarity conditions \eqref{TwSumRulesSM} and $f^\a_E$ the 
Fermi function \eqref{TwFermiF}, then
\begin{align}
\sigma &\equiv \frac{1}{h}\Eint\absum\nsum 
	\left(\frac{E_n-\mu_\a}{T_\a}-\frac{E-\mu_\b}{T_\b}\right)
	\abs{S^{\a\b}_{\En,E}}^2 f^\b_E\nonumber\\
\label{Lem4_0}
&\geq \frac{2}{h}\Eint \absum\nsum \abs{S^{\a\b}_{\En,E}}^2
	(f^\a_{\En}-f^\b_E)^2.
\end{align}
\end{lemma}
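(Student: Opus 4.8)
The plan is to recast $\sigma$ as a sum of two-point (Bernoulli) relative entropies weighted by the Floquet transition probabilities and then to apply Pinsker's inequality to each term. For $p,q\in(0,1)$ write $d(p,q)\equiv p\ln(p/q)+(1-p)\ln\bigl[(1-p)/(1-q)\bigr]$ for the Kullback--Leibler divergence of the Bernoulli distributions with means $p$ and $q$. Using the identity $\ln\bigl[f^\a_E/(1-f^\a_E)\bigr]=(\mu_\a-E)/T_\a$ and abbreviating $g^\a_E\equiv-\ln(1-f^\a_E)$, a short computation gives
\[
 d\bigl(f^\b_E,f^\a_{\En}\bigr)
 =\Bigl(\tfrac{\En-\mu_\a}{T_\a}-\tfrac{E-\mu_\b}{T_\b}\Bigr)f^\b_E
 -g^\b_E+g^\a_{\En},
\]
so that the summand in the definition of $\sigma$ equals $\abs{S^{\a\b}_{\En,E}}^2\bigl(d(f^\b_E,f^\a_{\En})+g^\b_E-g^\a_{\En}\bigr)$.

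Next I would show that the two $g$-contributions cancel after the energy integration, using only the unitarity conditions \eqref{TwSumRulesSM}. The $g^\b_E$ piece gives $\tfrac1h\Eint\sum_{\a\b}\nsum g^\b_E\abs{S^{\a\b}_{\En,E}}^2=\tfrac1h\Eint\sum_\b g^\b_E$, since $\sum_\a\nsum\abs{S^{\a\b}_{\En,E}}^2=1$ for every incoming channel by \eqref{TwSumRulesSMl}. For the $g^\a_{\En}$ piece I would substitute $E\mapsto E-n\hw$ in each summand; this is legitimate because the Floquet amplitudes vanish whenever an energy argument is negative, so the shifted integration range reaches back to $0$. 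Relabelling $n\mapsto-n$ then turns it into $\tfrac1h\Eint\sum_\a g^\a_E\bigl(\sum_\b\nsum\abs{S^{\a\b}_{E,E+n\hw}}^2\bigr)=\tfrac1h\Eint\sum_\a g^\a_E$ by the complementary sum rule \eqref{TwSumRulesSMr}. As all summands are non-negative, the interchange of sum and integral is justified by Tonelli's theorem, and subtracting the two equal contributions leaves the manifestly non-negative representation $\sigma=\tfrac1h\Eint\sum_{\a\b}\nsum\abs{S^{\a\b}_{\En,E}}^2 d(f^\b_E,f^\a_{\En})$ --- essentially the standard way of exhibiting $\sigma\ge0$ \cite{Brandner2020}.

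The last ingredient is the Bernoulli case of Pinsker's inequality, $d(p,q)\ge 2(p-q)^2$ for $p,q\in(0,1)$. I would prove it by noting that the function $q\mapsto d(p,q)-2(q-p)^2$ vanishes at $q=p$ and has derivative $\tfrac{q-p}{q(1-q)}-4(q-p)=(q-p)\,\tfrac{(1-2q)^2}{q(1-q)}$, which carries the sign of $q-p$; hence $q=p$ is its global minimum on $(0,1)$ and the function stays non-negative. Applying this bound with $p=f^\b_E$ and $q=f^\a_{\En}$ under the energy integral, and using $(f^\b_E-f^\a_{\En})^2=(f^\a_{\En}-f^\b_E)^2$, reproduces exactly \eqref{Lem4_0}.

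The only genuinely delicate step is the energy shift in the $g^\a_{\En}$ contribution: one has to track the convention $S^{\a\b}_{\En,E}=0$ for negative energy arguments carefully so that the shifted integrals align precisely with \eqref{TwSumRulesSMr}. Everything else --- the algebraic identity for $d$ and the short convexity argument for Pinsker --- is routine. Note also that the diagonal elastic terms ($n=0$, $\a=\b$) contribute nothing on either side of \eqref{Lem4_0}, so the distinction between $S^{\a\b}_{\En,E}$ and the reduced amplitudes $\hat{S}^{\a\b}_{\En,E}$ of Sec.~\ref{Sec_FFE} is immaterial here.
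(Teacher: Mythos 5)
Your proposal is correct and takes essentially the same route as the paper: the paper's $H^{\a\b}_{\En,E}=\eta[f^\a_{\En}]-\eta[f^\b_E]-\eta'[f^\a_{\En}]\bigl(f^\a_{\En}-f^\b_E\bigr)$ is exactly your Bernoulli divergence $d(f^\b_E,f^\a_{\En})$, and its Taylor-remainder estimate $-\eta''[g]\geq 4$ is precisely the two-point Pinsker bound you establish by the monotonicity argument. The only difference is cosmetic: you re-derive the rewriting of $\sigma$ (cancellation of the $g$-terms via the unitarity sum rules and the energy shift), whereas the paper imports that identity from Ref.~\cite{Brandner2020}.
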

\begin{proof}
We first observe that, using the conditions \eqref{TwSumRulesSM}, 
$\sigma$ can be expressed as \cite{Brandner2020}
\begin{equation}\label{Lem4_1}
\sigma = \frac{1}{h}\Eint\absum\nsum \abs{S^{\a\b}_{\En,E}}^2
	H^{\a\b}_{\En,E}
\end{equation}
with 
\begin{equation}
H^{\a\b}_{\En,E} \equiv \eta[f^\a_{\En}]-\eta[f^\b_E]
	-\eta'[f^\a_{\En}]\bigl(f^\a_{\En}-f^\b_E\bigr),
\end{equation}
$\eta[x]\equiv -x\ln[x] -(1-x)\ln[1-x]$ and primes indicating
derivatives. 
Next, we note that, by Taylor's theorem, there exists a $g$ between
$f^\a_{\En}$ and $f^\b_E$ such that 
\begin{equation}
H^{\a\b}_{\En,E} = - \frac{\eta''[g]}{2}(f^\a_{\En}-f^\b_E)^2.
\end{equation}
Since the Fermi function \eqref{TwFermiF} takes only values between 
$0$ and $1$, it follows that $g\in[0,1]$. 
Therefore, we have $-\eta''[g]= 1/g+1/(1-g)\geq 4$ and 
$H^{\a\b}_{\En,E}\geq 2 (f^\a_{\En}-f^\b_E)^2$. 
Inserting this bound into Eq.~\eqref{Lem4_1} completes the proof. 
\end{proof}

\begin{lemma}\label{Lem5}
Let $S^{\a\b}_{\En,E}$ be a set of Floquet scattering amplitudes that 
obey the unitarity conditions \eqref{TwSumRulesSM} and  
$f^\a_E$ the Fermi function \eqref{TwFermiF}. 
Then, for $\hat{S}^{\a\b}_{\En,E}\equiv 
(1-\d_{n0}\d_{\a\b})S^{\a\b}_{\En,E}$ and $f'^\a_E\equiv 
f^\a_E(1-f^\a_E)$, we have 
\begin{equation}\label{Lem5_1}
\Psi^{\r\r}_\a \equiv \frac{1}{h}\Eint \bsum\nsum 
	\abs{\hat{S}^{\a\b}_{\En,E}}^2(f^\a_{\En}-f^\b_E)^2
	\leq \frac{\sigma}{2}
\end{equation}
and
\begin{equation}\label{Lem5_2}
\Omega^{\r\r}_\a \equiv \frac{2}{h}\Eint\nsum 
	\abs{\hat{S}^{\a\a}_{\En,E}}^2f'^\a_E
	\leq \frac{2T_\a}{h},
\end{equation}
where $\sigma$ is defined in Eq.~\eqref{Lem4_0}. 
\end{lemma}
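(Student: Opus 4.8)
The plan is to prove the two inequalities independently; both should come out as short consequences of material already established above --- Lemma~\ref{Lem4} for the first, and the unitarity conditions \eqref{TwSumRulesSM} for the second.

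For \eqref{Lem5_1}, the key observation is that $\hat{S}^{\a\b}_{\En,E}$ and $S^{\a\b}_{\En,E}$ differ only in the single term with $n=0$ and $\a=\b$, and that precisely this term carries the weight $(f^\a_E-f^\a_E)^2=0$ in the sum defining $\Psi^{\r\r}_\a$. Hence $\absum\nsum\abs{\hat{S}^{\a\b}_{\En,E}}^2(f^\a_{\En}-f^\b_E)^2=\absum\nsum\abs{S^{\a\b}_{\En,E}}^2(f^\a_{\En}-f^\b_E)^2$. Summing the definition of $\Psi^{\r\r}_\a$ over all terminals $\a$ and then applying Lemma~\ref{Lem4} yields $\asum\Psi^{\r\r}_\a\leq\sigma/2$; since each $\Psi^{\r\r}_\a$ is a sum of non-negative terms, the stated bound follows term by term.

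For \eqref{Lem5_2}, I would control the photon sum by unitarity. We have $\abs{\hat{S}^{\a\a}_{\En,E}}^2=(1-\d_{n0})\abs{S^{\a\a}_{\En,E}}^2\leq\abs{S^{\a\a}_{\En,E}}^2$, and specializing the unitarity condition \eqref{TwSumRulesSMl} by equating its two free terminal indices and setting $m=0$ gives $\gsum\nsum\abs{S^{\g\a}_{\En,E}}^2=1$; keeping only the $\g=\a$ term therefore shows $\nsum\abs{\hat{S}^{\a\a}_{\En,E}}^2\leq1$ for every $\a$. Consequently $\Omega^{\r\r}_\a\leq\frac{2}{h}\Eint f'^\a_E$, and the remaining integral is elementary: writing $f'^\a_E=f^\a_E(1-f^\a_E)=-T_\a\,\partial_E f^\a_E$ gives $\Eint f'^\a_E=T_\a f^\a_0\leq T_\a$, which completes the argument.

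I do not anticipate a genuine obstacle. The substantive content is carried entirely by Lemma~\ref{Lem4}, and everything else is bookkeeping; the only points that need a little care are tracking which terminal and photon indices are summed when specializing \eqref{TwSumRulesSMl}, and noticing that passing from $S$ to $\hat{S}$ changes nothing in the quantity bounded through Lemma~\ref{Lem4} precisely because the omitted term has vanishing weight.
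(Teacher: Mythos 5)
Your proposal is correct and follows essentially the same route as the paper: for \eqref{Lem5_1} it drops the hat (the omitted $n=0$, $\a=\b$ term carries zero weight), bounds the single-terminal sum by the full double sum and invokes Lemma~\ref{Lem4}; for \eqref{Lem5_2} it uses the unitarity condition to get $\nsum\abs{\hat{S}^{\a\a}_{\En,E}}^2\leq 1$ and evaluates $\Eint f'^\a_E = T_\a\varphi_\a/(1+\varphi_\a)\leq T_\a$, exactly as in the paper's proof.
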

\begin{proof}
For Eq.~\eqref{Lem5_1}, observe that 
\begin{align}
2\Psi^{\r\r}_\a &\leq \frac{2}{h}\Eint\absum\nsum
	\abs{\hat{S}^{\a\b}_{\En,E}}^2
	(f^\a_{\En}-f^\b_E)^2\\
&= \frac{2}{h}\Eint\absum\nsum\abs{S^{\a\b}_{\En,E}}^2
	(f^\a_{\En}-f^\b_E)^2\nonumber
\end{align}
and use Lemma~\ref{Lem4}.
For Eq.~\eqref{Lem5_2}, note that $f'^\a_E\geq 0$ and, by the 
unitarity conditions \eqref{TwSumRulesSM}, $\sum_n
\abs{\hat{S}^{\a\a}_{\En,E}}^2\leq\bsum\nsum \abs{S^{\b\a}_{\En,E}}^2=
1$ such that 
\begin{equation}
\Omega^{\r\r}_\a \leq \frac{2}{h}\Eint \; f'_E = \frac{2T_\a}{h}
	\frac{\varphi_\a}{1+\varphi_\a}\leq \frac{2T_\a}{h},
\end{equation}
where $\varphi_\a\equiv \exp[\mu_\a/T_\a]$. 
\end{proof}

\section{Thermodynamic Uncertainty Relation in Linear Response}\label{Apx_TURLR}

In this appendix, we show that the thermodynamic uncertainty relation
\begin{equation}\label{AppC_TUR}
\frac{\sigma P^{xx}_{\a\a}}{(J^x_\a -\w\partial_\w J^x_\a)^2}
	\geq 2,
\end{equation}
which was derived in Ref.~\cite{Koyuk2019b} for periodically driven 
Markov jump processes, holds for coherent transport in linear 
response if the frozen scattering amplitudes obey the symmetry 
\begin{equation}\label{AppC_Sym}
\AS^{\a\b}_{E,\df}=\AS^{\b\a}_{E,\df}. 
\end{equation} 

We first recall Sec.~\ref{Sec_KinCoeff} and note that the symmetry 
\eqref{AppC_Sym} implies $L^{x\w}_\a=-L^{\w x}_\a$. 
As a result, the total rate of entropy production 
\eqref{TwSecondLawCan} can be divided into two contributions, 
$\sigma= \sigma_{{{\rm th}}}+\sigma_{{{\rm ac}}}$
with 
\begin{subequations}
\begin{align}
\label{AppC_EntDec1}
\sigma_{{{\rm th}}} &\equiv \absum\xysum L^{xy}_{\a\b} F^x_\a F^y_\b\\
&= \frac{1}{2h}\Eint \absum
	\bigl\llangle \abs{\hat{\AS}_{E,\df}^{\a\b}}^2\bigr\rrangle
	\bigl(Y^{\a\b}_E\bigr)^2 f'_E
    \geq 0\quad\text{and}\nonumber\\[3pt]
\label{AppC_EntDec2}
\sigma_{{{\rm ac}}} &\equiv L^{\w\w} (F^\w)^2 \geq 0,
\end{align}
\end{subequations}
where $\hat{\AS}^{\a\b}_{E,\df}\equiv(1-\d_{\a\b})\AS^{\a\b}_{E,\df}$, 
$Y^{\a\b}_E\equiv \sum_x \zeta^x_E (F^x_\a-F^x_\b)$ 
and the second line in Eq.~\eqref{AppC_EntDec1} follows by using the
unitarity conditions \eqref{TwARSumRulesSM}.
We now consider the quadratic form 
\begin{align}\label{AppC_QuadF}
\Xi^x_\a &\equiv \sigma_{{{\rm th}}}+2G\bsum\ysum L^{xy}_{\a\b}F^y_\b 
	+ G^2L^{xx}_{\a\a}\\
&= \frac{1}{2h}\Eint \sum\nolimits_{\b\neq\a}\Bigl\{
	\begin{aligned}[t] &\sum\nolimits_{\g\neq\a}
		\bigl\llangle \abs{\hat{\AS}^{\b\g}_{E,\df}}^2\bigr\rrangle
		\bigl(Y^{\a\b}_E\bigr)^2\\
	&+2\bigl\llangle \abs{\hat{\AS}^{\a\b}_{E,\df}}^2\bigr\rrangle
	\bigl(\zeta^x_E G +Y^{\a\b}_E\bigr)^2\Bigr\} f'_E, 
	\end{aligned}\nonumber
\end{align}  
where $G$ is real and otherwise arbitrary. 
The second line in Eq.~\eqref{AppC_QuadF}, which follows from 
the unitarity conditions \eqref{TwARSumRulesSM} and the symmetry 
\eqref{AppC_Sym}, proves that $\Xi^x_\a$ is positive semi-definite. 
Next, recalling the kinetic equations \eqref{TwARKinEq} and the 
fluctuation-dissipation theorem \eqref{TwFDT}, yields 
\begin{equation}
\bsum\ysum L^{xy}_{\a\b} F^y_\b = J^x_\a - \w\partial_\w J^x_\a,
	\quad 	L^{xx}_{\a\a} = P^{xx}_{\a\a}\eq/2. 
\end{equation}
Inserting these identities into the definition of $\Xi^x_\a$ and 
noting that $\sigma_{{{\rm th}}}\leq\sigma$ since $\sigma_{{{\rm ac}}}
\geq 0$ shows that 
\begin{equation}
\sigma +2G(J^x_\a - \w\partial_\w J^x_\a) 
	+G^2 P^{xx}_{\a\a}\eq/2\geq \Xi^x_\a \geq 0 
\end{equation}
for any $G$.
Minimizing the left-hand side of this inequality with respect to $G$ 
finally gives the thermodynamic uncertainty relation \eqref{AppC_TUR}.

\section{Quantum Generator}\label{Apx_QGen}
\begin{figure}
\includegraphics[scale=1.05]{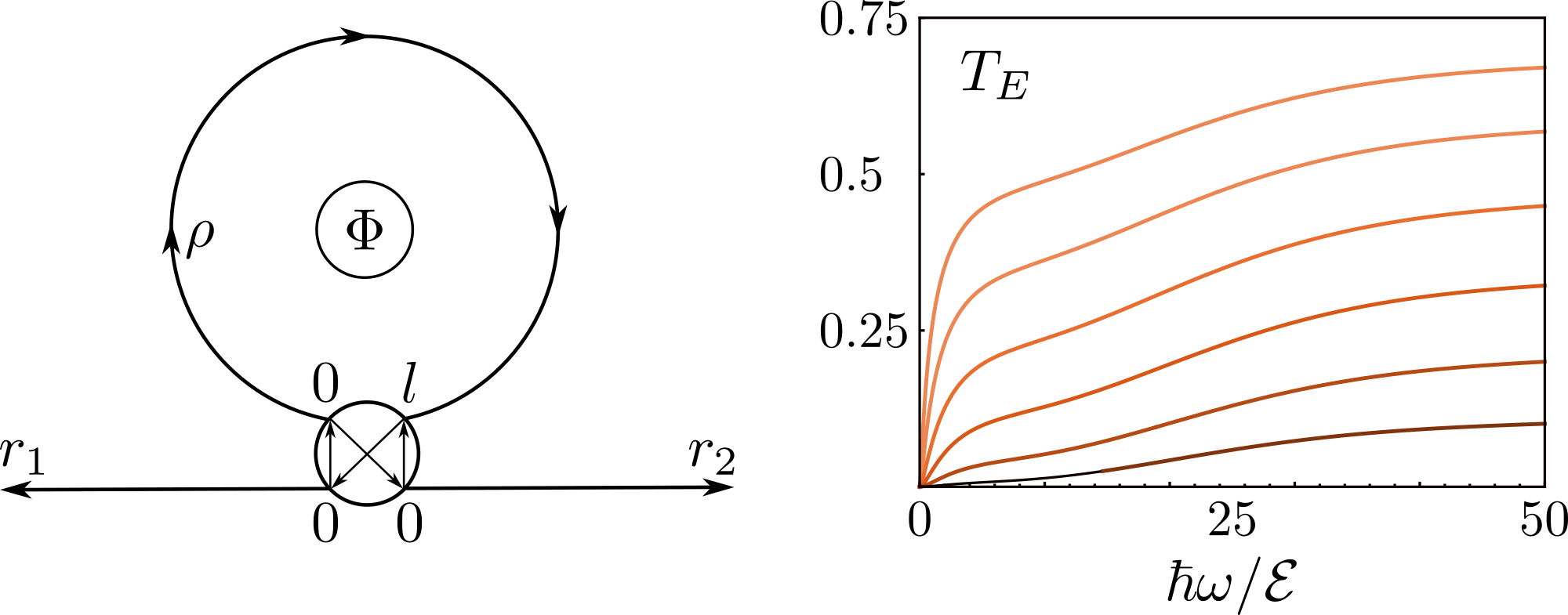}
\caption{Scattering Theory of a Quantum Generator.
The system consists of two leads, an ideal beam splitter and a loop 
with length $l$, which is subject to the magnetic flux $\Phi$.
The leads and the loop are parameterized in terms of the coordinates
$r_1,r_2\in [0,\infty)$ and $\rho\in [0,l]$. 
Incoming waves on the leads $1$ and $2$ enter the loop at the 
positions $\r=0$ and $\r=l$, respectively;
clockwise and counter-clockwise propagating waves on the loop exit at 
the positions $\r=l$ and $\r=0$ and become outgoing waves on the leads 
$1$ and $2$, respectively. 
The plot shows how the transmission function \eqref{AppC_TFun} depends
on the driving frequency $\w$, i.e, the rate at which the magnetic
flux $\Phi$ changes, for $E/\mathcal{E}=2^0/10, \dots, 2^5/10$ 
from top to bottom. 
\label{Fig_AQG}}
\end{figure}

In this appendix, we derive the exact Floquet scattering amplitudes 
for the quantum generator discussed in Secs.~\ref{SecAdChrPumpTM} and
\ref{Sec_FFEQGR} by solving the corresponding Floquet-Schr\"odinger
equation in position representation \cite{Brandner2020}. 
We further provide explicit formulas for the quantities entering 
the relations \eqref{FFE_ER} and \eqref{FFEQJ_BFig}.

\subsection{Scattering Amplitudes}
The system is parameterized according to Fig.~\ref{Fig_AQG}. 
An incoming carrier with energy $E>0$ in the terminal $\a$ is 
described by the scattering state $|\phi^\a_{E,t}\rangle$. 
On the leads, the wave function of this state is given by 
\begin{subequations}\label{AppC_LAn}
\begin{align}
\phi^\a_{E,t}[r_1]&= \d_{\a 1}w^-_E[r_1]
	+\nsum C^{1\a}_{\En,E}w^+_{\En}[r_1] e^{-in\w t},\\[3pt]
\phi^\a_{E,t}[r_2]&= \d_{\a 2}w^-_E[r_2]
	+\nsum C^{2\a}_{\En,E}w^+_{\En}[r_2] e^{-i n\w t}, 
\end{align}
\end{subequations}
where the summations run over all integers, the $C^{\a\b}_{\En,E}$ are
yet undetermined complex coefficients and
\begin{equation}
w^\pm_E[r]\equiv  w_E \exp[\pm i k_E r]
\end{equation}
denotes a normalized incoming $(-)$ or outgoing $(+)$ plane wave with
$w_E\equiv \sqrt{M/2\pi k_E \hbar^2}$ and 
$k_E\equiv\sqrt{2ME/\hbar^2}$ \cite{Brandner2020}; 
we recall that $M$ denotes the carrier mass. 
On the loop, the scattering wave function obeys the 
Floquet-Schr\"odinger equation 
\begin{equation}\label{AppC_FSE}
E\phi^\a_{E,t}[\rho]= 
	-\frac{\hbar^2}{2M}\bigl(\partial_\rho -i\w t/l\bigr)^2
	\phi^\a_{E,t}[\rho]
	-i\hbar\partial_t\phi^\a_{E,t}[\rho], 
\end{equation}
with respect to the boundary conditions 
\begin{subequations}\label{AppC_BCP}
\begin{align}
&\phi^1_{E,t}[\r]\big|_{\r=0} = w_E
	+\nsum C^{21}_{\En,E}w_{\En} e^{-i n\w t}\\[3pt]
&\phi^1_{E,t}[\r]\big|_{\r=l} =
	\nsum C^{11}_{\En,E} w_{\En}  e^{-i n\w t},\\[3pt]
&\hat{P}_t\phi^1_{E,t}[\r]\big|_{\r=0}
	= w'_E-\nsum C^{21}_{\En,E} w'_{\En} e^{-in\w t},
	\label{AppC_BCPc}\\[3pt]
&\hat{P}_t\phi^1_{E,t}[\r]\big|_{\r=l}
	= \nsum C^{11}_{\En,E} w'_{\En} e^{-in\w t}
\end{align}
\end{subequations}
and 
\begin{subequations}\label{AppC_BCM}
\begin{align}
&\phi^2_{E,t}[\r]\big|_{\r=0} = 
	\nsum C^{22}_{\En,E} w_{\En} e^{-in\w t},\\[3pt]
&\phi^2_{E,t}[\r]\big|_{\r=l} = w_E
	+\nsum C^{12}_{\En,E} w_{\En} e^{-in\w t}\\[3pt]
&\hat{P}_t\phi^2_{E,t}[\r]\big|_{\r=0}	
	= -\nsum C^{22}_{\En,E} w'_{\En} e^{-in\w t},
	\label{AppC_BCMc}\\[3pt]
&\hat{P}_t\phi^2_{E,t}[\r]\big|_{\r=l}
	= -w'_E + \nsum C^{12}_{\En,E} w'_{\En} e^{-in\w t},
\end{align}
\end{subequations}
where $w'_E\equiv  w_E k_E$ and $\hat{P}_t
\equiv -i\partial_\rho -\w t/l$.
Here, we assume that the magnetic flux $\Phi$ increases linearly in
time, that is, $\Phi_t = \hbar\omega c t/e$, where $c$ denotes the 
speed of light and $e$ the carrier charge. 
The parameter $E$ corresponds to the Floquet energy on the loop.
The boundary conditions \eqref{AppC_BCP} and \eqref{AppC_BCM}
are determined by the beam splitter that connects the loop with the
leads
\footnote{To obtain boundary conditions \eqref{AppC_BCP} and 
\eqref{AppC_BCM}, we require that both $\phi^\a_{E,t}$ and 
$p\phi^\a_{E,t}$ are continuous at the beam splitter, where the second
condition ensures the continuity of probability currents 
\cite{Schiff1968}. 
We recall that the momentum operator has the position representation 
$p\hateq -i\hbar \partial_{r_\a}$ on the leads and 
$p\hateq -i\hbar \partial_\r -\hbar\w t/l=\hat{P}_t/\hbar$ on the 
loop.
The additional factor $(-1)$ on the right-hand sides of the 
Eqs.~\eqref{AppC_BCPc} and \eqref{AppC_BCMc} accounts for the fact 
that the parameterizations of the leads and the loop run in opposite
directions at the connections $\r=0\rightarrow r_1=0$ and 
$\r=0\rightarrow r_2=0$.}. 

A general solution of Eq.~\eqref{AppC_FSE} that is compatible with 
the boundary conditions \eqref{AppC_BCP} and \eqref{AppC_BCM} is given
by 
\begin{equation}\label{AppC_SolG}
\phi^\a_{E,t}[\rho]= \nsum \bigl(a^\a_n {{{\rm Ai}}}_{\En}[\rho]
	+b^\a_n {{{\rm Bi}}}_{\En}[\rho]\bigr)e^{i(\rho/l-n)\w t} ,
\end{equation}
where $a^\a_n$ and $b^\a_n$ are arbitrary complex coefficients.
The modified Airy functions are thereby defined as 
\begin{equation}
{{{\rm Xi}}}_{E}[\rho] \equiv
	{{{\rm Xi}}}\bigl[(\hbar\w/\mathcal{E})^\frac{1}{3}(\rho/l-E/\hbar\w)
	\bigr]
\end{equation}
in terms of the standard Airy functions ${{{\rm Xi}}}\equiv{{{\rm Ai}}}, 
{{{\rm Bi}}}$ \cite{Olver1974}. 
The parameter $\mathcal{E}\equiv \hbar^2/2M l^2$ sets the natural energy 
scale of the system. 
Inserting the solution \eqref{AppC_SolG} into the boundary conditions 
\eqref{AppC_BCP} and \eqref{AppC_BCM} and collecting Fourier
components yields two sets of linear equations, which can be written
compactly as 
\begin{subequations}\label{AppC_BCExpl1}
\begin{align}
&\mathbb{A}_{\En}\mathbf{a}_n^1 = \d_{n0}\mathbf{1} + C^{21}_{\En,E}
	\hat{\mathbf{1}},\\[3pt]
&\mathbb{A}_{E\ix{n-1}}\mathbf{a}_n^1
	=C^{11}_{E\ix{n-1},E}\mathbf{1}
\end{align}
\end{subequations}
and
\begin{subequations}\label{AppC_BCExpl2}
\begin{align}
& \mathbb{A}_{\En}\mathbf{a}^2_n
	=C^{22}_{\En,E}\hat{\mathbf{1}},\\[3pt]
& \mathbb{A}_{E\ix{n-1}}\mathbf{a}^2_n 
	= \delta_{n1}\hat{\mathbf{1}} + C^{12}_{E\ix{n-1},E}\mathbf{1}.
\end{align}
\end{subequations}
Here, we have used the relation ${{{\rm Xi}}}_{\En}[l]
={{{\rm Xi}}}_{E\ix{n-1}}[0]$.
Furthermore, we have introduced the vectors
$\mathbf{a}^\a_n\equiv (a^\a_n,b^\a_n)^{{{\rm t}}}$, $\mathbf{1}
\equiv (1,1)^{{{\rm t}}}$, $\hat{\mathbf{1}}\equiv (1,-1)^{{{\rm t}}}$
and the matrix 
\begin{equation}
\mathbb{A}_E\equiv \frac{1}{w_E}\left(\!\begin{array}{cc}
{{{\rm Ai}}}_E[0] & {{{\rm Bi}}}_E[0]\\[3pt]
-i{{{\rm Ai}}}'_E[0] & -i{{{\rm Bi}}}'_E[0]
\end{array}\!\right),
\end{equation}
where
\begin{align}
{{{\rm Xi}}}'_E[\r]& \equiv\partial_\r {{{\rm Xi}}}_E[\r]/k_E\\
&=(\hbar\w/\mathcal{E})^{\frac{1}{3}}(\mathcal{E}/E)^{\frac{1}{2}}
	{{{\rm Xi}}}'\bigl[
	(\hbar\w/\mathcal{E})^{\frac{1}{3}}(\r/l-E/\hbar\w)\bigr]
	\nonumber
\end{align} 
with ${{{\rm Xi}}}'\equiv {{{\rm Ai}}}',{{{\rm Bi}}}'$ denoting the
derivatives of the standard Airy functions \cite{Olver1974}. 

Solving the linear systems \eqref{AppC_BCExpl1} and 
\eqref{AppC_BCExpl2} yields
\begin{subequations}
\begin{align}
& C^{11}_{\En,E} = \d_{n(-1)}\frac{2}{\mathbf{1}^{{{{\rm t}}}}
	\mathbb{A}_E\mathbb{A}^{-1}_{E\ix{-1}}\mathbf{1}},\\[3pt]
& C^{21}_{\En,E} = \d_{n0} \frac{\hat{\mathbf{1}}^{{{{\rm t}}}}
	\mathbb{A}_E\mathbb{A}^{-1}_{E\ix{-1}}
	\mathbf{1}}{\mathbf{1}^{{{{\rm t}}}}
	\mathbb{A}_E\mathbb{A}^{-1}_{E\ix{-1}}\mathbf{1}},\\[3pt]
& C^{22}_{\En,E} = \d_{n1}\frac{2}{\hat{\mathbf{1}}^{{{{\rm t}}}}
	\mathbb{A}_E\mathbb{A}^{-1}_{E\ix{1}}\hat{\mathbf{1}}},\\[3pt]
& C^{12}_{\En,E} = \d_{n0}\frac{\mathbf{1}^{{{{\rm t}}}}
	\mathbb{A}_E\mathbb{A}^{-1}_{E\ix{1}}\hat{\mathbf{1}}}{
	\hat{\mathbf{1}}^{{{{\rm t}}}}\mathbb{A}_E
	\mathbb{A}^{-1}_{E\ix{1}}\hat{\mathbf{1}}}. 
\end{align}
\end{subequations}
The Floquet scattering amplitudes can now be determined by inserting 
these expressions into the ansatz \eqref{AppC_LAn} and comparing the 
result with the asymptotic boundary conditions for the Floquet 
scattering wave functions \cite{Brandner2020}, which are given by
\begin{subequations}
\begin{align}
&\phi^\a_{E,t}[r_1]\big|_{r_1\rightarrow\infty}= \d_{\a 1}w^-_E[r_1]
	+\nsum S^{1\a}_{\En,E}w^+_{\En}[r_1] e^{-in\w t},\\[3pt]
&\phi^\a_{E,t}[r_2]\big|_{r_2\rightarrow\infty}= \d_{\a 2}w^-_E[r_2]
	+\nsum S^{2\a}_{\En,E}w^+_{\En}[r_2] e^{-i n\w t}. 
\end{align}
\end{subequations}
Upon observing that the outgoing plane waves become decaying 
exponentials for negative energies, that is, $w^+_E[r]\propto 
\exp[-k_{|E|}r]$ for $E\leq 0$, we thus arrive at 
\begin{subequations}\label{AppC_FSA}
\begin{align}
& S^{11}_{\En,E} =0, && S^{21}_{\En,E} = C^{21}_{\En,E}
	&& (E\leq \hbar\w);\\
& S^{11}_{\En,E} = C^{11}_{\En,E}, 
	&& 	S^{21}_{\En,E} = C^{21}_{\En,E} 
	&& (E>\hbar\w);\\
& S^{22}_{\En,E} = C^{22}_{\En,E},
	&& S^{12}_{\En,E} = C^{12}_{\En,E}
	&& (E>0). 
\end{align}
\end{subequations}
This result leads to the compact expressions 
\begin{subequations}
\begin{align}
\abs{S^{11}_{\En,E}}^2 &= \d_{n(-1)}
	\Theta_{E\ix{-1}}R_{E\ix{-1}},\\[3pt]
\abs{S^{21}_{\En,E}}^2 &= \d_{n0}
	+\d_{n0}\Theta_{E\ix{-1}}(T_{E\ix{-1}}-1),\\[3pt]
\abs{S^{22}_{\En,E}}^2 &= \d_{n1} R_{E},\\[3pt]
\abs{S^{12}_{\En,E}}^2 &= \d_{n0} T_{E},
\end{align}
\end{subequations}
for the reflection and transmission probabilities, which were used in 
Sec.~\ref{Sec_FFEQGR}. 
Here, $\Theta$ denotes the Heaviside step function and the reflection
and transmission functions are given by 
\begin{subequations}
\begin{align}
& R_E\equiv \frac{4}{\abs{\mathbf{1}^{{{\rm t}}}
	\mathbb{A}_{E\ix{1}}\mathbb{A}^{-1}_E\mathbf{1}}^2}
	= \frac{4}{\abs{\hat{\mathbf{1}}^{{{\rm t}}}
	\mathbb{A}_E\mathbb{A}^{-1}_{E\ix{1}}\hat{\mathbf{1}}}^2},\\[3pt]
& T_E\equiv \frac{\abs{\hat{\mathbf{1}}^{{{\rm t}}}\mathbb{A}_{E\ix{1}}
	\mathbb{A}^{-1}_E\mathbf{1}}^2}{\abs{\mathbf{1}^{{{\rm t}}}
	\mathbb{A}_{E\ix{1}}\mathbb{A}^{-1}_E\mathbf{1}}^2}
	=\frac{\abs{\mathbf{1}^{{{\rm t}}}\mathbb{A}_E\mathbb{A}^{-1}_{E\ix{1}}
	\hat{\mathbf{1}}}^2}{\abs{\hat{\mathbf{1}}^{{{\rm t}}}
	\mathbb{A}_E\mathbb{A}^{-1}_{E\ix{1}}\hat{\mathbf{1}}}^2}
\label{AppC_TFun}
\end{align}
\end{subequations}
and obey the sum rule $T_E+R_E=1$. 
Using this relation, it is straightforward to verify that the Floquet 
scattering amplitudes \eqref{AppC_FSA} obey the unitarity conditions 
\eqref{TwSumRulesSM}. 
For illustration, the transmission function $T_E$ is plotted in 
Fig.~\ref{Fig_AQG} for different energies. 

\subsection{Thermodynamic Quantities}
Upon inserting the scattering amplitudes \eqref{AppC_FSA} into the 
general expressions \eqref{TwMeanCurrents}, it is now straightforward
to determine the mean values of the matter currents $J^\r_\a$ and 
the heat currents $J^q_\a = J^\ve_\a-\mu_\a J^\r_E$, which are given 
by
\begin{subequations}
\begin{align}
J^\r_1 &= - J^\r_2 = \frac{1}{h}\Eint \;\Bigl\{f^1_E-R_E f^1_{E\ix{1}}
	-T_E f^2_E\Bigr\},\\[3pt]
J^q_1  &=\frac{1}{h}\Eint\;\zeta^{q,1}_E\Bigl\{f^1_E-R_E f^1_{E\ix{1}}
	-T_E f^2_E\Bigr\},\\[3pt]
J^q_2  &= \frac{1}{h}\Eint \;\Bigl\{ \zeta^{q,2}_E(f^2_E-f^1_E) 
	+\zeta^{q,2}_{E\ix{1}}R_E (f^1_{E\ix{1}}-f^2_E)\Bigr\},
\end{align}
\end{subequations}
Analogously, the expressions \eqref{TwMENTh} yields the thermal 
fluctuations of the matter and heat currents, $D^{\r\r}_{\a\a}$ and
$D^{qq}_{\a\a}= \sum\nolimits_{uv}(\d_{u\ve}-\mu_\a\d_{u\r})(\d_{v\ve}
-\mu_\a\d_{v\r})D^{uv}_{\a\a}$,
\begin{subequations}
\begin{align}
D^{\r\r}_{11} &= D^{\r\r}_{22}
	=  \frac{1}{h}\Eint\; \Bigl\{f'^1_E + T_E f'^2_E 
		-R_E f'^1_{E\ix{1}}\Bigr\},\\[3pt]
D^{qq}_{11} &= \frac{1}{h}\Eint\; \zeta^{q,1}_E\Bigl\{
	\begin{aligned}[t]
	& \zeta^{q,1}_E( f'^1_E +  T_E f'^2_E - R_E f'^1_{E\ix{1}})\\
	& -2\hbar\w R_E f'^1_{E\ix{1}}\Bigr\},
	\end{aligned}\\[3pt]
D^{qq}_{22} &= \frac{1}{h}\Eint\; \Bigl\{
	\begin{aligned}[t]
	& (\hbar\omega)^2 R_E f'^2_E
	 -(\zeta^{q,2}_{E\ix{1}})^2 R_E f'^1_{E\ix{1}}\\
	& +(\zeta^{q,2}_E)^2(f'^1_E + T_E f'^2_E)\Bigr\}. 
	\end{aligned}
\end{align}
\end{subequations}
Finally, the shot-noise contributions to the current fluctuations, 
$R^{\r\r}_{\a\a}$ and $R^{qq}_{\a\a}= \sum\nolimits_{uv}(\d_{u\ve}-\mu_\a\d_{u\r})(\d_{v\ve}
-\mu_\a\d_{v\r})R^{uv}_{\a\a}$ are obtained from 
Eq.~\eqref{TwMENSh} as
\begin{subequations}
\begin{align}
R^{\r\r}_{11}&= R^{\r\r}_{22} = \frac{1}{h}\Eint \;
	 T_E R_E (f^1_{E\ix{1}}-f^2_E)^2,\\[3pt]
R^{qq}_{11} &= \frac{1}{h}\Eint \; 
	(\zeta^{q,1}_E)^2 T_E R_E (f^1_{E\ix{1}}-f^2_E)^2,\\[3pt]
R^{qq}_{22} &= \frac{1}{h}\Eint \; 
	(\zeta^{q,2}_{E\ix{1}})^2 T_E R_E  (f^1_{E\ix{1}}-f^2_E)^2.
\end{align}
\end{subequations}

In order to evaluate the coefficient \eqref{FFE_ER4} for $x=\r$ and 
$\a=1$, we further need the mean currents $\tilde{J}^x_\a$ as well as
the thermal and shot-noise fluctuations $\tilde{D}^{\r\r}_{11}$ and 
$\tilde{R}^{\r\r}_{11}$ for the time-reversed system. 
These quantities, which are found by replacing $S^{\a\b}_{\En,E}$ 
with $\mathsf{T}_{\mathbf{B}}\mathsf{T}_{\df}S^{\a\b}_{\En,E} =
S^{\b\a}_{E,\En}$ in Eqs.~\eqref{TwMeanCurrents} and \eqref{TwMEN}, 
are given by 
\begin{subequations}
\begin{align}
\tilde{J}^{\r}_1 &= -\tilde{J}^{\r}_2 = \frac{1}{h}\Eint\;\Bigl\{
	T_E f^1_E +R_E f^2_{E\ix{1}}- f^2_E\Bigr\},\\[3pt]
\tilde{J}^{q}_1 &= \frac{1}{h}\Eint\;\Bigl\{
	\zeta^{q,1}_E (f^1_E-f^2_E) 
	+ \zeta^{q,1}_{E\ix{1}} R_E (f^2_{E\ix{1}}-f^1_E)\Bigr\},\\
\tilde{J}^{q}_2 &= \frac{1}{h}\Eint \; \zeta^{q,2}_E\Bigl\{
	f^2_E - R_Ef^2_{E\ix{1}} -T_E f^1_E\Bigr\}
\end{align}
\end{subequations}
and 
\begin{subequations}
\begin{align}
\tilde{D}^{\r\r}_{11} &= \frac{1}{h}\Eint\;\Bigl\{
	f'^2_E+ T_Ef'^1_E -R_E f'^2_{E\ix{1}}\Bigr\},\\[3pt]
\tilde{R}^{\r\r}_{11} &= \frac{1}{h}\Eint \;
	T_E R_E (f^1_E -f^2_{E\ix{1}})^2.
\end{align}
\end{subequations}

\end{document}